\DeclareMathOperator{\Tr}{Tr}
\newtheorem{proposition}{Proposition}
\newtheorem{lemma}{Lemma}
\newtheorem{remark}{Remark}
\def\vtheta{{\bm{\theta}}}
\def\vk{{\bm{k}}}
\def\vm{{\bm{m}}}
\def\vr{{\bm{r}}}
\def\vx{{\bm{x}}}
\def\vy{{\bm{y}}}
\def\vz{{\bm{z}}}
\DeclareMathAlphabet{\mathsfit}{\encodingdefault}{\sfdefault}{m}{sl}
\SetMathAlphabet{\mathsfit}{bold}{\encodingdefault}{\sfdefault}{bx}{n}
\def\sN{{\mathbb{N}}}
\def\sR{{\mathbb{R}}}
\newcommand{\pr}[1]{\mathbb{P}\left(#1\right)}
\newcommand{\set}[1]{\left(#1 \right)}
\journal{Computational Statistics \& Data Analysis}
\begin{document}

\begin{frontmatter}

\title{Robust Prediction Interval estimation for Gaussian Processes by Cross-Validation method}

\author[cmap,total]{Naoufal Acharki \corref{cor1}}
\cortext[cor1]{Corresponding author: Naoufal Acharki \\ TotalEnergies OneTech, 91120 Palaiseau, France \\ Phone: +33 (0) 1 47 44 83 65 \\ Email: naoufal.acharki@polytechnique.edu}


\author[total]{Antoine Bertoncello}
\author[cmap]{Josselin Garnier}

\address[cmap]{TotalEnergies OneTech, 91120 Palaiseau, France}
\address[total]{Centre de Math\'ematiques Appliqu\'ees, Ecole Polytechnique, 91128 Palaiseau, France}

\begin{abstract}

{Probabilistic regression models typically use the Maximum Likelihood Estimation or Cross-Validation to fit parameters. These methods can give an advantage to the solutions that fit observations on average, but they do not pay attention to the coverage and the width of Prediction Intervals. A robust two-step approach is used to address the problem of adjusting and calibrating Prediction Intervals for Gaussian Processes Regression. First, the covariance hyperparameters are determined by a standard Cross-Validation or Maximum Likelihood Estimation method. A Leave-One-Out Coverage Probability is introduced as a metric to adjust the covariance hyperparameters and assess the optimal type II Coverage Probability to a nominal level. Then a relaxation method is applied to choose the hyperparameters that minimize the Wasserstein distance between the Gaussian distribution with the initial hyperparameters (obtained by Cross-Validation or Maximum Likelihood Estimation) and the proposed Gaussian distribution with the hyperparameters that achieve the desired Coverage Probability. The method gives Prediction Intervals with appropriate coverage probabilities and small widths.}

\end{abstract}

\begin{keyword}
Cross-Validation \sep Coverage Probability \sep Gaussian Processes \sep Prediction Intervals
\end{keyword}

\end{frontmatter}

\nolinenumbers

\section{Introduction}
\label{sec1:Intro}

Many approaches of supervised learning focus on point prediction by producing a single value for a new point and do not provide information about how far those predictions may be from true response values. This may be inadmissible, especially for systems that require risk management. Indeed, an interval is crucial and offers valuable information that helps for better management than just predicting a single value.

The Prediction Intervals are well-known tools to provide more information by quantifying and representing the level of uncertainty associated with predictions. One existing and popular approach for prediction models without predictive distribution (e.g. Random Forest or Gradient Boosting models) is the bootstrap, starting from the Traditional bootstrap \citep{efron1994Bootstrap,Heskes97practicalconfidence} to Improved Bootstrap \citep{LI201897}. It is considered as one of the most used methods \citep{efron1994Bootstrap} for estimating empirical variances and for constructing Predictions Intervals, it is claimed to achieve good performance under some asymptotic framework.

A set of empirical methods have been proposed for these models to build Prediction Intervals like the Infinitesimal Jackknife \citep{Wager_confidenceintervals}, Jackknife-after-Bootstrap methods \citep{EfronJackknife_After_Bootstrap}, Quantile Random Forest \citep{QRF2006} Out-Of-Bag intervals \citep{Zhang_OOB_2018} and Conformal prediction \citep{Lei_SCIntervals2018, Candes_CI}. In the Deep Learning field, many recent methods have been also developed to quantify the uncertainty in Neural networks: The Delta method \citep{DeltaNNet97}, Mean-Variance Estimation \citep{Nix_MVENNET}, the Bayesian approach \citep{MacKay_BNnet1992,GhahramaniPMLR}, Lower Upper Bound Estimation \citep{LUBE_2011} and Quality-Driven ensembled approach \citep{Pearce18a}. Most methods estimate the Coverage Probability (CP) \citep{PICP_2008} and the mean Prediction Interval width (MPIW) \citep{KhosraviMPIW} by using the combinational Coverage Width-based Criterion (CWC) as a metric to identify model's parameters or define a loss function with a Lagrangian controlling the importance of the width and coverage. \cite{Pang2018} suggest the Receiver Operating Characteristic curve of Prediction Interval (ROC-PI), a graphic indicator that serves as a trade-off between the intervals width and CP for identifying the best parameters.

Unlike Ensemble methods or Neural Networks, there exist several prediction models with a probabilistic framework like the Gaussian Processes (GP) model \citep{Rasmussen2006} which are able to compute an efficient predictor with associated uncertainty. These models are more suitable for uncertainty quantification. They provide a predictive distribution with both point prediction and interval estimation and do not require any empirical approach such as the bootstrap. {In most cases, the predictive distributions of GP models are obtained either with a \textit{plug-in} method that takes the Maximum Likelihood Estimator (MLE) \citep{MardiaMarshall84,Stein_1999} of the model's hyperparameters or by using a Full-Bayesian approach that takes into account the posterior distribution of the hyperparameters and propagates it into the predictive distribution. However, both methods suffer from some limitations. Indeed, the MLE approach works well only when the model is well-specified and may fail in case of model misspecification \citep{bachoc2013}. At the same time, the Full-Bayesian approach is very complex to implement, typically with a Markov chain Monte Carlo (MCMC) algorithm and is sensitive to the choice of the \textit{prior} distribution of the hyperparameters \citep{FilipponeZG13,mure2018}. On the other hand, the calibration of Prediction Intervals is little studied in the literature. \cite{LawlessPI_2005} proposed a frequentist approach to predictive distribution to build and calibrate the Prediction Intervals. However, to the best of our knowledge, this approach has not yet been extended to models with a predictive distribution and we do not have any guarantees that it can work in the case of a misspecified model. Furthermore, improving the modelling of the covariance function seems to be efficient in overcoming the issue of a misspecified model. However, it may lead to complex covariance models and, consequently, severe difficulties in estimating the covariance function's hyperparameters, especially in high dimensions. Moreover, sometimes, it is challenging to find proper modelling without further knowledge of the system and the sources of uncertainty.} In this work, we propose a method based on Cross-Validation (CV) on the GP model to address the problem of model misspecification and calibrate Prediction Intervals by adjusting the upper and lower bounds to satisfy the desired level of CP. The method gives Prediction Intervals with appropriate coverage probabilities and small widths.

The paper is organized as follows. Section \ref{sec2:ProbForm} formulates the problem of Prediction Intervals estimation. Section \ref{sec3:GPRmodelling} introduces the Gaussian Process regression model and its training methods. In Section \ref{sec4:PIforGP}, we present a method for estimating robust Prediction Intervals supported by theoretical results. We show in Section \ref{sec5:Results} the application of this method to academic examples and to an industrial example. Finally, we present our conclusions in Section \ref{Conclusion}.

\section{Problem formulation}
\label{sec2:ProbForm}

We consider $n$ observations of an empirical model or computer code $f$. Each observation of the output corresponds to a $d$-dimensional input vector $\vx = \set{x_1,\ldots,x_d}^{\top} \in \mathcal{D} \subseteq \mathbb{R}^d$. The $n$ points corresponding to the model/code runs are called an experimental design $\mathbf{X} = \left( \vx^{(1)},\ldots,\vx^{(n)} \right)$ where $\vx^{(i)}= (x^{(i)}_1,\ldots,x^{(i)}_d)^{\top} \in \mathcal{D}$. The outputs are denoted by $\vy=\left(y^{(1)},\ldots,y^{(n)} \right) \in \sR^n$ with $y^{(i)}=f(\vx^{(i)})$. We seek to estimate the unobserved function $ \vx \in \mathcal{D} \mapsto f(\vx)$ from the data $\vy$ and make accurate predictions with the associated uncertainty.

Formally, let assume that $f$ is a realization of random process $Y$  and let $Y(\vx)$ be the value of model output at a point $\vx \in \mathcal{D}$, let $\alpha \in \left[0, 1\right]$ describes the nominal level of confidence.We wish to estimate the interval $\mathcal{PI}_{1-\alpha}$ with respect to the type II CP (the conditional Coverage Probability given the training set) such that the probability
\begin{equation}
\label{Eq:UncertaintyProb}
    \pr{Y(\vx) \in \mathcal{PI}_{1-\alpha}(\vx)\mid \mathbf{X}, \vy }
\end{equation}
is as close as possible to $1 - \alpha$. In most cases, $\mathcal{PI}_{1-\alpha}$ is a two-sided interval delimited by two bounds at $\vx \in \mathcal{D}$ 
\begin{equation}
\label{Eq:UncertaintyIC}
    \mathcal{PI}_{1-\alpha}\left( \vx \right) := \left[ y_{\alpha/2}(\vx) , \
    y_{1-\alpha/2}(\vx)\right] ,
\end{equation}
where $y_{\alpha/2}(\vx)= \tilde y(\vx) + z_{\alpha/2} \times \tilde \sigma(\vx)$ is the lower bound, $y_{1-\alpha/2}(\vx) = \tilde y(\vx) + z_{1-\alpha/2} \times \tilde \sigma(\vx)$ is the upper bound, $z_{\alpha/2}$ (resp. $z_{1-\alpha/2}$) is the $\alpha/2$ (resp. $1-\alpha/2$) quantile of the normalized predictive distribution (e.g. $t$-distribution for regression prediction), $\tilde y(\vx) = \mathbb{E}( Y(\vx) \mid \mathbf{X}, \vy )$ and $\tilde \sigma^2(\vx) = \mathrm{Var}(Y(\vx) \mid \mathbf{X}, \vy )$ are the predictive mean and variance.

In the framework of kriging, the prior distribution of the process $Y$ is Gaussian characterized by a mean and covariance. The Cumulative Distribution Function (CDF) of the predictive variable $Y(\vx)$ given $ \mathbf{X}$ and $\vy$ is well-defined and continuous with the Gaussian distribution. The quantile function is defined then as the inverse of the CDF and the quantiles $z_{\alpha/2}$ and $z_{1-\alpha/2}$ are fully characterized. Thus, estimating the interval $\mathcal{PI}_{1-\alpha}$ in equation (\ref{Eq:UncertaintyIC}) is equivalent to estimate the predictive mean $\tilde y(\vx) $ and variance $\tilde \sigma^2(\vx)$.

Therefore, the objective is to build a surrogate model to estimate correctly the upper and lower bounds of Prediction Intervals $\mathcal{PI}_{1-\alpha}$. This goes through the CV method with respect to the CP. In the following sections, $\| . \|$ refers to the Euclidean norm $\| . \|_2$ if applied to a vector and to the Frobenius norm, defined by $\| \mathbf{M} \|_F =  \left( \Tr \left(\mathbf{M} \mathbf{M}^{\top}\right) \right)^{1/2}$, if applied to a matrix.

\section{Modelling with Gaussian Processes}
\label{sec3:GPRmodelling}
 
We use the GP model to learn the unobserved function $f$. It is a Bayesian non-parametric regression (see \citet{Tipping2004} for Bayesian inference) which employs GP \textit{prior} over the regression functions. It will be converted into a \textit{posterior} over functions once some data has been observed. In the kriging framework \citep{Rasmussen2006,Stein_1999}, $Y$ is assumed \textit{a priori} to be a GP with mean $\mu(\vx)$ and covariance function $\vk(\vx, \vx') +\sigma_{\epsilon}^2 {\mathbf{1}}\{\vx=\vx'\}$ for all $\vx, \vx' \in \mathcal{D}$. $\sigma_{\epsilon}^2 \geq 0$ is the variance of measurement error, also called the nugget effect.

\subsection{The mean and covariance functions}
\label{sec:covfunctions}

The assumption made on the existing knowledge of the model $Y$ and the mean function $\mu$ defines three sub-cases of kriging
\begin{itemize}
	\item The Simple Kriging : $\mu$ is assumed to be known, usually null $\mu=0$.
	\item The Ordinary Kriging : $\mu$ is assumed to be constant but unknown.
	\item The Universal Kriging : $\mu$ is assumed to be of the form $\sum_{j=1}^{p} \beta_j f_{j-1}(x)$, where $f_j$ are predefined (e.g. polynomial functions $f_0(\vx)=1, f_j(\vx)=x_{j}, j=1,\ldots,p-1$) and unknown scalar coefficients $\beta_j$.
\end{itemize}

The covariance function $\vk$ is a map that is symmetric positive semi-definite, usually stationary $\vk(\vx, \vx') = \vr(\vx − \vx')$. The most commonly used kernel in $\mathbb{R}$ is the Mat\'ern kernel class {given by}
\begin{equation}
\label{eq:maternclass}
    \vr^{\nu}_{\sigma^2, \theta}(x − y)= \sigma^2 \frac{2^{1-\nu}}{\Gamma(\nu)}\left(\sqrt{2 \nu} \frac{|x-y|}{\theta }\right)^{\nu} K_{\nu}\left(\sqrt{2 \nu} \frac{|x-y|}{\theta }\right) , 
\end{equation}
{for $x,y \in \sR$}. Here $\sigma^2 > 0$ is the amplitude, $\theta > 0$ is the length-scale, $\Gamma $ is the complete Gamma function and $K_{\nu}$ is the modified Bessel function of the second kind. $(\sigma^2, \theta)$ are called hyperparameters. Some particular cases of Mat\'ern kernel are when $\nu=\frac{1}{2}$ (Exponential), $\nu=\frac{3}{2}$ (Mat\'ern 3/2), $\nu=\frac{5}{2}$ (Mat\'ern 5/2) and $\nu \rightarrow \infty $ (Gaussian or Squared-Exponential).

The choice of kernels is important in the kriging scheme and requires prior knowledge of the smoothness of the function $f$. For example, the choice of the Gaussian kernel assumes that the function is very smooth of class $\mathcal{C}^{\infty}$ (infinitely differentiable) which is often too strict as a condition. A common alternative is the functions Mat\'ern 5/2 or Mat\'ern 3/2 kernel

It is possible to build high-dimensional covariance models in $\mathbb{R}^d$ based on classical kernels in $\mathbb{R}$. In particular, the Matérn anisotropic geometric model (radial model), which we consider in the following in this paper, defined as

\begin{equation}
\label{radialModel}
    \vk^{\rm{radial}}_{\sigma^2, \vtheta}(\vx,\vx')=\vr^{\nu}_{\sigma^2, \vtheta} \left( \sqrt{ \sum_{j=1}^{d} \frac{|x_j-x_j'|^2}{\theta_j^2} } \right) ,
\end{equation}
where $\vr$ is a Mat\'ern kernel $\mathbb{R}$ as defined in (\ref{eq:maternclass}) and $\vtheta=(\theta_1,\ldots,\theta_d)$ the length-scale vector. The described method can be applied to other forms of covariance models like the tensorized product model with $d$-dimensional kernels ({as the product of kernels is also a kernel}) or the Power-Exponential model. In the following sections, instead of writing $\vk_{\sigma^2, \vtheta}$, we denote simply $\vk$ when there is no possible confusion.

\subsection{Gaussian Process Regression Model}

The \textit{prior} distribution of $Y$ on the learning experimental design $\mathbf{X}$ is multivariate  Gaussian
\begin{equation}
\label{eq:jointdist}
    \vy\mid \boldsymbol{\beta}, \sigma^2, \vtheta, \sigma^2_{\epsilon} \sim \mathcal{N}(\mathbf{F}  \boldsymbol{\beta}, \mathbf{K}) ,
\end{equation}

where
\begin{itemize}
    \item $\textbf{F}=(F_{i j}) \in \mathbb{R}^{n\times p}$ is the regression matrix such that $F_{i j} = f_j(\vx^{(i)})$.
    \item $\boldsymbol{\beta} = \{\beta_1,\ldots,\beta_p\} \in \sR^p$ are the regression coefficients.
    \item $\mathbf{K} = \left(\vk(\vx^{(i)},\vx^{(j)})\right)_{1\leq i,j\leq n} + \sigma^2_{\epsilon} \ \mathbf{I}_n \in \mathbb{R}^{n\times n}$ is the covariance matrix of the learning design $\mathbf{X}$.
\end{itemize}

{\it Hypothesis $\mathcal{H}_1$~: In the case of ordinary or universal kriging, we assume that $n \geq p$, $\mathbf{F}$ is a full rank matrix, and $\bf{e} \in $ {\rm Im }$\mathbf{F}$ where ${\bf e}=\left(1,\ldots,1\right)^{\top}$.}

\begin{remark}
    In Ordinary Kriging, the hypothesis $\mathcal{H}_1$ is always satisfied. In the Universal Kriging, the hypothesis $\bf{e} \in $ {\rm Im }$\mathbf{F}$  is satisfied as soon as the constant function $f_0(\vx)= C$ is included in the chosen family of functions $f_i$.
\end{remark}

\subsection{Prediction}

The Gaussian conditioning theorem is useful to deduce the \textit{posterior} distribution. By considering a new point $\vx_{\rm{new}}$, it can be shown that the predictive distribution of $Y(\vx_{\rm{new}})$ conditioned on the learning sample $\mathbf{X},\vy$ is also Gaussian

\begin{equation}
\label{eq:posteriorlaw}
    Y(\vx_{\rm{new}})\mid \mathbf{X},\vy, \sigma^2, \vtheta, \sigma^2_{\epsilon} \sim \mathcal{N}\left(\tilde y(\vx_{\rm{new}}), \tilde \sigma^2(\vx_{\rm{new}})\right) ,
\end{equation}

where, in the case of Ordinary or Universal Kriging {and by denoting $f_{\rm{trend}}(\vx_{\rm{new}}) = \left( f_j(\vx_{\rm{new}}) \right)^{p-1}_{j=0}$}, $y(\vx_{\rm{new}})$ and $\tilde \sigma^2(\vx_{\rm{new}})$ are given by the Best Linear Unbiased Predictor (BLUP),
\begin{equation}
    \label{eq:GPMean}
    \tilde y_{\sigma^2, \vtheta, \sigma_{\epsilon}^2 } (\vx_{\rm{new}}) = f_{\rm{trend}}(\vx_{\rm{new}})^{\top} \widehat{\boldsymbol{\beta}}  + \vk(\vx_{\rm{new}},\mathbf{X})^{\top}\mathbf{K}^{-1}(\vy-\mathbf{F} \widehat{\boldsymbol{\beta}} ) ,
\end{equation}
\begin{equation}
    \label{eq:GPVariance}
    \begin{aligned}
        \tilde \sigma_{\sigma^2, \vtheta, \sigma_{\epsilon}^2}^2&(\vx_{\rm{new}}) = \vk(\vx_{\rm{new}},\vx_{\rm{new}}) + \sigma_{\epsilon}^2 - \vk(\vx_{\rm{new}},\mathbf{X})^{\top}\mathbf{K}^{-1} \ \vk(\vx_{\rm{new}},\mathbf{X}) + \left( f_{\rm{trend}}(\vx_{\rm{new}}) - \right. \\ &\left. \mathbf{F} \mathbf{K}^{-1} \vk(\vx_{\rm{new}},\mathbf{X})  \right)^{\top} \left( \mathbf{F}^{\top} \mathbf{K}^{-1} \mathbf{F} \right)^{-1} \left( f_{\rm{trend}}(\vx_{\rm{new}}) -\mathbf{F} \ \mathbf{K}^{-1} \vk(\vx_{\rm{new}},\mathbf{X})  \right) ,
    \end{aligned}
\end{equation}

and,
\begin{equation}
\label{eq:beta_opt}
    \begin{aligned} &\widehat{\boldsymbol{\beta}} = \left(\mathbf{F}^{\top} \mathbf{K}^{-1} \mathbf{F}\right)^{-1} \mathbf{F}^{\top} \mathbf{K}^{-1} \vy. 
    \end{aligned}
\end{equation}

{We refer to \citet{Santner_2003} in Chapter 4 for a detailed proof of Equations (\ref{eq:posteriorlaw}-\ref{eq:beta_opt}). In particular, we note that the additional term of the predictive variance in (\ref{eq:GPVariance}) is due to the propagation of the non-informative improper form of prior distribution on the estimation of $\boldsymbol{\beta}$. }

{In the following and when there is no possible confusion,} $\tilde y_{\sigma^2, \vtheta, \sigma_{\epsilon}^2}$ (resp. $\tilde \sigma^2_{\sigma^2, \vtheta, \sigma_{\epsilon}^2}$) will be also denoted by $\tilde y$ (resp. $\tilde \sigma^2$) without specifying its dependence on hyperparameters or the nugget effect.

The most outstanding advantage of GP model compared to other models relies on the previous equations (\ref{eq:GPMean}) and (\ref{eq:GPVariance}). The predictive distribution can be used for sensitivity analysis \citep{OHagan2004ProbabilisticSA} and uncertainty quantification instead of costly methods based on Monte Carlo algorithms. Other possible considerations and extensions of GP modelling are described in \citep{Currin1991BayesianPO,Rasmussen2006}.

Given a GP regression model and a point $\vx_{\rm{new}} \in \mathcal{D}$, the \textit{posterior} {distribution} of prediction in  (\ref{eq:posteriorlaw}) can be standardized into

\begin{equation}
\label{eq:standardlaw}
    \tilde Z(\vx_{\rm{new}}) = \frac{ Y(\vx_{\rm{new}}) - \tilde y(\vx_{\rm{new}})}{\tilde \sigma(\vx_{\rm{new}})} \ \big| \ \mathbf{X},\vy, \sigma^2, \vtheta, \sigma^2_{\epsilon} \sim \mathcal{N}\left(0, 1\right) .
\end{equation}

By considering the standardized variable $\tilde Z(\vx_{\rm{new}})$, the $\alpha$-quantiles $z_{\alpha}$ are those of the standard normal {distribution} : $q_{1-\alpha/2}=\mathbf{\Phi}^{-1}(1-\alpha/2)$ and $q_{\alpha/2}=\mathbf{\Phi}^{-1}(\alpha/2)=-q_{1-\alpha/2}$ where $\mathbf{\Phi}$ is the CDF of the standard normal distribution, such that the Prediction Intervals $\mathcal{PI}_{1-\alpha}$ in (\ref{Eq:UncertaintyIC}) can be written as

\begin{equation}
\label{eq:UncertaintyIC_GP}
    \mathcal{PI}_{1-\alpha}\left(\vx_{\rm{new}} \right) = \left[ \tilde y(\vx_{\rm{new}}) - q_{1-\alpha/2} \times \tilde \sigma(\vx_{\rm{new}}) ; \
    \tilde y(\vx_{\rm{new}}) + q_{1-\alpha/2} \times \tilde \sigma(\vx_{\rm{new}})   \right] ,
\end{equation}

which gives a natural definition for $y_{\alpha/2}$ and $y_{1-\alpha/2}$ 
\begin{equation}
    y_{\alpha/2}\left(\vx \right) =  \tilde y(\vx) - q_{1-\alpha/2} \times \tilde \sigma(\vx)  \ ; \  y_{1-\alpha/2}\left(\vx \right) =  \tilde y(\vx) + q_{1-\alpha/2} \times \tilde \sigma(\vx) .
\end{equation}

\subsection{Training model with Maximum Likelihood method}
\label{par:MLE}

Constructing a GP model and computing the kriging mean and variance as shown in (\ref{eq:GPMean}) and (\ref{eq:GPVariance}) implies estimating the nugget effect $\sigma^2_{\epsilon}$ and the covariance parameters $(\sigma^2,\vtheta)$. Here, we assume that $\sigma^2_{\epsilon}$ is known or has been estimated by the method proposed in \cite{Iooss2017} for instance.

The Maximum Likelihood Estimator (MLE) $\hat{\sigma}_{ML}^2$ and $\hat{\vtheta}_{ML}$ of  $\sigma^2$ and $\vtheta$ is given by \citep{Santner_2003} 
\begin{equation}
\label{eq:OptimML}
    (\hat{\sigma}_{ML}^2, \hat{\vtheta}_{ML}) \in \operatorname{argmin}_{\sigma^2, \vtheta} \ \vy^{\top} \left( \mathbf{K}^{-1} - \mathbf{K}^{-1} \mathbf{F} \left( \mathbf{F}^{\top} \mathbf{K}^{-1} \mathbf{F} \right)^{-1} \mathbf{F}^{\top} \mathbf{K}^{-1} \right) \vy + \log \left( \det \mathbf{K} \right).
\end{equation}

The MLE method is optimal when the covariance function is well-specified \citep{Bachoc2013CrossVA} (i.e. when the data $\vy$ comes from a function $f$ that is a realization of a GP with covariance function that belongs to the family of covariance functions in section \ref{sec:covfunctions}).

However, there is no guarantee that the MLE method would perform optimally as this method is poorly robust with respect to model misspecifications. Besides, training and assessing the quality of a predictor should not be done on the same data (\citet{Tibshirani2009} in chapter 7). In particular, the MLE method does not show how well the model will do when it is asked to make new predictions for data it has not already seen. The CV method represents an alternative to estimate the covariance hyperparameters $(\sigma^2, \vtheta)$ for prediction purposes  \citep{Zhang2010,Bachoc2013CrossVA} and has the advantage of being more efficient and robust when the covariance function is misspecified \citep{Bachoc2013CrossVA}.

\subsection{Training model with Cross-Validation method for point-wise prediction}
\label{par:Cross-Valid}

We consider the same learning set of $n$ observations $\mathbf{D}_{\rm{learn}} =(\mathbf{X}, \vy)= \{ (\vx^{(i)} , y^{(i)}), \ i \in \{1,\ldots, n\} \}$ and we assume that the value of $\sigma^2_{\epsilon} \in [0,+\infty)$ is known. The Leave-One-Out method (i.e. $n$-Cross-Validation) consists in predicting $y^{(i)}$ by building a GP model, denoted $\mathcal{GP}_{-i}$ and trained on $\mathbf{D}_{-i}= \{ (\vx^{(j)} , y^{(j)})\}_{ j \in \{1,\ldots, n\} \setminus \{i\}}$. The obtained prediction mean and variance are functions of parameters $(\sigma^2, \vtheta)$ as shown in (\ref{eq:GPMean}) and (\ref{eq:GPVariance}) and are used to assess the predictive capability of the global GP model.

In the case of the Leave-One-Out method, the Mean Squared prediction Error (MSE) is used to assess the quality of the point-wise prediction (See \cite{WALLACH1989299} for more details about this metric) of the GP model, it can be expressed as 
\begin{equation}
\label{eq:LOO_MSE}
	 \mathcal{L O O}_{MSE}(\sigma^2, \vtheta) := \frac{1}{n} \sum_{i=1}^{n}\left(y^{(i)}-\tilde y_{i} \right)^{2},
\end{equation}
where $\tilde{y}_{i}$ and $\tilde \sigma_i^2$ are the Leave-One-Out predictive mean and variance of $f(\vx^{(i)})$ by a GP model trained on $\mathbf{D}_{-i}$ with the hyperparameters $(\sigma^2, \vtheta)$. 

{\it Hypothesis $\mathcal{H}_2$~: Let $({\bf e}_i)_{i=1}^n$ be the canonical basis of $\mathbb{R}^n$. We assume that ${\bf e}_i \not\in$ {\rm Im }$\mathbf{F}$ for all $i \in \{1,\ldots,n\}$. }

Let $\overline{\mathbf{K}}$ be the matrix defined by
\begin{equation}
\label{eq:defKbar}
    \overline{\mathbf{K}} = \mathbf{K}^{-1} - \mathbf{K}^{-1} \mathbf{F} \left( \mathbf{F}^{\top} \mathbf{K}^{-1} \mathbf{F} \right)^{-1} \mathbf{F}^{\top} \mathbf{K}^{-1} .
\end{equation}
For all $i \in \{1,\ldots,n\}$, we have $\left( \overline{\mathbf{K}} \right)_{i,i}>0$ by {Lemma \ref{lemma3} (see \ref{appendix:A})}, and, in the case of Ordinary or Universal Kriging, the Virtual Cross-Validation formulas \citep{Dubrule1983} of the predictive mean $\tilde y_i$ and variance $\tilde \sigma_i^2$  are given by
\begin{equation}
\label{eq:LOO_mean}
    y^{(i)}-\tilde y_{i} = \frac{ \left( \overline{\mathbf{K}}\vy \right)_i}{ \left( \overline{\mathbf{K}} \right)_{i,i} },
\end{equation}
and
\begin{equation}
\label{eq:LOO_var}
    \tilde \sigma^2_i  = \frac{1}{ \left( \overline{\mathbf{K}} \right)_{i,i}}.
\end{equation}

With the presence of the nugget effect, the GP regressor does not interpolate the training data $\vy$ but approximates them as best as possible. The Leave-One-Out method looks for the best approximation by minimizing the $\mathcal{L O O}_{MSE}$ criterion. The criterion (\ref{eq:LOO_MSE}) can be written with explicit quadratic forms in $\vy$
\begin{equation}
\label{eq:MSEsolution}
    (\hat{\sigma}_{MSE}^{2}, \hat{\vtheta}_{MSE}) \in \operatorname{argmin}_{\sigma^2, \vtheta} \ \vy^{\top} \overline{\mathbf{K}} \operatorname{Diag}\left(\overline{\mathbf{K}} \right)^{-2} \overline{\mathbf{K}} \ \vy. 
\end{equation}

Note that in the absence of the nugget effect $\sigma^2_{\epsilon}=0$, $\overline{\mathbf{K}}$ is of the form $\sigma^{-2}  \overline{\mathbf{R}}_{\vtheta}$ where $\overline{\mathbf{R}}_{\vtheta}$ does not depend on $\sigma^2$. The predictive variance $\hat{\sigma}_{MSE}^{2}$ can then be computed {by the following explicit quadratic form \citep{Bachoc2013CrossVA}}
\begin{equation}
\label{eq:sigmaCVMSE}
    \hat{\sigma}_{MSE}^{2} = {\frac{1}{n}} \ \vy^{\top} \overline{\mathbf{R}}_{\hat{\vtheta}_{MSE}} \operatorname{Diag}\left(\overline{\mathbf{R}}_{\hat{\vtheta}_{ MSE}}\right)^{-1} \overline{\mathbf{R}}_{\hat{\vtheta}_{MSE}} \boldsymbol{y},
\end{equation}
{and the optimal length-scale vector $\hat{\vtheta}_{MSE}$ is obtain by solving}
\begin{equation}
\label{eq:thetaMSE}
    \hat{\vtheta}_{MSE}  \in \operatorname{argmin}_{\vtheta} \  \vy^{\top} \overline{\mathbf{R}}_{\vtheta}  \operatorname{Diag}\left(\overline{\mathbf{R}}_{\vtheta}\right)^{-2} \overline{\mathbf{R}}_{\vtheta} \vy.
\end{equation}

{\subsection{Full-Bayesian approach}}
\label{subsec:Full_Bayes}

{In this subsection, we consider the full-Bayesian treatment of GP models \citep{WilliamsBayesianGP98}. Indeed,  the full-Bayesian approach  integrates the uncertainty about the unknown hyperparameters and assumes a \textit{prior} on the hyperparameters $(\sigma^2, \vtheta) \sim p(\sigma^2, \vtheta)$. Consequently, the probability density function (pdf) of the \textit{posterior} predictive distribution of $Y(\vx_{\rm{new}})$ at a new point $\vx_{\rm{new}}$ can be expressed as an integral over the hyperparameters (we omit the conditioning over inputs $\mathbf{X}$ and $\vx_{\rm{new}}$):}
\begin{equation}
\label{eq:defBayes}
    {p (y_{\rm{new}} \mid \vy )  = \iint p (y_{\rm{new}} \mid \vy, \sigma^2, \vtheta )  p( \sigma^2, \vtheta \mid \vy ) \, \mathrm{d}\sigma^2 \mathrm{d} \vtheta},
\end{equation}
{where $p ( y_{\rm{new}} \mid \sigma^2, \vtheta \big)$ is the pdf of $Y(\vx_{\rm{new}})$ given $\vy, \sigma^2$ and $\vtheta$, and $p ( \sigma^2, \vtheta \mid \vy ) \propto p( \vy \mid \sigma^2, \vtheta) p( \sigma^2, \vtheta ) $ is the hyperparameters' posterior distribution.}

{The implementation of the full-Bayesian approach requires the evaluation of the previous integral and the posterior $p ( \sigma^2, \vtheta \mid \vy )$, which cannot be computed directly. It is common to use Markov chain Monte Carlo (MCMC) methods for sampling and inference from the posterior distribution of the hyperparameters to overcome this issue, using, in particular, the Metropolis-Hastings (MH) algorithm  \citep{Robert_2004} or Hamiltonian Monte Carlo (HMC) \citep{Neal93,Neal_1996}.}

{Therefore, the predictive distribution is obtained by Monte Carlo}
\begin{equation}
    { p (y_{\rm{new}} \mid \vy )  \simeq \frac{1}{N} \sum_{i=1}^N p (y_{\rm{new}} \mid \vy, \sigma_i^2, \vtheta_i ), }
\end{equation}
{where $N$ denotes the MCMC sample size and $(\sigma_i^2, \vtheta_i)$ is the $i$-th sample drawn from the posterior distribution $p( \sigma^2, \vtheta \mid \vy)$.}

{Finally, one can draw a sample  $\big(Y_i(\vx_{\rm{new}})\big)_{i=1}^N$ of $Y(\vx_{\rm{new}})$ following the posterior distribution  $p(y_{\rm{new}}|\sigma_i^2,\vtheta_i)$ as in (\ref{eq:posteriorlaw}) for each $i=1,\ldots,N$ and build the Prediction Intervals $\mathcal{PI}_{1-\alpha}$ by taking the empirical quantiles of order $\alpha/2$ and $1-\alpha/2$ of the sample $\big(Y_i(\vx_{\rm{new}})\big)_{i=1}^N$.}

{Note that the \textit{plug-in} approaches (e.g. the MLE method in \ref{par:MLE}) consider  (\ref{eq:defBayes}) and replace $p( \sigma^2, \vtheta \mid \vy )$ by a Dirac distribution centered on a value such as $(\hat{\sigma}^2_{ML}, \hat{\vtheta}_{ML})$ that maximizes the likelihood function.} \\

\section{Prediction Intervals estimation for Gaussian Processes}
\label{sec4:PIforGP}

{Using the Cross-Validation method,} the MSE hyperparameters $(\hat{\sigma}_{MSE}^{2}, \hat{\vtheta}_{MSE})$ are obtained from a point-wise prediction metric and do not focus on Prediction Intervals neither on quantifying the uncertainty of the model. For these purposes, using the CP is more appropriate.

The \textit{Coverage Probability} (CP) is defined as the probability that the Prediction Interval procedure will produce an interval that captures what it is intended to capture \citep{Hong2009}. In the Leave-One-Out framework, we keep the notations of $\tilde y_{i}$ and $\tilde \sigma_i^2$ : the predictive mean and variance on $\vx^{(i)} \in \mathbf{X}$  using the learning set $\mathbf{D}_{-i}= \{ (\vx^{(j)} , y^{(j)})\}_{ j \in \{1,\ldots, n\} \setminus \{i\}}$. We define then the Leave-One-Out CP $\mathbb{\tilde P}_{1-\alpha}$ as the percentage of observed values $\vy$ belonging to Prediction Intervals $\mathcal{PI}_{1-\alpha}$ of $\tilde y_{i}$ for all $i \in \{1,\ldots,n\}$
\begin{equation}
    \begin{aligned}
        \mathbb{\tilde P}_{1-\alpha} &{= \frac{1}{n} \sum_{i=1}^{n} {\mathbf{1}} \{ y^{(i)} \in  \mathcal{PI}_{1-\alpha} ( \vx^{(i)}  )\} }, \\
        &= \frac{1}{n} \sum_{i=1}^{n} {\mathbf{1}} \{ \tilde y_{i} + q_{\alpha / 2} \times \tilde \sigma_{i} < \ y^{(i)} \leq \ \tilde y_{i} + q_{1-\alpha / 2} \times \tilde \sigma_{i} \} ,
    \end{aligned}
\end{equation}

where $q_a$ is the $a$-quantile of the standard normal {distribution} and ${\mathbf{1}}\{A\}$ is the indicator function of  $A$. We introduce the Heaviside step function $h$
\begin{equation}
    h(x) = {\mathbf{1}}\{x \geq 0\} = \left\{ \begin{array}{l l}
        1 & \quad \text{if $x \geq 0$}\\
        0 & \quad \text{if $x < 0$}\\ \end{array} \right.
\end{equation}

The Leave-One-Out CP $\mathbb{\tilde P}_{1-\alpha}$ can be written as
\begin{equation}
\label{eq:LOO_Palpha}
    \mathbb{\tilde P}_{1-\alpha} = \frac{1}{n} \sum_{i=1}^{n} h \left( q_{1-\alpha / 2} - \frac{y^{(i)}-\tilde y_{i}}{\tilde \sigma_{i} } \right)  - \frac{1}{n} \sum_{i=1}^{n} h \left( q_{\alpha / 2} - \frac{y^{(i)}-\tilde y_{i}}{\tilde \sigma_{i} } \right) .
\end{equation}

When the model is well-specified, the coverage of the Prediction Intervals $\mathcal{PI}_{1-\alpha}$ is optimal as the predictive distribution is fully characterized by the Gaussian {distribution} (see section \ref{sec:covfunctions}), each term of the right-hand side of (\ref{eq:LOO_Palpha}) is an unbiased estimator of the probability 
\begin{equation}
    \mathbb{P} \left( \frac{ Y(\vx^{(i)}) - \tilde y_i }{\tilde \sigma_i}  \leq  q_{1-\alpha / 2} \ \Big| \ \mathbf{D}_{-i} \right) = 1-\alpha / 2 ,
\end{equation}

and
\begin{equation}
    \mathbb{P} \left( \frac{ Y(\vx^{(i)}) - \tilde y_i }{\tilde \sigma_i}  \leq  q_{\alpha / 2} \ \Big| \ \mathbf{D}_{-i} \right) = \alpha / 2.
\end{equation}
Conversely, if the model is misspecified, each predictive quantile, needs to be quantified properly with respect to the normal {distribution} quantile so that the CP as described in section \ref{sec2:ProbForm} achieves the desired level.

Let $a \in (0,1/2) \cup (1/2,1)$ describe a nominal level of quantile. We define the \textit{quasi-Gaussian} proportion $\psi_a$ as a map from $[0,+\infty)\times(0,+\infty)^{d}$ to $[0,1]$ 
\begin{equation}
\label{eq:quasiGauss}
    \psi_a \left( \sigma^2, \vtheta \right) = \frac{1}{n} \sum_{i=1}^{n} h\left( q_a - \frac{ y^{(i)}-\tilde y_{i}}{\tilde \sigma_{i}} \right) ,
\end{equation}
where $\tilde{y_{i}}$ and $\tilde \sigma_{i}$ are the predictive mean and variance at $\vx^{(i)}$ using the learning set $\mathbf{D}_{-i}$ and the hyperparameters $(\sigma^2, \vtheta)$. 
Given the Virtual Cross-Validation formulas \citep{Dubrule1983}, $\psi_a$ can be written in terms of the covariance matrix $\overline{\mathbf{K}}$
\begin{equation}
\label{def:psi_a}
    \psi_a(\sigma^2, \vtheta) = \frac{1}{n} \sum_{i=1}^{n} h \left( q_a - \frac{ \left( \overline{\mathbf{K}}\vy \right)_i }{ \sqrt{ \left( \overline{\mathbf{K}} \right)_{i,i}}} \right) .
\end{equation}

The \textit{quasi-Gaussian} proportion $\psi_a$ describes how close the $a$-quantile $q_a$ of the standardized predictive distribution is to the level $a$ (ideally, it should correspond to $a$). Therefore, the objective is to fit the hyperparameters $(\sigma^2, \vtheta)$ according to the \textit{quasi-Gaussian} proportions and to find two pairs $(\overline{\sigma}^2, \overline{\vtheta})$ and $(\underline{\sigma}^2, \underline{ \vtheta})$ such that $\psi_{1-\alpha/2}(\overline{\sigma}^2, \overline{\vtheta}) = 1-\alpha/2$ and $\psi_{\alpha/2}(\underline{\sigma}^2, \underline{ \vtheta}) = \alpha/2$. This allows us to get the optimal Leave-One-Out CP by respecting the nominal confidence level $(1-\alpha)$, that is $\mathbb{\tilde P}_{1-\alpha} = 1 - \alpha$.
 
\subsection{Presence of nugget effect}
 
In this section, we assume $\sigma^2_{\epsilon}>0$. The \textit{quasi-Gaussian} proportion $ \psi_a$ is, however, piecewise constant and can take values only in the finite set $\{k/n, k \in\{0,\ldots,n\} \}$. We first need to modify the problem $\psi_a \left( \sigma^2, \vtheta \right) = a$. Let $\delta > 0$, we define the continuous functions $h^-_{\delta}$ and $h^+_{\delta}$
\begin{equation}
    \begin{aligned} h^+_{\delta}(x) &= \left\{ \begin{array}{l l}
        1 & \quad \text{if $x > \delta$}\\
        x/{\delta} & \quad \text{if $ 0 < x \leq \delta$}\\
        0 & \quad \text{otherwise}\\ \end{array} \right. \\
        h^-_{\delta}(x) &= \left\{ \begin{array}{l l}
        1 & \quad \text{if $x \geq 0$}\\
        1+x/{\delta} & \quad \text{if $ -\delta \leq x < 0$}\\
        0 & \quad \text{otherwise}\\ \end{array} \right.
    \end{aligned}
\end{equation}

If $a>1/2$ we define 
\begin{equation}
    \psi^{\left(\delta\right)}_a \left( \sigma^2, \vtheta \right) =\frac{1}{n} \sum_{i=1}^{n}  h^+_{\delta} \left( q_a - \frac{ \left( \overline{\mathbf{K}}\vy \right)_i }{ \sqrt{ \left( \overline{\mathbf{K}} \right)_{i,i}}} \right) .
\end{equation}

If $a<1/2$ we define 
\begin{equation}
    \psi^{\left(\delta\right)}_a \left( \sigma^2, \vtheta \right) =\frac{1}{n} \sum_{i=1}^{n}  h^-_{\delta} \left( q_a - \frac{ \left( \overline{\mathbf{K}}\vy \right)_i }{ \sqrt{ \left( \overline{\mathbf{K}} \right)_{i,i}}} \right) .
\end{equation}

Let $\delta >0$ be a small enough so that $\delta<q_a$ if $a>1/2$ (respectively, $\delta<q_{1-a}$ if $a<1/2$) in such a way that $h^+_{\delta}(q_a)=1$ (respectively, $h^-_{\delta}(q_a)=0$). We consider the problem
\begin{equation}
\label{Eq:OptimLOO}
    \psi^{(\delta)}_a \left( \sigma^2, \vtheta \right) = a ,
\end{equation}

and we denote by $\mathcal{A}_{a, \delta}$ the solution set of the problem (\ref{Eq:OptimLOO})
\begin{equation}
    \mathcal{A}_{a, \delta} := \left\{ (\sigma^2, \vtheta) \in [0,+\infty) \times (0,+\infty)^d , \ \psi^{(\delta)}_{a}(\sigma^2, \vtheta) = a \right\} . 
\end{equation}

{\it Hypothesis $\mathcal{H}_3$~:
{Let $k_{\epsilon} = \operatorname{Card} \{ i \in \{1,\ldots,n\}, \ \frac{\left(\boldsymbol{\Pi} \vy \right)_i}{\sqrt{\left(\boldsymbol{\Pi} \right)_{i i}}}  \leq \sigma_{\epsilon}q_a \}$ where $\boldsymbol{\Pi}$ is the orthogonal projection matrix on {\rm (Im }$\mathbf{F})^{\perp}$ such that $\boldsymbol{\Pi} = \mathbf{I}_n - \mathbf{F} \left( \mathbf{F}^{\top} \mathbf{F} \right)^{-1} \mathbf{F}^{\top}$. We assume that $k_{\epsilon}< na$ if $a > 1/2$ and $ k_{\epsilon} > na $ if $a < 1/2$}}.

\begin{remark}
    {The hypothesis $\mathcal{H}_3$ is typically satisfied in Ordinary and Universal Kriging. Indeed, $\boldsymbol{\Pi}$ is the projection on the space {\rm (Im }$\mathbf{F})^{\perp}$ and is expected to remove the trend of the model. It is reasonable to think that $\left(\boldsymbol{\Pi} \vy \right)$ is centered and that
    \begin{equation}
        \operatorname{Card} \{ i \in \{1,\ldots,n\}, \ \left(\boldsymbol{\Pi} \vy \right)_i \leq 0 \} \approx \frac{n}{2} .
    \end{equation}
    If $\sigma^2_{\epsilon}$ is smaller than $\sigma^2$, then we should also have
    \begin{equation}
        \operatorname{Card} \{ i \in \{1,\ldots,n\}, \ \frac{\left(\boldsymbol{\Pi} \vy \right)_i}{\sqrt{\left(\boldsymbol{\Pi} \right)_{i i}}}  \leq \sigma_{\epsilon} q_a \} \approx \frac{n}{2} ,
    \end{equation}
    so that the hypothesis $\mathcal{H}_3$ should be fulfilled.} 
\end{remark} 

\begin{proposition}
\label{prop:prop1}
    Let us assume the hypotheses $\mathcal{H}_1$, $\mathcal{H}_2$ and $\mathcal{H}_3$, then $\mathcal{A}_{a, \delta}$ is non-empty.
\end{proposition}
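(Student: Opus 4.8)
The plan is to reduce the existence question to a one-dimensional intermediate value argument. I would fix an arbitrary length-scale vector $\vtheta_0 \in (0,+\infty)^d$ and study the single-variable function $g(\sigma^2) := \psi^{(\delta)}_a(\sigma^2, \vtheta_0)$ on the interval $[0,+\infty)$, aiming to show that $g$ attains the value $a$. First I would establish that $g$ is continuous. Since $\sigma^2_{\epsilon} > 0$, the matrix $\mathbf{K} = \sigma^2 \mathbf{R}_{\vtheta_0} + \sigma^2_{\epsilon} \mathbf{I}_n$ is positive definite for every $\sigma^2 \ge 0$, so $\sigma^2 \mapsto \mathbf{K}^{-1}$ is continuous; hypothesis $\mathcal{H}_1$ guarantees that $\mathbf{F}^{\top}\mathbf{K}^{-1}\mathbf{F}$ is invertible, so $\overline{\mathbf{K}}$ in (\ref{eq:defKbar}) depends continuously on $\sigma^2$. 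The diagonal entries $(\overline{\mathbf{K}})_{i,i}$ remain strictly positive (by Lemma~\ref{lemma3} for $\sigma^2 > 0$, and by $\mathcal{H}_2$ at $\sigma^2 = 0$, see below), so each standardized residual $(\overline{\mathbf{K}}\vy)_i / \sqrt{(\overline{\mathbf{K}})_{i,i}}$ is continuous; composing with the continuous maps $h^+_{\delta}$ (resp. $h^-_{\delta}$) shows that $g$ is continuous on $[0,+\infty)$.

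Next I would evaluate the large-$\sigma^2$ endpoint. As $\sigma^2 \to +\infty$, writing $\mathbf{K} = \sigma^2\big(\mathbf{R}_{\vtheta_0} + (\sigma^2_{\epsilon}/\sigma^2)\mathbf{I}_n\big)$ shows that $\sigma^2 \overline{\mathbf{K}}$ stays bounded (it tends to the no-nugget matrix $\overline{\mathbf{R}}_{\vtheta_0}$), so that $(\overline{\mathbf{K}}\vy)_i / \sqrt{(\overline{\mathbf{K}})_{i,i}} = \sigma^{-1}\,\big((\sigma^2\overline{\mathbf{K}})\vy\big)_i / \sqrt{(\sigma^2\overline{\mathbf{K}})_{i,i}}$ scales like $\sigma^{-1}$ and hence vanishes for each $i$. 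Thus the argument of every step function tends to $q_a$. When $a > 1/2$ we have $q_a > \delta$, so $h^+_{\delta}(q_a) = 1$ and $g(\sigma^2) \to 1 > a$; when $a < 1/2$ we have $q_a < -\delta$ (because $\delta < q_{1-a} = -q_a$), so $h^-_{\delta}(q_a) = 0$ and $g(\sigma^2) \to 0 < a$. In either case continuity provides a finite $M$ with $g(M)$ strictly on the far side of $a$ from the value obtained at $\sigma^2 = 0$.

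The delicate endpoint is $\sigma^2 = 0$, and this is where I expect the main work. There $\mathbf{K} = \sigma^2_{\epsilon} \mathbf{I}_n$, so a direct computation gives $\overline{\mathbf{K}} = \sigma^{-2}_{\epsilon}\big(\mathbf{I}_n - \mathbf{F}(\mathbf{F}^{\top}\mathbf{F})^{-1}\mathbf{F}^{\top}\big) = \sigma^{-2}_{\epsilon}\boldsymbol{\Pi}$, the projector appearing in $\mathcal{H}_3$; note that $(\overline{\mathbf{K}})_{i,i} = \sigma^{-2}_{\epsilon}\|\boldsymbol{\Pi}\mathbf{e}_i\|^2 > 0$ is exactly the content of $\mathcal{H}_2$, which justifies well-definedness at this endpoint. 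The standardized residual becomes $(\overline{\mathbf{K}}\vy)_i/\sqrt{(\overline{\mathbf{K}})_{i,i}} = \sigma^{-1}_{\epsilon}(\boldsymbol{\Pi}\vy)_i/\sqrt{(\boldsymbol{\Pi})_{i,i}}$, so the argument of the step function is $x_i := q_a - \sigma^{-1}_{\epsilon}(\boldsymbol{\Pi}\vy)_i/\sqrt{(\boldsymbol{\Pi})_{i,i}}$. For $a > 1/2$ I would use $h^+_{\delta}(x) \le \mathbf{1}\{x > 0\}$ to bound $\sum_i h^+_{\delta}(x_i) \le \operatorname{Card}\{i : (\boldsymbol{\Pi}\vy)_i/\sqrt{(\boldsymbol{\Pi})_{i,i}} < \sigma_{\epsilon} q_a\} \le k_{\epsilon}$, whence $g(0) \le k_{\epsilon}/n < a$ by $\mathcal{H}_3$; for $a < 1/2$ I would instead use $h^-_{\delta}(x) \ge \mathbf{1}\{x \ge 0\}$ to get $\sum_i h^-_{\delta}(x_i) \ge \operatorname{Card}\{i : (\boldsymbol{\Pi}\vy)_i/\sqrt{(\boldsymbol{\Pi})_{i,i}} \le \sigma_{\epsilon} q_a\} = k_{\epsilon}$, whence $g(0) \ge k_{\epsilon}/n > a$. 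In both cases $g(0)$ and $g(M)$ lie on opposite sides of $a$, so the Intermediate Value Theorem applied to the continuous $g$ on $[0,M]$ yields some $\sigma^2_{\star}$ with $\psi^{(\delta)}_a(\sigma^2_{\star}, \vtheta_0) = a$; thus $(\sigma^2_{\star}, \vtheta_0) \in \mathcal{A}_{a,\delta}$ and the set is non-empty. The only genuinely non-routine steps are matching the inequality directions of $h^{\pm}_{\delta}$ with the definition of $k_{\epsilon}$ and checking that the strict inequalities of $\mathcal{H}_3$ survive the passage from the count to $g(0)$.
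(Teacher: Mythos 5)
Your proof is correct and follows essentially the same route as the paper's (Lemma~\ref{lemma5} in \ref{appendix:A}): fix $\vtheta$, show $\psi^{(\delta)}_a(\cdot,\vtheta)$ equals (at most/at least) $k_{\epsilon}/n$ at the $\sigma^2=0$ endpoint via the projector $\boldsymbol{\Pi}$ and the comparison of $h^{\pm}_{\delta}$ with the Heaviside function, tends to $1$ (resp.\ $0$) as $\sigma^2\to+\infty$, and conclude by the Intermediate Value Theorem under $\mathcal{H}_3$. The only cosmetic difference is that you evaluate the function at $\sigma^2=0$ directly rather than as a limit.
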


\begin{proof}
    In \ref{appendix:A}.
\end{proof}

The challenge now is to identify and choose wisely the optimal solutions $(\sigma_{\rm{opt}}^2, \vtheta_{\rm{opt}}) \in \mathcal{A}_{a, \delta}$. Some authors \citep{KhosraviMPIW} suggest the mean Prediction Intervals width ({MPIW}) of Prediction Intervals $\mathcal{PI}_{1-\alpha}$ as an additional constraint to reduce the set of solutions. However, this constraint may not work when dealing with quantile estimation because the lower bound of the corresponding interval may be infinite. 

Instead, we will compare these solutions with MLE's solution $(\hat{\sigma}_{ML}^2, \hat{\vtheta}_{ML})$ (subsection \ref{par:MLE}) or MSE-CV solution $(\hat{\sigma}_{MSE}^2, \hat{\vtheta}_{MSE})$ (subsection \ref{par:Cross-Valid}) and we will take the closest pair $(\sigma_{\rm{opt}}^2, \vtheta_{\rm{opt}})$ by using an appropriate notion of similarity between multivariate Gaussian distributions. Ideally, we aim to solve the following problem
\begin{equation}
\label{prob:OptimDist}
    \operatorname{argmin}_{(\sigma^2, \vtheta) \in \mathcal{A}_{a, \delta}} d^2\left( (\sigma^2, \vtheta), (\sigma_0^2, \vtheta_0) \right),
\end{equation}
where $d$ is a continuous similarity measure of hyperparameters $(\sigma^2, \vtheta)$ operating on the mean vector $\vm$ and covariance matrix $\mathbf{K}$, and $(\sigma_0^2, \vtheta_0) =(\hat \sigma_{ML}^2, \hat \vtheta_{ML})$ or $ (\hat\sigma_{MSE}^2, \hat\vtheta_{MSE})$ as described in (\ref{eq:OptimML}) or (\ref{eq:MSEsolution}).

The resolution of the problem (\ref{prob:OptimDist}) may be too costly and heavy to solve when the dimension is high, say $d \geq 10$. An alternative is to apply \textit{the relaxation} method where we redefine this optimization problem of $\vtheta$ from $(0,+\infty)^d$ to $(0,+\infty)$ by shifting the length-scale vector $\vtheta_0$ by a parameter $\lambda \in (0,+\infty)$.

Let $\vtheta_0$ denote a solution of the problems (\ref{eq:OptimML}) or (\ref{eq:MSEsolution}) and for $\lambda \in (0,+\infty)$, let $H_{\delta}(\lambda)$ denote the subset 
\begin{equation}
\label{eq:Hdelta}
    H_{\delta}(\lambda)= \{ \sigma^2 \in [0,+\infty) , \ \psi^{(\delta)}_{a}(\sigma^2, \lambda \vtheta_0)=a\} .
\end{equation} 

{\it Hypothesis $\mathcal{H}_4$~: The set-valued mapping (the so-called correspondence function) $H_{\delta}:(0,+\infty) \to \mathcal{P}((0,+\infty))$, where $\mathcal{P}(S)$ denotes the power set of a set $S$, is lower semi-continuous, that is, for all $\lambda \in (0,+\infty)$, for each open set $\mathcal{U}$ with $H_{\delta}(\lambda) \cap \mathcal{U} \neq \emptyset$, there exists a neighborhood $\mathcal{O}(\lambda)$ such that if $\lambda^* \in \mathcal{O}(\lambda)$ then $H_{\delta}(\lambda^*) \cap \mathcal{U} \neq \emptyset$.} 

In the kriging framework, $\sigma^2$ should be as small as possible to reduce the uncertainty of the model, a natural choice of $\sigma^2_{\rm{opt}}$ is
\begin{align}
    &\forall \lambda \in (0,+\infty) \ : \sigma^2_{\rm{opt}}(\lambda) := \min \{ \sigma^2 \in [0,+\infty) , \ \psi^{(\delta)}_{a}(\sigma^2, \lambda \vtheta_0) = a\} .
\end{align}

\begin{proposition}
\label{prop:prop2}
    The function $\lambda \mapsto \sigma^2_{\rm{opt}}(\lambda)$ is well-defined under hypotheses $\mathcal{H}_1$ to $\mathcal{H}_3$, and continuous on $(0,+\infty)$ under the additional hypothesis $\mathcal{H}_4$.
\end{proposition}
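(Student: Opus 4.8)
The plan is to read $\sigma^2_{\rm{opt}}(\lambda)=\min H_{\delta}(\lambda)$ as the minimal selection of the correspondence $H_{\delta}$ and to treat well-definedness and continuity separately, the standing assumption here being $\sigma^2_{\epsilon}>0$.

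For \emph{well-definedness} I would first show that for each fixed $\lambda>0$ the set $H_{\delta}(\lambda)$ is non-empty and that its infimum is attained. Non-emptiness follows by restricting $\psi^{(\delta)}_a$ to the line $\vtheta=\lambda\vtheta_0$ and applying the intermediate value theorem in the single variable $\sigma^2$, exactly as in Proposition \ref{prop:prop1}. The two endpoint estimates are: at $\sigma^2=0$ one has $\mathbf{K}=\sigma_{\epsilon}^2\mathbf{I}_n$, hence $\overline{\mathbf{K}}=\sigma_{\epsilon}^{-2}\boldsymbol{\Pi}$ with $\boldsymbol{\Pi}$ the projector of $\mathcal{H}_3$, so $\psi^{(\delta)}_a(0,\lambda\vtheta_0)$ is controlled by $k_{\epsilon}/n$, which $\mathcal{H}_3$ places strictly on one side of $a$; and as $\sigma^2\to\infty$ one has $\overline{\mathbf{K}}\sim\sigma^{-2}\overline{\mathbf{R}}_{\lambda\vtheta_0}$, so the normalized residuals $(\overline{\mathbf{K}}\vy)_i/\sqrt{(\overline{\mathbf{K}})_{i,i}}$ scale like $\sigma^{-1}$ and vanish, driving $\psi^{(\delta)}_a$ to $1$ (if $a>1/2$) or to $0$ (if $a<1/2$), i.e. to the opposite side of $a$. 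Continuity of $\psi^{(\delta)}_a(\cdot,\lambda\vtheta_0)$ in $\sigma^2$ — valid because $\mathbf{K}$ stays invertible for all $\sigma^2\ge 0$ since $\sigma_{\epsilon}^2>0$, because $\mathbf{F}^{\top}\mathbf{K}^{-1}\mathbf{F}$ is invertible by $\mathcal{H}_1$, because $(\overline{\mathbf{K}})_{i,i}>0$ by Lemma \ref{lemma3}, and because $h^+_{\delta},h^-_{\delta}$ are continuous — then produces a zero of $\psi^{(\delta)}_a(\cdot,\lambda\vtheta_0)-a$. Finally $H_{\delta}(\lambda)$ is the preimage of $\{a\}$ under a continuous map, hence closed in $[0,+\infty)$ and bounded below by $0$; a non-empty closed set bounded below in $\mathbb{R}$ attains its minimum, so $\sigma^2_{\rm{opt}}(\lambda)$ is well-defined using only $\mathcal{H}_1$–$\mathcal{H}_3$.

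For \emph{continuity} of $m:=\sigma^2_{\rm{opt}}$ I would first record that $\psi^{(\delta)}_a(\sigma^2,\lambda\vtheta_0)$ is jointly continuous in $(\sigma^2,\lambda)\in[0,+\infty)\times(0,+\infty)$, by the same invertibility facts and the continuity of matrix inversion on invertible matrices; consequently the graph $\{(\lambda,\sigma^2):\psi^{(\delta)}_a(\sigma^2,\lambda\vtheta_0)=a\}$ is closed. I then prove $m$ is both upper and lower semicontinuous. For upper semicontinuity I invoke $\mathcal{H}_4$ directly: fix $\lambda$, put $\sigma_0^2=m(\lambda)\in H_{\delta}(\lambda)$, and for $\epsilon>0$ take $\mathcal{U}=(\sigma_0^2-\epsilon,\sigma_0^2+\epsilon)$; since $H_{\delta}(\lambda)\cap\mathcal{U}\neq\emptyset$, lower semicontinuity of $H_{\delta}$ yields a neighborhood of $\lambda$ on which $H_{\delta}(\lambda^*)\cap\mathcal{U}\neq\emptyset$, so $m(\lambda^*)<\sigma_0^2+\epsilon$, whence $\limsup_{\lambda^*\to\lambda}m(\lambda^*)\le m(\lambda)$. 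For lower semicontinuity I use only the closed graph: choose $\lambda_k\to\lambda$ with $m(\lambda_k)\to L:=\liminf_{\lambda^*\to\lambda}m(\lambda^*)$; if $L<\infty$ then $\sigma_k^2:=m(\lambda_k)\to L$ with $\psi^{(\delta)}_a(\sigma_k^2,\lambda_k\vtheta_0)=a$, so joint continuity gives $\psi^{(\delta)}_a(L,\lambda\vtheta_0)=a$, i.e. $L\in H_{\delta}(\lambda)$ and $m(\lambda)\le L$ (the case $L=\infty$ is trivial). The two inequalities give continuity on $(0,+\infty)$.

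The routine parts are the endpoint/limit computations and the standard ``closed and bounded below implies minimum attained'' argument. The delicate point, and the main obstacle, is that the upper-semicontinuity direction genuinely requires $\mathcal{H}_4$: without lower semicontinuity of the correspondence $H_{\delta}$ the minimal selection could jump upward as $\lambda$ varies, since a new small solution of $\psi^{(\delta)}_a=a$ might appear only in the limit and be unreachable from nearby $\lambda$. Thus the real content is recognizing that $\mathcal{H}_4$ is exactly the hypothesis ruling this out, whereas the reverse inequality (lower semicontinuity of $m$) comes for free from the closed graph. A secondary care point is verifying that the intermediate-value crossing persists for \emph{every} $\lambda$; this holds because the value at $\sigma^2=0$ depends on the direction only through $\boldsymbol{\Pi}$, which is independent of $\lambda$, and the large-$\sigma^2$ limit is itself $\lambda$-independent.
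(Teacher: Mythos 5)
Your proof is correct and follows the same overall skeleton as the paper's (non-emptiness of $H_{\delta}(\lambda)$ by the intermediate value theorem with the same two endpoint limits $\overline{\mathbf{K}}\to\sigma_{\epsilon}^{-2}\boldsymbol{\Pi}$ as $\sigma^2\to 0$ and $\overline{\mathbf{K}}\sim\sigma^{-2}\overline{\mathbf{R}}_{\lambda\vtheta_0}$ as $\sigma^2\to\infty$, then closedness, then continuity via the semicontinuity properties of the correspondence), but it differs in execution in two ways worth noting. First, for well-definedness the paper proves that $H_{\delta}(\lambda)$ is \emph{compact} (Lemma \ref{lemma6}), establishing boundedness by a contradiction argument with a sequence $\sigma_m^2\to+\infty$; you instead observe that a non-empty closed subset of $[0,+\infty)$ attains its minimum without any upper bound, which is a genuine economy --- compactness is only needed by the paper because it casts $\sigma^2_{\rm{opt}}(\lambda)$ as $-\max_{\sigma^2\in H_{\delta}(\lambda)}(-\sigma^2)$ in order to invoke Berge's Maximum Theorem. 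Second, for continuity the paper cites that theorem (upper semicontinuity of $H_{\delta}$ from continuity of $\psi^{(\delta)}_a$, lower semicontinuity from $\mathcal{H}_4$, compact values), whereas you unpack it: upper semicontinuity of the minimal selection from $\mathcal{H}_4$ via the open set $\mathcal{U}=(\sigma_0^2-\epsilon,\sigma_0^2+\epsilon)$, and lower semicontinuity from the closed graph of $H_{\delta}$, which follows from joint continuity of $\psi^{(\delta)}_a$. Both halves of your argument are sound (the closed-graph step correctly extracts a subsequence realizing the liminf and passes to the limit in $\psi^{(\delta)}_a(\sigma_k^2,\lambda_k\vtheta_0)=a$), and your remark that $\mathcal{H}_4$ is used only to prevent upward jumps of the minimal selection is exactly the role it plays in the paper. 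What your version buys is self-containedness and a slightly weaker requirement on $H_{\delta}(\lambda)$ (closed rather than compact); what the paper's version buys is brevity and a ready-made reference. Either way the statement is proved.
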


\begin{proof}
    In \ref{appendix:A}.
\end{proof}

Concerning the choice of $d$, one known similarity measure between probability distributions is the Wasserstein distance, widely used in optimal transportation problems (see Chapter 6 of \cite{Villani2009} for more details). In case of two Gaussian random distributions $\mathcal{N}(\vm_1, \mathbf{K}_1)$ and $\mathcal{N}(\vm_2,\mathbf{K}_2)$, the second Wasserstein distance is equal to
\begin{equation}
    W^2_2(\mathcal{N}(\vm_1,\mathbf{K}_1),\mathcal{N}(\vm_2,\mathbf{K}_2)) = \| \vm_1 - \vm_2 \|^2 +  \Tr \left( \mathbf{K}_1 + \mathbf{K}_2 - 2\sqrt{\mathbf{K}_1^{1/2} \mathbf{K}_2 \mathbf{K}_1^{1/2}} \right) ,
\end{equation}
where, in our setting, $\vm_1= \mathbf{F} \widehat{\boldsymbol{\beta}}_1 = \left(\mathbf{F}^{\top} \mathbf{K}_1^{-1} \mathbf{F}\right)^{-1} \mathbf{F}^{\top} \mathbf{K}_1^{-1} \vy$ and $\vm_2 = \mathbf{F} \widehat{\boldsymbol{\beta}}_2 = \left(\mathbf{F}^{\top} \mathbf{K}_2^{-1} \mathbf{F}\right)^{-1} \mathbf{F}^{\top} \mathbf{K}_2^{-1} \vy$.

Therefore, each pair $(\sigma^2, \vtheta)$ is associated to a Gaussian distribution $\mathcal{N}(\vm,\mathbf{K})$ and we define the similarity measure $d$ as
\begin{equation}
    d^2\left( (\sigma^2, \vtheta), (\sigma_0^2, \vtheta_0) \right) = W^2_2(\mathcal{N}(\vm, \mathbf{K}),\mathcal{N}(\vm_0,\mathbf{K}_0)).
\end{equation}

The choice of the second Wasserstein distance $d^2$ and $\sigma^2_{\rm{opt}}$ makes the Prediction Intervals $\mathcal{PI}_{1-\alpha}$ shorter without the need for an additional metric like the MPIW and {without modifying the distribution of the obtained model significantly. We will see in Section \ref{Section:5.2} that, empirically, the barycenters of Prediction Intervals are not far from the predictive means obtained by MLE or MSE-CV methods}.

The \textit{relaxed} optimisation problem in (\ref{Eq:OptimLOO}) for the quantile estimation is given by the problem $\mathcal{P}_{\lambda }$ 
\begin{equation}
\label{prob:relax_dWass}
     \mathcal{P}_{\lambda} : \quad \operatorname{argmin}_{\lambda \in (0,+\infty)} \ \mathcal{L}(\lambda) := d^2\left( (\sigma_{\rm{opt}}^2(\lambda), \lambda \vtheta_0 ), (\sigma_0^2, \vtheta_0) \right).
\end{equation}

\begin{proposition}
\label{prop:prop3}
    Under hypotheses $\mathcal{H}_1$ to $\mathcal{H}_4$, the function $\mathcal{L} : (0,+\infty) \ \rightarrow \sR^+$ is continuous and coercive on $(0, +\infty)$. The problem $\mathcal{P}_{\lambda}$ admits at least one global minimizer $\lambda^*$ in $(0,+\infty)$.
\end{proposition}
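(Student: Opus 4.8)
The plan is to deduce existence of the minimizer from a Weierstrass-type argument: once $\mathcal{L}$ is shown to be continuous on $(0,+\infty)$ and coercive, every sublevel set $\{\lambda\in(0,+\infty):\mathcal{L}(\lambda)\le c\}$ is a closed subset of $(0,+\infty)$ that is bounded away from both endpoints, hence compact, so a continuous function attains its infimum there. The work therefore splits into (i) continuity of $\mathcal{L}$ and (ii) coercivity, the latter being the substantive part.

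For continuity I would write $\mathcal{L}$ as a composition of continuous maps. By Proposition \ref{prop:prop2}, under $\mathcal{H}_1$--$\mathcal{H}_4$ the map $\lambda\mapsto\sigma^2_{\rm{opt}}(\lambda)$ is continuous on $(0,+\infty)$, hence so is $\lambda\mapsto(\sigma^2_{\rm{opt}}(\lambda),\lambda\vtheta_0)$. The Matérn radial kernel (\ref{radialModel}) is jointly continuous in its hyperparameters, so $(\sigma^2,\vtheta)\mapsto\mathbf{K}$ is continuous; since $\mathbf{K}$ remains positive definite, matrix inversion together with (\ref{eq:beta_opt}) makes $\mathbf{K}\mapsto\vm=\mathbf{F}\widehat{\boldsymbol{\beta}}$ continuous. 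Finally the explicit Gaussian Wasserstein formula is continuous in $(\vm,\mathbf{K})$, because the trace, the map $\mathbf{K}\mapsto\mathbf{K}^{1/2}$ and the matrix square root are all continuous on the positive-definite cone. Composing these yields continuity of $\mathcal{L}$.

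For coercivity the key device is the trace lower bound for the Bures part of the Wasserstein distance. Writing the overlap term as a nuclear norm and applying Cauchy--Schwarz, $\Tr\big((\mathbf{K}_0^{1/2}\mathbf{K}\,\mathbf{K}_0^{1/2})^{1/2}\big)=\|\mathbf{K}^{1/2}\mathbf{K}_0^{1/2}\|_*\le\|\mathbf{K}^{1/2}\|_F\,\|\mathbf{K}_0^{1/2}\|_F=\sqrt{\Tr\mathbf{K}}\sqrt{\Tr\mathbf{K}_0}$, so that
\[
\mathcal{L}(\lambda)=W_2^2\big(\mathcal{N}(\vm,\mathbf{K}),\mathcal{N}(\vm_0,\mathbf{K}_0)\big)\ \ge\ \big(\sqrt{\Tr\mathbf{K}}-\sqrt{\Tr\mathbf{K}_0}\big)^2 .
\]
Since the kernel is stationary with amplitude $\sigma^2$, the diagonal of $\mathbf{K}$ equals $\sigma^2+\sigma_\epsilon^2$ and $\Tr\mathbf{K}=n\big(\sigma^2_{\rm{opt}}(\lambda)+\sigma_\epsilon^2\big)$; as every off-diagonal correlation stays in $[-1,1]$ and $\sigma_\epsilon$ is fixed, $\mathcal{L}(\lambda)\to+\infty$ at an endpoint \emph{if and only if} $\sigma^2_{\rm{opt}}(\lambda)\to+\infty$ there. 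The crux is thus the boundary behaviour of $\sigma^2_{\rm{opt}}$. As $\lambda\to+\infty$ the correlation matrix $\mathbf{R}_{\lambda\vtheta_0}\to\mathbf{e}\mathbf{e}^\top$; because $\mathbf{e}\in\operatorname{Im}\mathbf{F}$ by $\mathcal{H}_1$, this rank-one ``signal'' is annihilated by $\overline{\mathbf{K}}$ (a short computation gives $\overline{\mathbf{K}}\to\sigma_\epsilon^{-2}\boldsymbol{\Pi}$ in the idealized limit), so the standardized Leave-One-Out residuals $(\overline{\mathbf{K}}\vy)_i/\sqrt{(\overline{\mathbf{K}})_{i,i}}$ converge to the $\sigma^2$-\emph{independent} values $\sigma_\epsilon^{-1}(\boldsymbol{\Pi}\vy)_i/\sqrt{(\boldsymbol{\Pi})_{i,i}}$, whose limiting coverage is strictly below $a$ by $\mathcal{H}_3$ (here $k_\epsilon<na$ for $a>1/2$, and symmetrically for $a<1/2$). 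Hence the constraint $\psi^{(\delta)}_a(\sigma^2,\lambda\vtheta_0)=a$ can only be met by driving $\sigma^2_{\rm{opt}}(\lambda)\to+\infty$, which forces $\mathcal{L}(\lambda)\to+\infty$. Controlling the $O(1/\lambda)$ correction to $\overline{\mathbf{K}}$ uniformly enough to make this quantitative is the main technical obstacle.

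It remains to treat the endpoint $\lambda\to0^+$, where $\mathbf{R}_{\lambda\vtheta_0}\to\mathbf{I}_n$ and the analogous computation leads to the constraint driven by $(\sigma^2+\sigma_\epsilon^2)^{-1/2}(\boldsymbol{\Pi}\vy)_i/\sqrt{(\boldsymbol{\Pi})_{i,i}}$. I would check whether $\mathcal{H}_3$ again forces $\sigma^2_{\rm{opt}}\to+\infty$; if so, coercivity holds at both endpoints and the Weierstrass argument closes immediately. If instead $\sigma^2_{\rm{opt}}(\lambda)$ stays bounded as $\lambda\to0^+$, so that $\mathcal{L}$ has a finite limit $L_0$ at that end, I would still conclude existence by exhibiting an interior point $\lambda_0$ (e.g.\ $\lambda_0=1$, where $\lambda_0\vtheta_0=\vtheta_0$ matches the reference length-scale and only $\sigma^2$ is recalibrated) with $\mathcal{L}(\lambda_0)<L_0$: then the compact sublevel set $\{\mathcal{L}\le\mathcal{L}(\lambda_0)\}$ avoids both endpoints and carries the global minimizer. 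This dichotomy at the lower endpoint, together with the degenerate correlation limit at $+\infty$, is where the proof requires the most care.
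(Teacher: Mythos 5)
Your overall route coincides with the paper's: continuity of $\mathcal{L}$ by composing the continuity of $\sigma^2_{\rm{opt}}$ (Proposition \ref{prop:prop2}) with that of $(\sigma^2,\vtheta)\mapsto\overline{\mathbf{K}}$ and of the Gaussian Wasserstein formula, and coercivity via the degenerate limit $\mathbf{R}_{\lambda\vtheta_0}\to\mathbf{J}$, the annihilation of $\mathbf{J}$ by $\overline{\mathbf{K}}$ thanks to $\mathbf{e}\in\operatorname{Im}\mathbf{F}$, and hypothesis $\mathcal{H}_3$ forcing $\sigma^2_{\rm{opt}}(\lambda)\to+\infty$. Your explicit lower bound $W_2^2\ge\big(\sqrt{\Tr\mathbf{K}}-\sqrt{\Tr\mathbf{K}_0}\big)^2$ is a useful addition: it makes precise the step ``$\sigma^2_{\rm{opt}}\to\infty$ implies $\mathcal{L}\to\infty$'', which the paper leaves implicit.

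Two steps are left genuinely open, however. First, what you call ``the main technical obstacle'' --- controlling the $O(1/\lambda)$ correction to $\overline{\mathbf{K}}$ uniformly --- is not needed at all: the paper argues by contradiction, supposing $\sigma^2_{\rm{opt}}(\lambda)\not\to+\infty$ and extracting by Bolzano--Weierstrass a subsequence $\lambda_{\phi(m)}\to+\infty$ along which $\sigma^2_{\rm{opt}}(\lambda_{\phi(m)})\to\sigma^2_\infty<+\infty$; then $\mathbf{K}\to\sigma^2_\infty\mathbf{J}+\sigma^2_\epsilon\mathbf{I}_n$, $\overline{\mathbf{K}}\to\sigma_\epsilon^{-2}\boldsymbol{\Pi}$ (since $\mathbf{W}^{\top}\mathbf{J}\mathbf{W}=\mathbf{0}$), and $\psi^{(\delta)}_a\to k_\epsilon/n\neq a$, contradicting the constraint $\psi^{(\delta)}_a=a$ that holds identically along the subsequence. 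Plain pointwise convergence along the subsequence suffices; you should close your argument this way rather than flagging it as an obstacle. Second, your treatment of the endpoint $\lambda\to 0^+$ is an unresolved dichotomy, and its fallback branch does not work as stated: if $\sigma^2_{\rm{opt}}$ stays bounded and $\mathcal{L}\to L_0<+\infty$ there, nothing guarantees the existence of an interior $\lambda_0$ with $\mathcal{L}(\lambda_0)<L_0$ (in particular there is no reason that $\lambda_0=1$ works), and without such a point the infimum need not be attained. To be fair, the paper's own proof establishes divergence only as $\lambda\to+\infty$ and is silent on the lower endpoint, so you have correctly identified a point the paper glosses over --- but your proposed repair does not actually close it.
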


\begin{proof}
    See \ref{appendix:A}.
\end{proof}

\begin{figure}[!ht]
	\centering
	\includegraphics[width=0.95\textwidth]{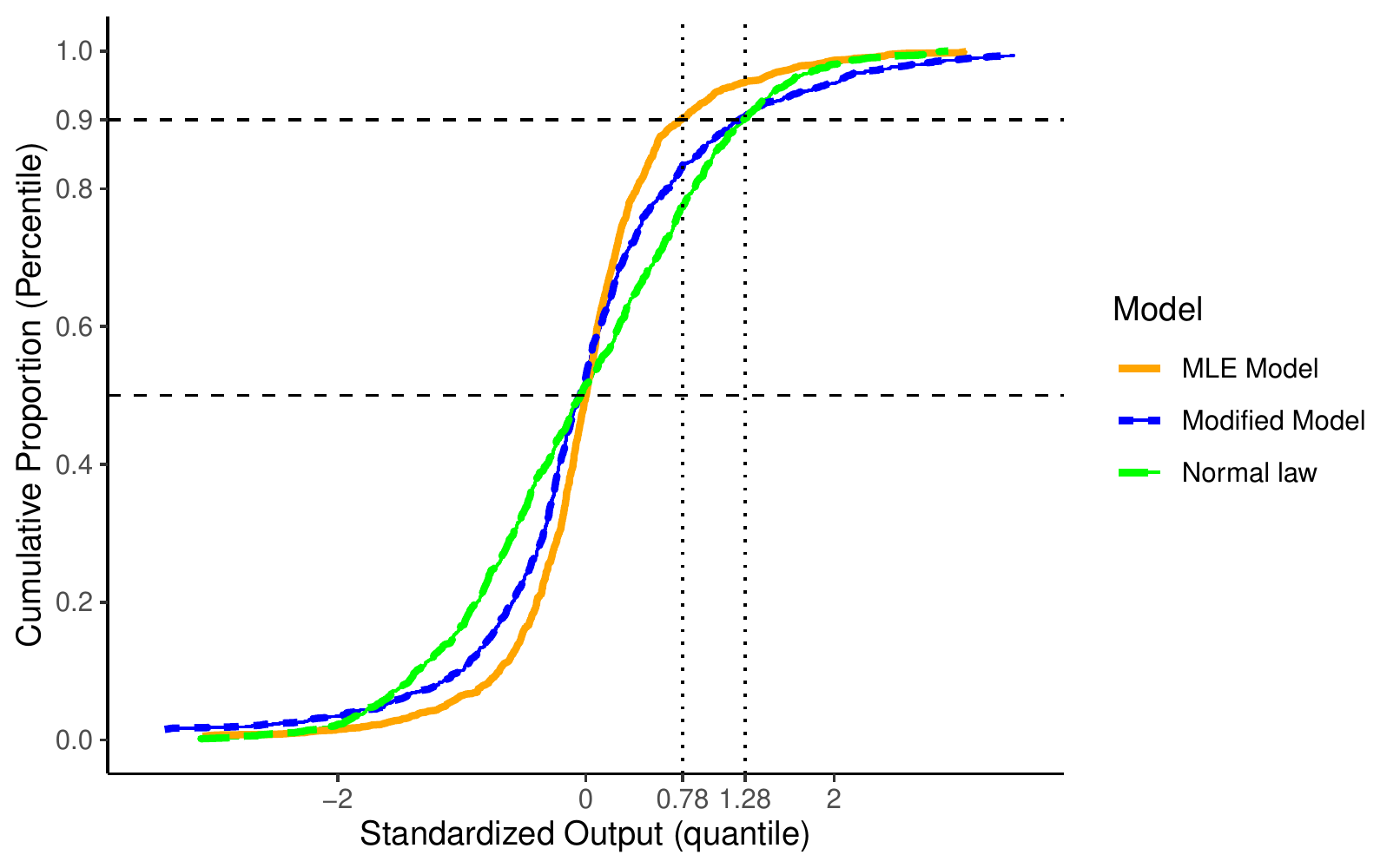}
	\caption{Illustration of the relaxation effect on the ECDF of the Leave-One-Out standardized predictive distribution on the quantile of level $a=90\%$; The relaxed standardized predictive distribution coincides with the standard normal {distribution} distribution on point $(q_a, a)=(1.28, 0.90)$ instead of $(\psi_a, a)=(0.78, 0.90)$ \\ 
	Green : standard normal {distribution} - MLE standardized Predictive distribution when the model is well-specified; Orange : MLE standardized predictive distribution when the model is misspecified; Blue : standardized predictive distribution after relaxing model's hyperparameters.} 
	\label{fig:relaxeffect}
\end{figure}

\begin{remark}
    The coercivity of the function $\mathcal{L}$ is guaranteed by the hypotheses $\mathcal{H}_1$ to $\mathcal{H}_3$ (see \ref{appendix:A}). The function $\mathcal{L}$ is also upper semi-continuous \citep{ZHAO1997}. The hypothesis $\mathcal{H}_4$ insures that $\mathcal{L}$ is continuous and that there exists a global minimizer. This hypothesis is not easy to check. If it does not hold or if it cannot be checked, then it is possible to solve the problem (\ref{prob:relax_dWass}) on a regular grid by a grid search method.
\end{remark}

Let $\widehat{\boldsymbol{\beta}}_{\rm{opt}}$ denote the corresponding regression parameter
\begin{equation}
    \widehat{\boldsymbol{\beta}}_{\rm{opt}}= \left(\mathbf{F}^{\top} \mathbf{K}_{\sigma_{\rm{opt}}^2(\lambda^*), \lambda^* \vtheta_0}^{-1} \mathbf{F}\right)^{-1} \mathbf{F}^{\top} \mathbf{K}_{\sigma_{\rm{opt}}^2(\lambda^*), \lambda^* \vtheta_0)}^{-1} \vy.
\end{equation}

The purpose of this resolution is to create a GP model with hyperparameters $(\widehat{\boldsymbol{\beta}}_{\rm{opt}}, \sigma_{\rm{opt}}^2(\lambda^*), \lambda^* \vtheta_0)$ able to predict the quantile $\tilde{y}_a$ such that a proportion $a$ of true values are below $\tilde{y}_a$ with respect to the constraint of \textit{quasi-Gaussian} proportion $\psi_a$ (see Figure \ref{fig:relaxeffect}). Finally, the Prediction Intervals $\mathcal{PI}_{1-\alpha}$ will be obtained using two GP models built with the same method, one for the upper quantile $1-\alpha/2$ with optimal relaxation parameter $\overline{\lambda}^*$ and the other for the lower quantile of $\alpha/2$ with parameter $\underline{\lambda}^*$. The CP of $\mathcal{PI}_{1-\alpha}$ is optimal and insured by respecting the coverage of each quantile as shown in (\ref{eq:LOO_Palpha}). In the following, we call this method \textit{Robust Prediction Intervals Estimation} (RPIE).

\subsection{Absence of nugget effect}

When the nugget effect is null $\sigma^2_{\epsilon}=0$, the set of solutions $\mathcal{A}_{a, \delta}$ is still non-empty because one can show that, for $\vtheta$ in the neighborhood of $\boldsymbol{0} \in \sR^d$, the problem $\psi^{(\delta)}_{a}(\sigma^2, \vtheta) = a$ has a solution $\sigma^2 \in (0,+\infty)$ (see \ref{appendix:B}). In particular, the correspondence function $H_{\delta}$ is non-empty valued for $\lambda>0$ small enough and it may be empty-valued for some large $\lambda \in (0,+\infty)$. 
{We may think, however, that $H_{\delta}$ is non-empty valued and that $\sigma^2_{\rm{opt}}(\lambda)$ exists for $\lambda$ close to one. 
Indeed, assume for a while that the model is well-specified, that is, there exist hyperparameters $(\boldsymbol{\beta}_*, \sigma^2_*, \vtheta_*)$ such that $\vy$ corresponds to  a realization of a random vector $\textbf{\textit{Y}} \sim \mathcal{N}( \mathbf{F} \boldsymbol{\beta}_*, \ \sigma^2_* \mathbf{R}_{\vtheta_*})$.
The existence of $H_{\delta}(\lambda)$ and $\sigma^2_{\rm{opt}}(\lambda)$ depend on the condition $k_{\lambda} \leq na$, where $k_{\lambda}$ is the integer defined by
\begin{equation}
    k_{\lambda} := \operatorname{Card} \left\{i \in \{1,\ldots,n\} ,  \left( \overline{\mathbf{R}}_{\lambda \vtheta_0} \vy \right)_i \leq 0 \right\} .
\end{equation}
Since $\overline{\mathbf{R}}_{\vtheta_*} \textbf{\textit{Y}}$ is centered,
we can anticipate that
\begin{equation}
    \operatorname{Card} \left\{i \in \{1,\ldots,n\} ,  \left( \overline{\mathbf{R}}_{\vtheta_*} \vy \right)_i \leq 0 \right\} \approx \frac{n}{2} .
\end{equation} 
Hence, the condition $n/2 < k_{\lambda} \leq na$ should be satisfied in a neighborhood of $\lambda=1$ since $\vtheta_0$ should be close to $\vtheta_*$. Finally, eventhough the function $\mathcal{L}$ is not defined on $(0,+\infty)$, we can solve (\ref{prob:relax_dWass}) by a grid search method on its domain.}

\section{Numerical Results}
\label{sec5:Results}

\subsection{{Test cases with analytical functions}}

{In this section, we give three numerical examples to illustrate Prediction Intervals estimation by the RPIE method. We show that for the \textit{Wing-Weight} function, the model is well-specified as  the CP is optimal for different levels, hence, no robust calibration of Prediction Intervals is required. However, for \citet{zhou1998adaptive} and \citet{Morokoff95} functions where the model is misspecified and for a given confidence level $\alpha$, we apply the RPIE method as described in section \ref{sec4:PIforGP} to estimate both upper and lower bounds of Predictions Intervals. The following metrics : the Leave-One-Out CP $\tilde{\mathbb{P}}_{1-\alpha}$ defined in (\ref{eq:LOO_Palpha}), the Coverage Probability (CP), the mean ($\text{MPIW}$) and standard-deviation ($\text{SdPIW}$) of the Prediction Interval width, and the accuracy $Q^2$ \citep{Kleijnen2000AMF} are used to assess and compare GP models built by MLE or MSE-CV methods, full Bayesian approach or the RPIE method. They can be used either for point-wise prediction comparison ($Q^{2}$ will be given in some cases for information, it does not represent the main metric of this section):}

\begin{equation}
	Q^{2}=1-\frac{\sum_{i=1}^{n_{test}}\left( y_{test}^{(i)}-\tilde{y}_{i,test}\right)^{2}}	{\sum_{i=1}^{n_{test}}\left( y_{test}^{(i)}-\overline{y}\right)^{2}} ,
	\label{eq:Q2}
\end{equation}
{or for quantifying the \textit{goodness} of Prediction Intervals:}
\begin{equation}
	\tilde{\mathbb{P}}_{1-\alpha} = \frac{1}{n} \sum_{i=1}^{n} {\mathbf{1}} \{ y^{(i)} \in  \mathcal{PI}_{1-\alpha} ( \vx^{(i)}  )\},
\end{equation}

\begin{equation}
	\text{CP}_{1-\alpha} = \frac{1}{n_{test}} \sum_{i=1}^{n_{test}} {\mathbf {1}} { y_{test}^{(i)} \in  \mathcal{PI}_{1-\alpha} \left( \vx^{(i)}_{test}  \right)} ,
\end{equation}

\begin{equation}
	\text{MPIW}_{1-\alpha} = \frac{1}{n_{test}} \sum_{i=1}^{n_{test}} \left| \mathcal{PI}_{1-\alpha} \big( \vx^{(i)}_{test} \big) \right| ,
\end{equation}

and,
\begin{equation}
	\text{SdPIW}_{1-\alpha} = \sqrt{\frac{1}{n_{test}} \sum_{i=1}^{n_{test}} \left[ \big| \mathcal{PI}_{1-\alpha} \big( \vx^{(i)}_{test}  \big) \big| - \text{MPIW}_{1-\alpha} \right]^2 } ,
\end{equation}
{where $\vy_{test}=\left(y_{test}^{(1)},\ldots,y_{test}^{(n_{test})} \right)$ is the vector to predict at $\left(\vx^{(1)}_{test},\ldots,\vx^{(n_{test})}_{test} \right)$, $\mathcal{PI}_{1-\alpha}$ is the $(1-\alpha) \times 100 \%$ confidence Prediction Interval delimited by the quantiles $q_{1-\alpha/2}$ and $q_{\alpha/2}$, and $\left| \mathcal{PI}_{1-\alpha} \right|$ is the length of the interval.}

{Note that the ${\rm CP}_{1-\alpha}$ may be different from the Leave-One-Out CP $\tilde{\mathbb{P}}_{1-\alpha}$, this case can happen when the distributions of the training and testing sets are different. However, a Leave-One-Out CP $\tilde{\mathbb{P}}_{1-\alpha}$ close to ${1-\alpha}$ insures that, if the assumption of \textit{i.i.d} distributions is respected i.e. $p_{train}(\mathbf{X}, \boldsymbol{y}) = p_{test}(\mathbf{X}, \boldsymbol{y})$, $\text{CP}_{1-\alpha}$ should be also close to ${1-\alpha}$}.

{This subsection provides results obtained on $d=10$-dimensional GP with constant mean function (Ordinary Kriging). The value of $\delta$ is fixed at $\delta=10^{-2}$. We implement our methods using the package \textit{kergp} \citep{Roustant2020} on R. For the computational time, we use an Intel(R) Core(TM) i5-9400H CPU @ 2.50GHz with a RAM of 32 Go.}

\subsubsection*{{Example 1: Well-specified model - The Wing Weight function}}

{The Wing Weight function is a model in dimension $d=10$ proposed by \cite{Forrester2008} that estimates the weight of a light aircraft wing. For an input vector $\vx \in \sR^{10}$, the response $y$ is:
\begin{equation}
\label{eq:wingweight}
    f(\vx) = 0.036 x_{1}^{0.758} x_{2}^{0.0035}\left(\frac{x_{3}}{\cos ^{2} \left( x_{4} \right)}\right)^{0.6} x_{5}^{0.006} x_{6}^{0.04}\left(\frac{100 \ x_{7}}{\cos{(x_{4})}}\right)^{-0.3}\left(x_{8} x_{9}\right)^{0.49} + x_{1} x_{10}.
\end{equation}}

{The components $x_i$ are assumed to vary over the ranges given in Table \ref{table:inputRange} (see \cite{Forrester2008} and \cite{Moon2010} for details)}

\begin{table*}[htp!]
    \begin{center}
    	\centering 
    	\caption{The input variables $x_j$ and their domain ranges $[a_j; b_j]$}
    	\label{table:inputRange} 
    	\begin{tabular}{ c c c c c } 
			\toprule
			\addlinespace
    	    Component & Domain &  &  Component & Domain\\  
    	    \midrule 
        	\addlinespace
        	$x_1$ & $[150; 200]$&  & $x_6$ & $[0.5; 1]$\\ 
        	$x_2$ & $[220; 300]$&  & $x_7$ & $[0.08; 0.18]$\\ 
        	$x_3$ & $[6, 10]$&  & $x_8$ & $[2.5; 6]$\\ 
        	$x_4$ & $[-10; 10]$&  & $x_9$ & $[1700, 2500]$\\ 
        	$x_5$ & $[16; 45]$&  & $x_{10}$ & $[0.025; 0.08]$\\  
        	\addlinespace
        	\bottomrule
        \end{tabular}
    \end{center}
\end{table*}


{We create an experimental design $\mathbf{X}$ of $n=600$ observations and $d=10$ variables where observations $\vx^{(i)}=\left(x^{(i)}_1,\ldots,x^{(i)}_d\right)$ are sampled i.i.d with uniform distribution over $\bigotimes_{j=1}^d [a_j,b_j]$. We generate the response $\vy=\left(y^{(1)},\ldots,y^{(n)}\right)$ such that $y^{(i)}=f (\vx^{(i)}) + \epsilon^{(i)}$ with $f$ defined in (\ref{eq:wingweight}) and $\epsilon^{(i)}$ are sampled i.i.d. with the distribution $\mathcal{N}(0,\sigma^2_\epsilon=25)$. Here the nugget effect is estimated with the methodology described in \cite{Iooss2017} and the covariance kernel is the Mat\'ern 3/2.}

\begin{table*}[bp!]
	\begin{center}
		\centering
		\caption{Performances of methods (MLE, MSE-CV and Full-Bayesian) for Wing Weight function}
		\label{tab:WingWeight_Results} 
		\begin{tabular}{ l c c c c c}
			\toprule &  \multicolumn{2}{c}{Before RPIE} & \multicolumn{2}{c}{After RPIE} & Full-Bayesian \\ 
			\toprule 
			\addlinespace 
			& MLE  & MSE-CV & MLE  & MSE-CV & - \\ 
			\midrule 
			\addlinespace 
			$Q^2$   &  0.563  &  0.764 &  n.c  &  n.c  &  0.562  \\
			\addlinespace
			\addlinespace
			$\mathbb{\tilde P}_{99\%}$ & 99.1  & 99.8 & 98.9 & 98.9 &  99.1 \\
			$\text{CP}_{99\%}$ & 98.7 & 100  &  98.7  &  98.0  & 98.7 \\
			$\mathbb{\tilde P}_{95\%}$ & 94.0  & 98.9 &  94.9 & 94.9 &  94.2 \\
			$\text{CP}_{95\%}$ & 95.3 & 99.3  &  96.7 &  96.0 &  95.3 \\
			\addlinespace
			\addlinespace
			$\mathbb{\tilde P}_{90\%}$ & 90.1  & 96.9 &  90.0   & 90.0  & 90.9  \\
			$\text{CP}_{90\%}$ & 91.3 & 96.0  &  89.3 &  90.0  & 91.3  \\
			\addlinespace
			\addlinespace
			Ct & 2min 12s & 32min 42s  &  6min$^*$ & 37min$^*$  & 4h 39min  27s \\
			\bottomrule
		\end{tabular}
	\end{center}
	\begin{tablenotes}[flushleft]
		\scriptsize
		\item $Q^2$: Accuracy; $\mathbb{\tilde P}_{1-\alpha}$: The Leave-One-Out CP in \% on the training set; $\text{CP}_{1-\alpha}$: The CP in \% on the testing set and Ct: computational time.
		\item *: The approximated cumulative computational time after running the RPIE method for all levels.
	\end{tablenotes}
\end{table*}

{The GP model is trained on $75\%$ of the data ($25\%$ of data is left for testing). The diagnostics of the model are presented in Table \ref{tab:WingWeight_Results} with the metrics described above. The accuracy $Q^2$ is moderate for MLE and Full-Bayesian methods. The MSE-CV does much better, an expected result since the MSE-CV method is more adapted for point-wise prediction criterion. However, the Leave-One-Out CP $\mathbb{\tilde P}_{1-\alpha}$ for two different levels $\alpha=5\%, 10\%$ is far from the required level, which means that they were poorly estimated with point-wise prediction criterion. In addition, Table \ref{tab:WingWeight_Results} shows in particular that the model is well-specified for Mat\'ern 3/2 correlation kernel with the MLE method since the CPs are optimal and close to the required level. This claim is empirical and can be verified either by comparing the standardized predictive distribution with the standard normal distribution as in Figure \ref{fig:relaxeffect} or using \cite{Shapiro65} normality test (in this example, $p$-value $= 0.203$). The Full-Bayesian approach also does well in estimating Prediction Intervals in the case of a well-specified model. Indeed, the hyperparameters' posterior distribution $p(\sigma^2, \vtheta \mid \vy)$ is concentrated around the MLE estimator, so the plug-in MLE approach and the Full-Bayesian approach give similar predictive distributions and Prediction Intervals. However, its computational time is extremely long compared to other methods (e.g. 100 times longer than the MLE method). Concerning the RPIE method, one can notice that it provides the optimal coverage at each required level, either on training or testing sets. However, we do not see significant interest in applying it here (except for the MSE-CV solution).}

{Example 1 is a case of well-specified model in which the CPs obtained by the MLE method satisfy the nominal value and the RPIE method does not bring a significant additional value (at least for the MLE solution).}

\subsubsection*{{Example 2: Misspecified model with noise - Morokoff \& Caflisch function -} }

{We consider the \cite{Morokoff95} function defined on $[0,1]^d$ by
\begin{equation}
\label{eq:Morokoff}
    f(\vx) = \frac{1}{2} \Big(1+\frac{1}{d}\Big)^d \prod_{i=1}^d (x_i)^{1/d}.
\end{equation}}

{In Example 2, we consider an experimental design $\mathbf{X}$ of $n=600$ observations and $d=10$ correlated inputs. Each observation has the form $\vx^{(i)}=\Big( \Phi(z^{(i)}_1),\ldots,\Phi(z^{(i)}_d)\Big) \in \sR^d$, $\Phi$ is the CDF of the standard normal distribution, $\vz^{(i)}$ are sampled from the multivariate distribution $\mathcal{N}(\boldsymbol{0}, \mathbf{C})$ and $\mathbf{C} \in \sR^{d \times d}$ is the following covariance matrix:
\begin{equation*}
     \mathbf{C} = \begin{bmatrix}
        1 & 0.90 & 0 & 0 & 0 & 0.50 & -0.30 & 0 & 0 & 0 \\
        0.90 & 1 & 0 & 0 & 0 & 0 & 0 & 0.10 & 0 & 0 \\
        0 & 0 & 1 & 0 & -0.30 & 0.10 & 0.40 & 0 & 0.05 & 0 \\
        0 & 0 & 0 & 1 & 0.40 & 0 & 0 & -0.35 & 0 & 0 \\
        0 & 0 & -0.30 & 0.40 & 1 & 0 & 0 & 0 & 0.10 & 0 \\
        0.05 & 0 & 0.10 & 0 & 0 & 1 & 0 & 0 & 0 & 0 \\
        -0.30 & 0 & 0.40 & 0 & 0 & 0 & 1 & 0 & 0 & -0.30 \\
        0 & 0.1 & 0 & -0.35 & 0 & 0 & 0 & 1 & 0 & 0\\
        0 & 0 & 0.05 & 0 & 0.10 & 0 & 0 & 0 & 1 & 0\\
        0 & 0 & 0 & 0 & 0 & 0.& -0.3 & 0 & 0 & 1\\
            \end{bmatrix}.
\end{equation*}}

{The response vector $\vy$ is generated as $y^{(i)}= f(\vx^{(i)})+\epsilon^{(i)}$ with $f$ the \textit{Morokoff \& Caflisch} function defined in (\ref{eq:Morokoff}) and $\epsilon^{(i)}$ are sampled i.i.d. with the distribution $\mathcal{N}(0,\sigma^2_{\epsilon}=10^{-4})$. We consider the Mat\'ern anisotropic geometric correlation model with smoothness 5/2 as covariance model and we study the Prediction Interval's problem with a nugget effect estimated with the methodology \cite{Iooss2017}.}

\begin{table*}[h]
	\begin{center}
		\centering
		\caption{Performances of methods before and after RPIE for Morokoff \& Caflisch function; here  $1-\alpha=90\%$}
		\label{tab:Morokoff_Results}
		\begin{tabular}{ l c c c c c}
			\toprule &  \multicolumn{2}{c}{Before RPIE} & \multicolumn{2}{c}{After RPIE} & Full-Bayesian \\ 
			\toprule 
			\addlinespace 
			& MLE  & MSE-CV & MLE  & MSE-CV & - \\ 
			\midrule 
			\addlinespace 
			 $Q^2$ &  0.892 &  0.895  &  n.c  &  n.c &  0.891  \\ 
			\addlinespace 
			\addlinespace 
			$\mathbb{\tilde P}_{1-\alpha}$ & 93.6 &  98.3  &  90.0  &  90.0  & 93.8\\
			$\text{CP}_{1-\alpha}$ & 94.0 &  98.0  &  92.6  &  87.3  &  93.3 \\
			\addlinespace 
			\addlinespace
			$\text{MPIW}_{1-\alpha}$ & $1.68 \ 10^{-1}$ & $1.81 \ 10^{-1}$  &  $5.51 \ 10^{-2}$ &  $5.78 \ 10^{-2}$ & $1.66 \ 10^{-1}$ \\
			$\text{SdPIW}_{1-\alpha}$  & $9.61 \ 10^{-3}$ &  $4.16 \ 10^{-2}$ & $1.29 \ 10^{-2}$  & $1.41 \ 10^{-2}$  &  $9.27 \ 10^{-3}$ \\
			\addlinespace 
			\addlinespace
			Ct &  1min 16s  & 24min 18s  & 3min 55s  & 27min 43s &  4h 43min 38s \\ 
			\bottomrule
		\end{tabular}
	\end{center}
	\begin{tablenotes}[flushleft]
	\scriptsize
	\item $Q^2$: Accuracy; $\mathbb{\tilde P}_{1-\alpha}$: The Leave-One-Out CP in \%  on the training set; $\text{CP}_{1-\alpha}$: CP in \% on the testing set; $\text{MPIW}$: Mean of Prediction Interval widths; $\text{SdPIW}$: standard deviation of Prediction Interval widths and Ct: computational time.
	\end{tablenotes}
\end{table*}

{The model is not well-specified as Example 1 and the \citet{Shapiro65} test gives $p$-value $=1.253 \ 10^{-7}$. Table \ref{tab:Morokoff_Results} summarizes the results of MLE and MSE-CV estimations before and after applying the RPIE, compared with the Full-Bayesian approach. The accuracy $Q^2$ of both models is satisfactory and is {slightly} improved when using the MSE-CV method. However, before applying the RPIE, the Prediction Intervals are overestimated for both models. The CP does not correspond to the required level of $90\%$, and the MSE-CV model performs even worse. We note that the Full-Bayesian approach does not improve the quality of estimated Prediction Intervals for the same reason as explained before: the hyperparameters' posterior distribution $p(\sigma^2, \vtheta \mid \vy)$ is concentrated around the MLE estimator and the performances of both approaches are similar. We will see that this claim is also valid in Example 3.}

\begin{figure}[h]
    \centering
    \begin{subfigure}[b]{0.45\textwidth}
        \centering
        \includegraphics[width=\textwidth]{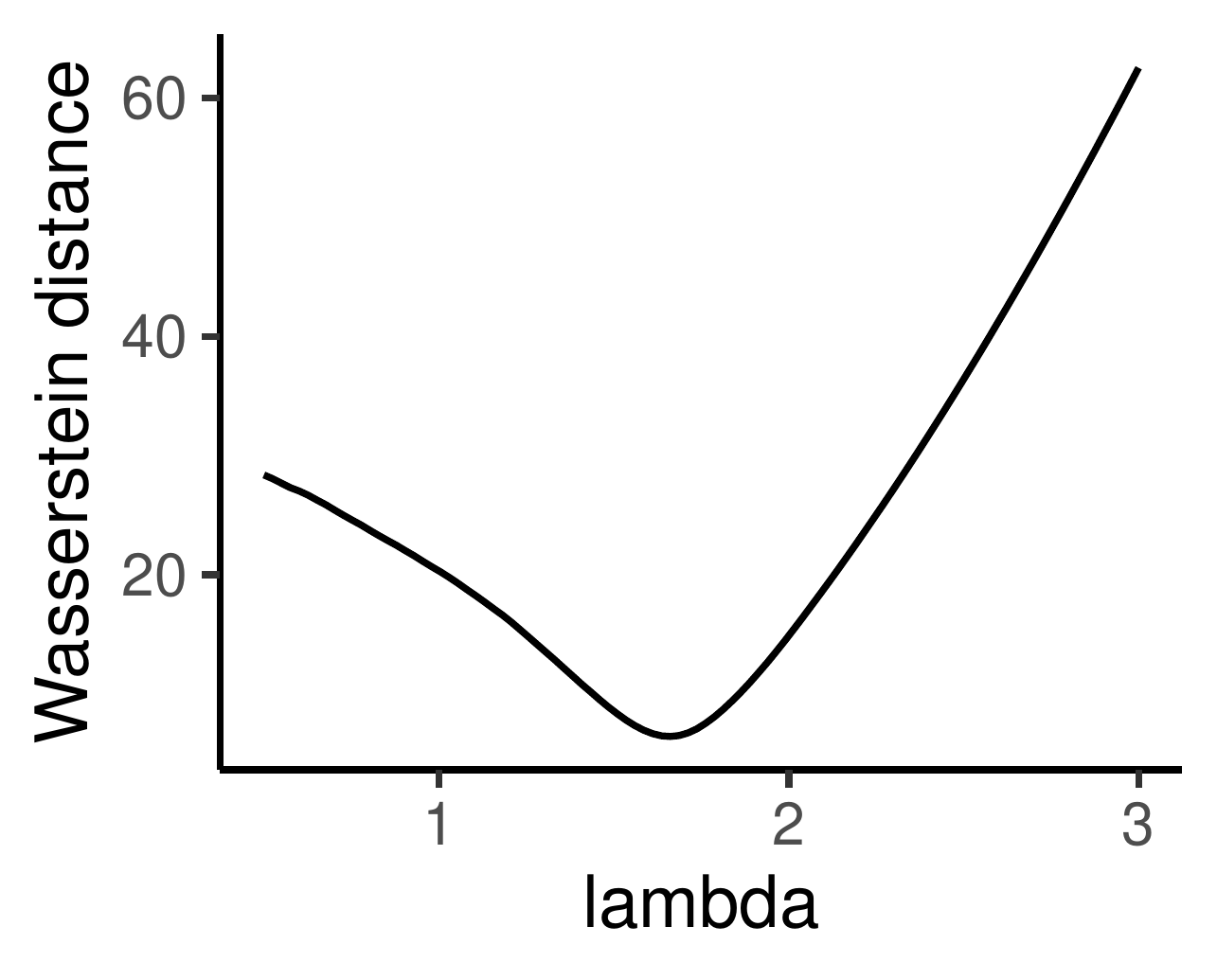}
        \caption{MLE solution $\vtheta_0=\hat{\vtheta}_{ML}$}
        \label{fig:ML_Solution}
    \end{subfigure}
    \hfil
    \begin{subfigure}[b]{0.45\textwidth}
        \centering
        \includegraphics[width=\textwidth]{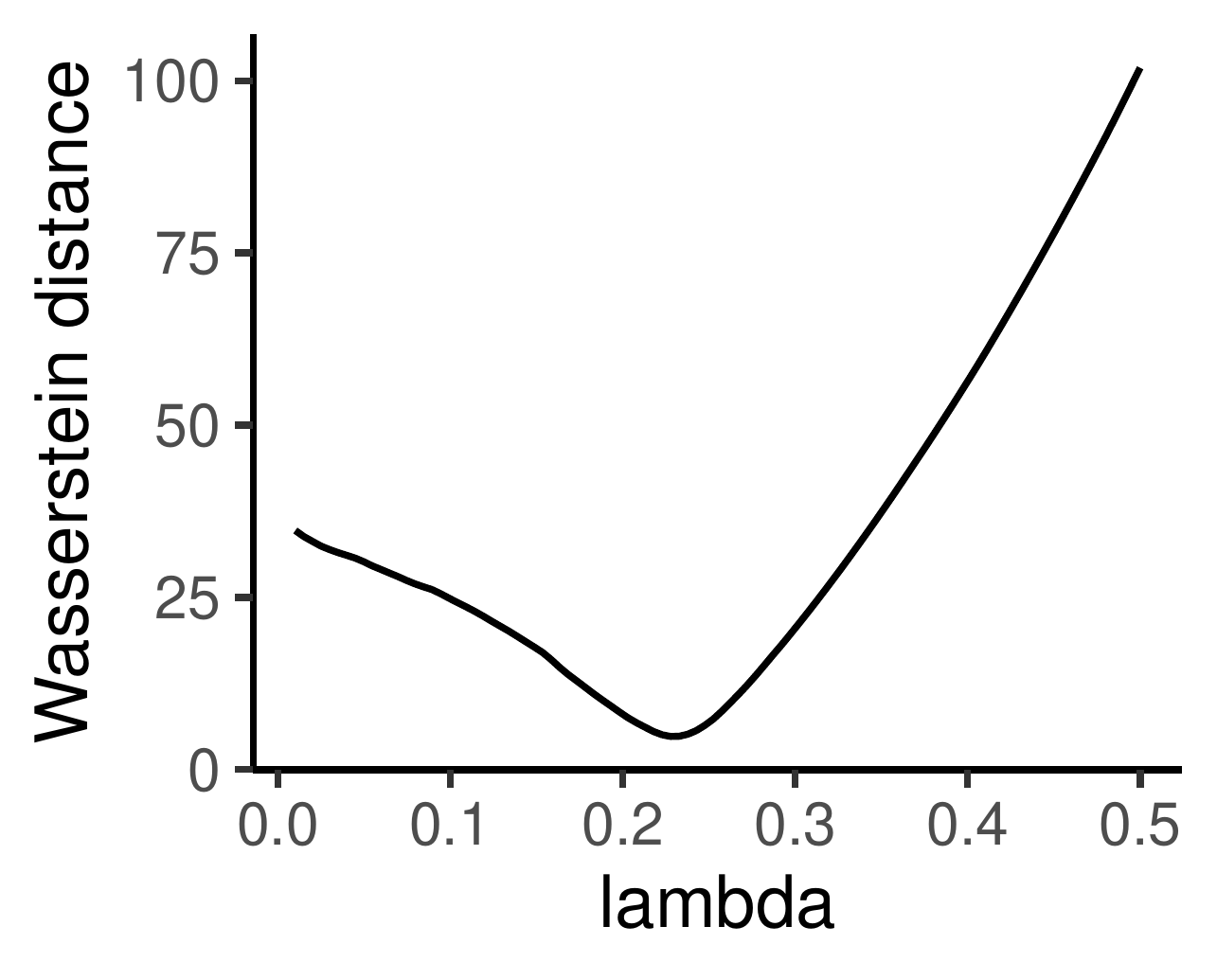}
        \caption{MSE-CV solution $\vtheta_0=\hat{\vtheta}_{MSE}$ }
        \label{fig:MSE_solution}
    \end{subfigure}
    \caption{The variation of the \textit{relaxed} Wasserstein distance $\mathcal{L}$ for \textit{Morokoff \& Caflisch} function; $ a= 1-\alpha/2 = 95\%$ }
    \label{fig:shifting_Morokoff}
\end{figure}

{We now address the problem of Prediction Intervals Estimation for each solution of MLE $\hat{\vtheta}_{ML}$ and MSE-CV $\hat{\vtheta}_{MSE}$. We consider the upper and lower bounds $1-\alpha/2=95\%$ and $\alpha/2=5\%$ and we apply the RPIE method as described in section \ref{sec4:PIforGP}. The optimal values $\overline{\lambda}^*$ and $\underline{\lambda}^*$ obtained from the resolution of the problem (\ref{prob:relax_dWass}) are used to build two GP models to estimate each bound. Figure \ref{fig:shifting_Morokoff} shows the variation of the function $\mathcal{L}$ for \textit{Morokoff \& Caflisch} example while solving the problem (\ref{prob:relax_dWass}) on the upper bound $1-\alpha/2=95\%$, it illustrates the statement of Proposition \ref{prop:prop3} : $\mathcal{L}$ is continuous and coercive on $(0,+\infty)$ and reaches a global minimum.}

{We consider now the Prediction Intervals built according to the RPIE method. In Table \ref{tab:Morokoff_Results}, one observes that these Prediction Intervals are three times shorter than those of MLE, MSE-CV models or Full-Bayesian approaches and have appropriate variances (e.g. more heterogeneous than MLE or Full-Bayesian method's Prediction Intervals). The coverage rate of $1-\alpha=90\%$ on the training set is achieved, which is the main objective of the RPIE method, and the CP on the testing set is very close to this level. Concerning the computational time, it appears that applying the RPIE method to MLE or MSE-CV solutions counts for a short computational time (only a few minutes to run in this example). The Full-Bayesian approach is still computationally heavy, as already discussed in the previous example and section \ref{sec1:Intro}.}

{Example 2 is a case of misspecified model with noise in which the CP obtained by MLE, MSE-CV {and Full-Bayesian} methods are not good. The RPIE method {fulfills its purpose}: its reduces Prediction Intervals width and improves the robustness of Prediction Intervals in such a way that they achieve the optimal coverage rate.}

\subsubsection*{{Example 3: Misspecified model without noise - Zhou function - } }

{The \cite{zhou1998adaptive} function, considered initially for the numerical integration of spiky functions, is defined on $[0,1]^d$ by
\begin{equation}
\label{eq:Zhou}
    f(\vx) = \frac{10^d}{2} \Big[ \phi \Big( 10\big(\vx-\frac{1}{3}\big)\Big) + \phi \Big( 10\big(\vx-\frac{2}{3}\big)\Big) \Big],
\end{equation}
where
\begin{equation}
    \phi(\vx) = (2\pi)^{-d/2} \exp \left(-0.5 \| \vx \|^2 \right).
\end{equation}}

{In Example 3, we create an experimental design $\mathbf{X}$ similar to Example 1, containing $n=600$ and $d=10$ variables where observations $\vx^{(i)}=\left(x^{(i)}_1,\ldots,x^{(i)}_d\right)$ are sampled independently with uniform distribution over $[0,1]^d$. As the \textit{Zhou} function in (\ref{eq:Zhou}) takes some high values, we generate the response $\vy$ by applying a logarithmic transformation: 
\begin{equation}
    y^{(i)} = \log{f(\vx^{(i)})}/ (d \log{10}).
\end{equation}}

{Note that there is no measurement noise here. We will address two situations: In the first setting, we assume that we know that there is no measurement noise, we impose that there is no nugget effect in the model $\sigma^2_{\epsilon}=0$ and we consider the Exponential anisotropic geometric correlation model ($\nu=1/2$) as covariance model. In the second setting, we assume that we do not know whether there is measurement noise and we estimate the nugget effect of the model. We consider consequently the Mat\'ern 3/2 anisotropic geometric correlation model ($\nu=3/2$), a reasonable choice for a smooth covariance model when assuming a nugget effect (See \ref{appendix:B} for further discussion).}

\begin{table*}[h!]
	\begin{center}
		\centering
		\caption{Performances of methods for \textit{Zhou} function (\ref{eq:Zhou}) in the first setting ($\sigma^2_{\epsilon}=0$) ; here $1-\alpha=90\%$}
		\label{tab:Zhou_Results}
        \begin{tabular}{ l c c c c c}
			\toprule &  \multicolumn{2}{c}{Before RPIE} & \multicolumn{2}{c}{After RPIE} & Full-Bayesian \\ 
			\toprule 
			\addlinespace 
			& MLE  & MSE-CV & MLE  & MSE-CV & - \\ 
			\midrule 
			\addlinespace 
			 $Q^2$ &   0.947 &  0.947  &  n.c  &  n.c &  0.948  \\ 
			\addlinespace 
			\addlinespace 
			$\mathbb{\tilde P}_{1-\alpha}$ & 92.0 &  42.1  &  90.0  &  90.0  & 92.0\\
			$\text{CP}_{1-\alpha}$ & 92.7 &  45.3  &  90.0  &  88.0  &  92.9 \\
			\addlinespace 
			\addlinespace
			$\text{MPIW}_{1-\alpha}$ & $4.60 \ 10^{-1}$ & $1.46 \ 10^{-1}$  &  $4.35 \ 10^{-1}$ &  $4.32 \ 10^{-1}$ & $4.59 \ 10^{-1}$ \\
			$\text{SdPIW}_{1-\alpha}$  & $1.06 \ 10^{-1}$ &  $3.48 \ 10^{-2}$ & $1.00 \ 10^{-1}$  & $1.00 \ 10^{-1}$  &  $1.08 \ 10^{-1}$ \\
			\addlinespace 
			\addlinespace
			Ct &  10s  & 31min 2s  & 2min 31s  & 33min 32s &  4h 56min 15s \\

			\bottomrule
		\end{tabular}
    \end{center}
    \begin{tablenotes}[flushleft]
	\scriptsize
	\item $Q^2$: Accuracy; $\mathbb{\tilde P}_{1-\alpha}$: The Leave-One-Out CP in \%  on the training set; $\text{CP}_{1-\alpha}$: The CP in \% on the testing set; $\text{MPIW}$: Mean of Prediction Interval widths; $\text{SdPIW}$: standard deviation of Prediction Interval widths and Ct: computational time.
    \end{tablenotes}
\end{table*}	

{In Table \ref{tab:Zhou_Results}, the models are good in terms of accuracy $Q^2$ with a small advantage for the {Full-Bayesian} approach, but none of them satisfies the required level of CP, especially the MSE-CV model with an extremely low CP. As we do not estimate the nugget effect in this setting, the computational time of the MLE method is low (a few seconds) where the RPIE still takes a couple of minutes, as in Example 2. We will notice (also in the industrial application) that the computational time after the RPIE method is generally twice to three times the computational time of MLE method when there is a nugget effect.}

{When proceeding similarly as Example 2 to build robust Prediction Intervals by the RPIE model, the result is striking in Table \ref{tab:Zhou_Results}: The estimated Prediction Intervals for the MSE-CV solution $\hat{\vtheta}_{MSE}$ after RPIE are now four times larger, meaning that the amplitude $\hat{\sigma}^2_{MSE}$ was largely underestimated. Table \ref{tab:Zhou_Results} also shows that the CPs for the testing set are close to their desired value $1-\alpha=90\%$.}

\begin{table*}[h!]
	\begin{center}
		\centering
		\caption{Performances of methods for \textit{Zhou} function (\ref{eq:Zhou}) in the second setting ($\hat{\sigma}^2_{\epsilon}=1.71 \ 10^{-2}$) ; here $1-\alpha=90\%$}
		\label{tab:Zhou_Results_Nugget}
		\begin{tabular}{ l c c c c c}
			\toprule &  \multicolumn{2}{c}{Before RPIE} & \multicolumn{2}{c}{After RPIE} & Full-Bayesian \\ 
			\toprule 
			\addlinespace 
			& MLE  & MSE-CV & MLE  & MSE-CV & - \\ 
			\midrule 
			\addlinespace 
			 $Q^2$ &  0.941 &  0.944  &  n.c  &  n.c &  0.941  \\ 
			\addlinespace 
			\addlinespace 
			$\mathbb{\tilde P}_{1-\alpha}$ & 99.4 &  100  &  90.0  &  90.0  & 99.3 \\
			$\text{CP}_{1-\alpha}$ & 99.3 &  100  &  92.0  &  85.3  &  99.6 \\
			\addlinespace 
			\addlinespace
			$\text{MPIW}_{1-\alpha}$ & $6.48\ 10^{-1}$ & $1.19$  &  $2.26 \ 10^{-1}$ &  $2.28 \ 10^{-1}$ & $6.56 \ 10^{-1}$ \\
			$\text{SdPIW}_{1-\alpha}$  & $6.88 \ 10^{-2}$ &  $2.56 \ 10^{-1}$ & $4.73 \ 10^{-2}$  & $5.27 \ 10^{-2}$  &  $6.97 \ 10^{-2}$ \\
			\addlinespace 
			\addlinespace
			Ct &  1min 20s  & 31min 22s  & 3min 39s  & 33min 37s & 4h 25min 59s \\
			\bottomrule
		\end{tabular}
    \end{center}
    \begin{tablenotes}[flushleft]
	    \scriptsize
	    \item $Q^2$: Accuracy; $\mathbb{\tilde P}_{1-\alpha}$: The Leave-One-Out CP in \%  on the training set; $\text{CP}_{1-\alpha}$: The CP in \% on the testing set; $\text{MPIW}$: Mean of Prediction Interval widths; $\text{SdPIW}$: standard deviation of Prediction Interval widths and Ct: computational time.
    \end{tablenotes}
\end{table*}

{In the second setting, the nugget effect is estimated to  $\hat{\sigma}^2_{\epsilon}=1.71 \ 10^{-2}$ by using cite{Iooss2017}. The results of MLE, MSE-CV and Full-Bayesian methods are shown in Table \ref{tab:Zhou_Results_Nugget}. The accuracy is still satisfying and similar to the previous setting, but the CP is close to $100\%$, meaning that the Prediction Intervals of all three methods are overestimated. Table \ref{tab:Zhou_Results_Nugget} shows that, with the RPIE method, we reduce Prediction Intervals width, five times shorter than Prediction Intervals of the MSE-CV solution, and three shorter than Prediction Intervals of the MLE solution. The variances of the obtained Prediction Intervals are between MLE and MSE-CV Prediction Intervals variances. One can notice also a decrease of $50\%$ of the MPIW compared to the first setting, while maintaining an optimal coverage of $1-\alpha=90\%$.}

{Example 3 illustrates a case of misspecified model without noise where the RPIE method adjusts Prediction Intervals width and improves the robustness of Prediction Intervals so that the CP is respected. One can also conclude that it is preferable to consider a nugget effect for shorter Prediction Intervals and optimal coverage}.

\subsection{Application to Gas production for future wells}
\label{Section:5.2}

{In this section, we illustrate the interest of the RPIE method in energy production forcasting. It includes many industrial applications such as battery capacity, wind turbine, solar panel performance or, more specifically, unconventional gas wells where a decline in production may be observed. We show that the RPIE can estimate robust Prediction Intervals, covering the lower bounds of level $\alpha/2=10\%$ (pessimistic scenario) and the upper bounds of level $1-\alpha/2 =90\%$ (optimistic scenario).}

{Indeed, a fundamental challenge of Oil and Gas companies is to predict their assets and their production capacities in the future. It drives both their exploration and development strategy. However, forecasting a well future production is challenging because subsurface reservoirs properties are never fully known. This makes estimating well production with their associated uncertainty a crucial task. The agencies \textbf{PRMS} and \textbf{SEC} \citep{SEC_ref,PRMS_ref} define specific rules \textbf{1P}/\textbf{2P}/\textbf{3P} for reserves estimates based on quantile estimates:
\begin{itemize}
    \item \textbf{1P}: 90\% of wells produce more than \textbf{1P} predictions \textbf{(proven)}.
    \item \textbf{2P}: 50\% of wells produce more than \textbf{2P} predictions \textbf{(probable)}.
    \item \textbf{3P}: 10\% of wells produce more than \textbf{3P} predictions \textbf{(possible)}. 
\end{itemize}}

{These rules are to be disclosed to security investors for publicly traded Oil and Gas companies and aim to provide investors with consistent information and associated value assessments. Many Machine Learning algorithms have shown their efficiency in estimating the median \textbf{2P} (e.g. using GP with MLE method, or MSE-CV if interested more in point-wise predictions) but failed to estimate \textbf{1P} and \textbf{3P}. Thus, the objective of this study is to build a proper estimation of the quantiles $p_{90\%}$ and $p_{10\%}$ by applying the RPIE method described in section \ref{sec4:PIforGP}.}

{Our dataset, \textit{field data}, is derived from unconventional wells localized in the \textit{Utica} shale reservoir, located in the north-east of the United States. It contains approximately $n=1850$ wells and $d=12$ variables, including localization, Cumulative Production of {natural gas over 12 months in MCFE}, completion design and exploitation conditions. The raw dataset can be found at the Ohio Oil \& Gas well locator of the Ohio Department of Natural Resources \citep{Ohiodnr}.}

	\begin{table*}[h!]
	\begin{center}
		\centering
		\caption{Results obtained for GP model, Random Forest and Gradient Boosting; here $1-\alpha=80\%$.}
		\label{tab:Production_Results}
		\begin{tabular}{ l c c c}
			\toprule 
			\addlinespace 
			& MLE & Random Forest & XGBoost \\ 
			\midrule 
			\addlinespace 
			$Q^2$   &  0.872  &  0.870  &  0.885  \\ 
			$\text{CP}_{1-\alpha}$  &  92.8  &  98.1  &  49.8 \\ 

			\addlinespace 
		    $\text{MPIW}_{1-\alpha}$ & 1.18  & 1.52 & 0.48 \\
		    $\text{SdPIW}_{1-\alpha}$ & 0.21  & 0.29 & 0.22\\
		    
		    \addlinespace 
		    Ct &  14min 37s   &  2s  &  1min 36s  \\
			\bottomrule
			\addlinespace 
		\end{tabular}
	\end{center}
	\begin{tablenotes}[flushleft]
	\scriptsize
	\item $Q^2$: Accuracy; $\text{CP}$: The CP in \% on validation set I; $\text{MPIW}$: Mean of Prediction Interval widths; $\text{SdPIW}$: standard deviation of Prediction Interval widths and Ct: computational time.
	\end{tablenotes}
\end{table*}


{We standardized the data $(\mathbf{X},\vy)$ and we divided into a $60\%-20\%-20\%$ partition of three datasets: a training set and two validation sets. The response $\vy$ (Cumulative Production over 12 months in MCFE) is noisy due to the uncertainty of the reservoir parameters in the field. The nugget effect $\sigma^2_{\epsilon}$ is unknown but estimated to $\widehat{\sigma}^2_{\epsilon}=0.16$ using the method of \cite{Iooss2017}.}


{Based on results drawn from the previous subsection and for practical reasons (particularly the computational cost of methods), we will present only the application of the RPIE method on the MLE solution. Table \ref{tab:Production_Results} shows the performances of the GP model trained by MLE compared with two other statistical models: Random Forest and Gradient Boosting whose Prediction Intervals are estimated using the Bootstrap method. Here we consider the  Prediction Intervals of level $1-\alpha=80\%$: the lower bound is the 10\% quantile ($p_{10\%}$) and the upper bound the 90\% quantile ($p_{90\%}$) of the predictive distribution.}

{The accuracy of the MLE model is $0.873$ and has approximately the same accuracy as other models like Random Forest or Gradient Boosting. Furthermore, the CP of the Prediction Intervals of $1−\alpha = 80\%$ is not satisfactory but it is quite \textit{reasonable} for MLE model compared to Random Forest (overestimated Prediction Intervals) or Gradient Boosting (underestimated Prediction Intervals). Finally, it appears that the GP model requires some computing resources to be built and to estimate its hyperparameters by MLE method.}

\begin{table*}[!ht]
	\begin{center}
		\centering
		\caption{Obtained results before and after RPIE method; here $1-\alpha=80\%$.}
		\label{tab:Mean_sd_CredibleIntervals}
		
		\begin{tabular}{ l  c  c }
			\toprule 
			& MLE before RPIE  & MLE after RPIE  \\ 
			\midrule 
			$\mathbb{\tilde P}_{1-\alpha}$ & 90.9  & 79.9 \\
			\addlinespace 
			$\text{CP}^{\mathrm{Val,1}}_{1-\alpha}$  &  92.6  &  81.0 \\ 
			$\text{MPIW}^{\mathrm{Val,1}}_{1-\alpha}$ & $1.18$  & $1.06$ \\
			$\text{SdPIW}^{\mathrm{Val,1}}_{1-\alpha}$  & $2.09 \ 10^{-1}$ &  $8.25 \ 10^{-3}$  \\
			\addlinespace 
			$\text{CP}^{\mathrm{Val,2}}_{1-\alpha}$  & 94.1  &  83.2\\
			$\text{MPIW}^{\mathrm{Val,2}}_{1-\alpha}$ & $1.17$  & $1.06$ \\
			$\text{SdPIW}^{\mathrm{Val,2}}_{1-\alpha}$  & $1.68 \ 10^{-1}$ &  $7.00 \ 10^{-3}$  \\
			\addlinespace 
			Ct  &  14min 37s   &  59min 25s \\
			\bottomrule
			\addlinespace 
		\end{tabular}
	\end{center}
	\begin{tablenotes}[flushleft]
		\scriptsize
		\item $\text{CP}^{\mathrm{Val,1}}_{1-\alpha}$ (resp. $\text{CP}^{\mathrm{Val,2}}_{1-\alpha}$) : The CP in \% on Validation set I (resp. Validation set II); $\text{MPIW}^{\mathrm{Val,1}}_{1-\alpha}$ (resp. $\text{MPIW}^{\mathrm{Val,2}}_{1-\alpha}$): Mean of Prediction Interval widths on Validation set I (resp. Validation set II); $\text{SdPIW}^{\mathrm{Val,1}}_{1-\alpha}$ (resp. $\text{SdPIW}^{\mathrm{Val,2}}_{1-\alpha}$): standard deviation of Prediction Interval widths on Validation set I (resp. Validation set II) and Ct: computational time.
	\end{tablenotes}
\end{table*}

{In the following, we define the MLE's solution as reference $\vtheta_0=\hat{\vtheta}_{ML}$ in the optimization problem (\ref{prob:relax_dWass}) for the quantiles $\alpha/2=10\%$ and $1-\alpha/2=90\%$ and we build robust Prediction Intervals confidence level $1-\alpha=80\%$ with the RPIE method. The results are presented in Table \ref{tab:Mean_sd_CredibleIntervals}. When considering the estimated Prediction Intervals by the RPIE method, we can see the CP is optimal for the training set and is close to $1-\alpha=80\%$ for both validation sets. Therefore, we fulfil the objective of estimating the upper and lower bounds, the obtained quantiles $p_{90\%}$ and $p_{10\%}$ respect \textbf{1P} and \textbf{3P} rules as mentioned above. 
Finally, in Figure \ref{fig:MLE_model}, we present the estimated Prediction Intervals defined by the upper bounds $P90$ and lower bounds $P10$ against the true values of $\vy$ on Validation set I. The x-axis designs well's indices ordered with respect to the barycenters of the Prediction Intervals (engineers choose this representation for interpretation purposes).
We can see that the estimated Prediction Intervals by the MLE method are not homogeneous, and some of them are longer. The RPIE method makes them shorter and more homogeneous as it can be seen in Figure \ref{fig:P50_model}, and in the evolution of the standard deviation width $\text{SdPIW}$ in Table \ref{tab:Mean_sd_CredibleIntervals}.}

{In a second attempt and following the engineers' recommendation, we consider a logarithmic transformation to the raw response $\vy$ to avoid having non-positive lower bounds and integrate heterogeneity between performant and less performant well. The accuracy of the MLE method decreases now to $Q^2=0.615$, the MLE method still over estimates Prediction Intervals as it can be seen in Table \ref{tab:Results_logProd}. Most claims of the previous analysis remain true, in particular we can clearly see (also in Figures \ref{fig:MLE_logmodel} and \ref{fig:P50_logmodel}) that Prediction Intervals obtained by RPIE are shorter and have reduced standard-deviations.}

\begin{figure}[ht!]
    \centering
    \begin{subfigure}[b]{0.45\textwidth}
        \centering
        \includegraphics[width=5.8cm,height=7cm]{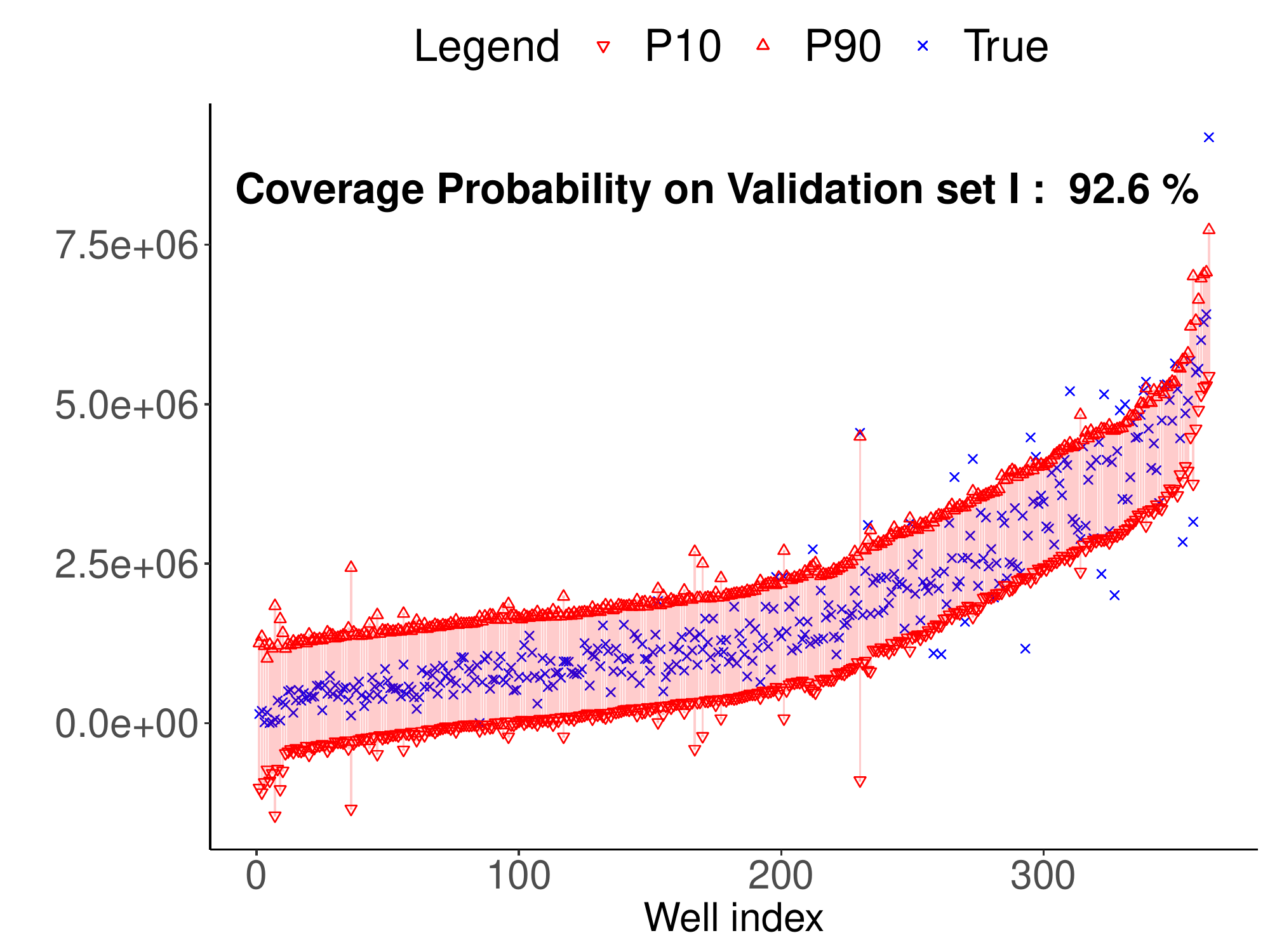}
        \caption{Before the RPIE on standardized output}
        \label{fig:MLE_model}
    \end{subfigure}
    \hfil
    \begin{subfigure}[b]{0.45\textwidth}
        \centering
        \includegraphics[width=5.8cm,height=7cm]{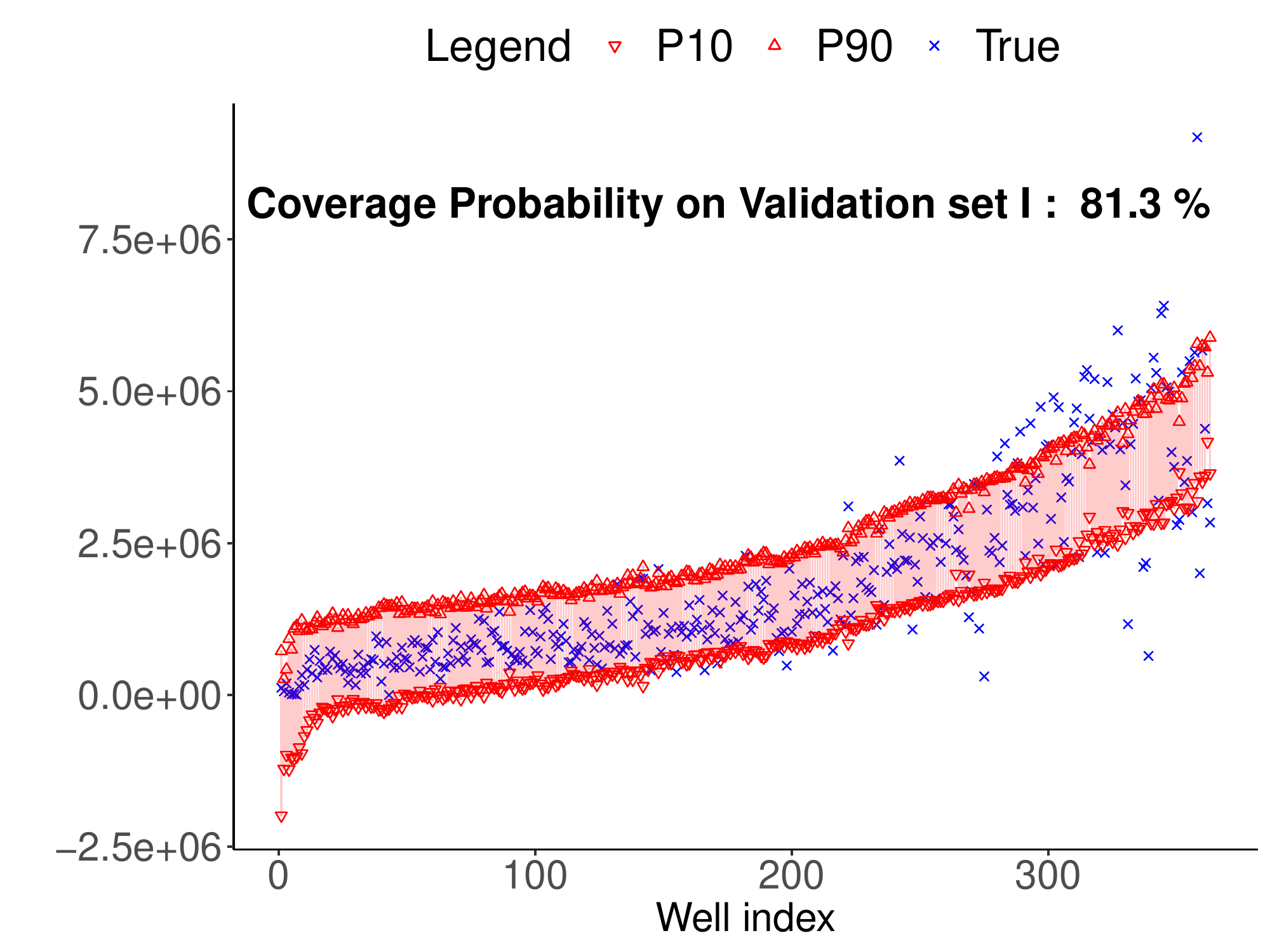}
        \caption{After the RPIE on standardized output}
        \label{fig:P50_model}
    \end{subfigure}
        \begin{subfigure}[b]{0.45\textwidth}
        \centering
        \includegraphics[width=5.8cm,height=7cm]{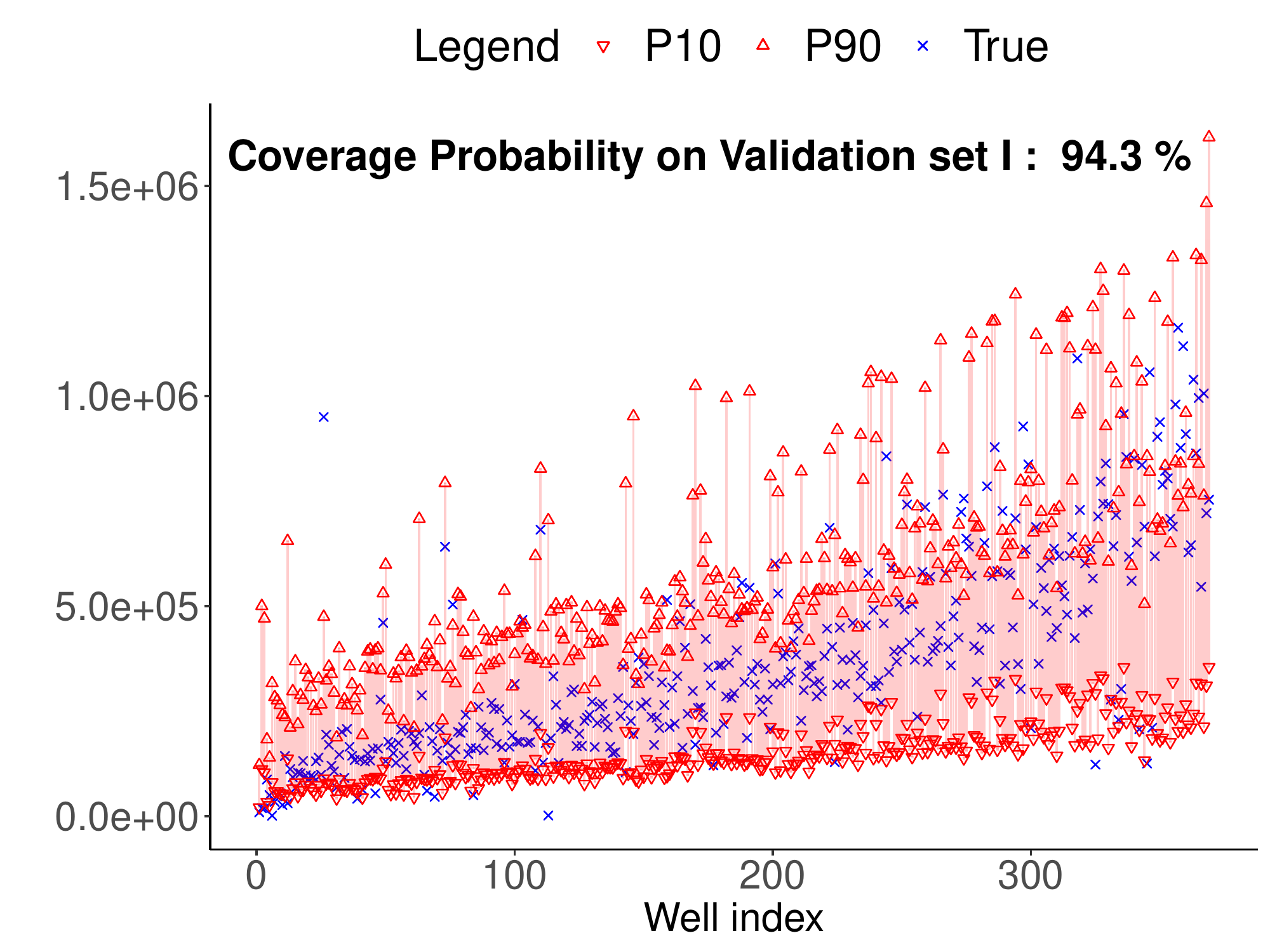}
        \caption{Before the RPIE on log output}
        \label{fig:MLE_logmodel}
    \end{subfigure}
    \hfil
    \begin{subfigure}[b]{0.47\textwidth}
        \centering
        \includegraphics[width=5.8cm,height=7cm]{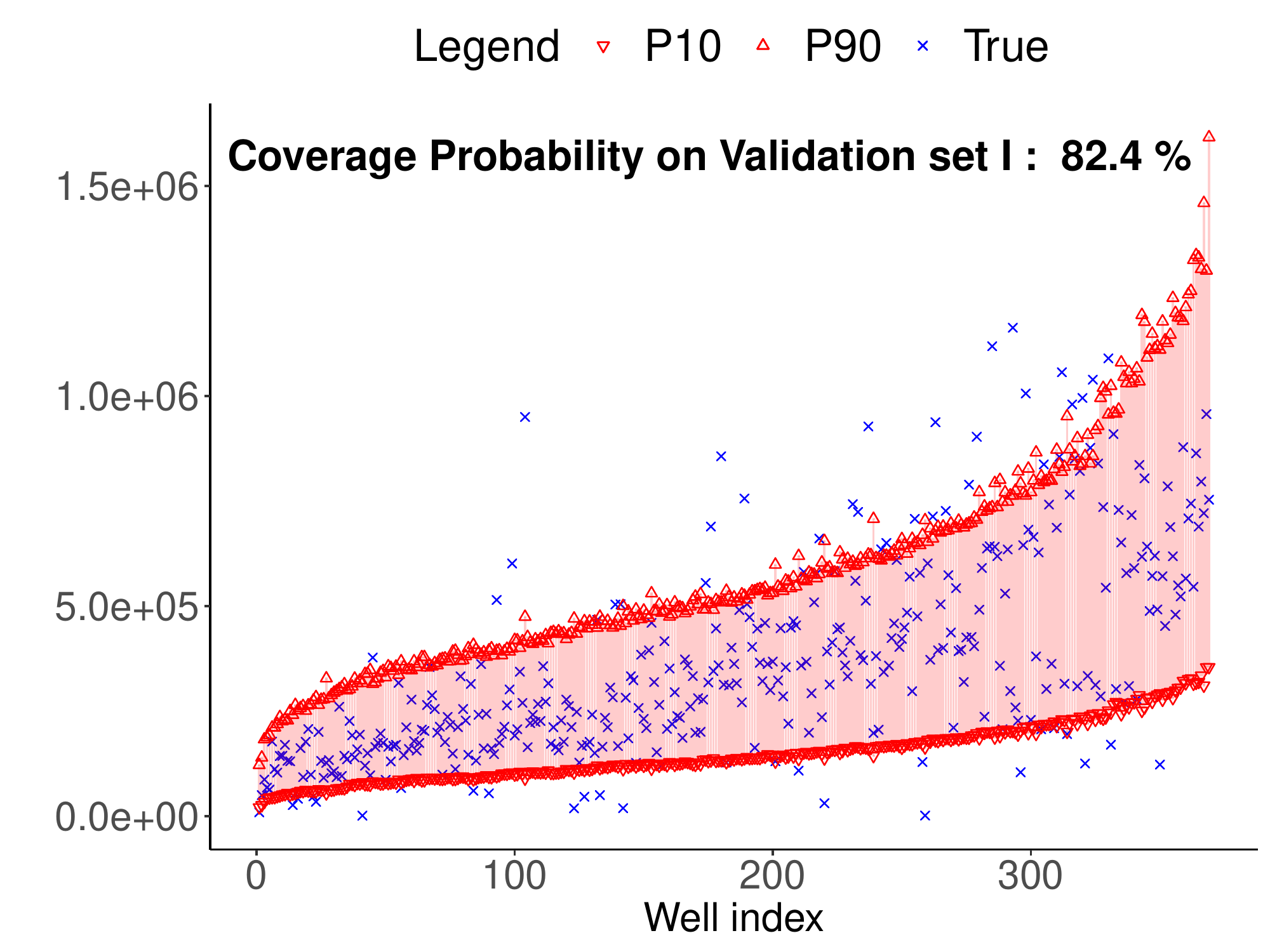}
        \caption{After the RPIE on log output}
        \label{fig:P50_logmodel}
    \end{subfigure}
    \caption{ Production data after re-scaling: True values vs $80\%$ confidence Prediction Intervals}
    \label{fig:Production_True_vs_Predicted}
\end{figure}

\begin{table*}[!ht]
	\begin{center}
		\centering
		\caption{Obtained results before and after RPIE method; $1-\alpha=80\%$. Here the output data are log-transformed.}
		\label{tab:Results_logProd}
		
		\begin{tabular}{ l  c  c }
			\toprule 
			& MLE before RPIE  & MLE after RPIE  \\ 
			\midrule 
			$\mathbb{\tilde P}_{1-\alpha}$ & 91.1  & 79.9 \\
			\addlinespace 
			$\text{CP}^{\mathrm{Val,1}}_{1-\alpha}$  &  94.3  &  83.2 \\ 
			$\text{MPIW}^{\mathrm{Val,1}}_{1-\alpha}$ & $1.53$  & $1.40$ \\
			$\text{SdPIW}^{\mathrm{Val,1}}_{1-\alpha}$  & $2.20 \ 10^{-1}$ &  $1.40 \ 10^{-2}$  \\
			\addlinespace 
			$\text{CP}^{\mathrm{Val,2}}_{1-\alpha}$  & 90.4  &  76.6\\
			$\text{MPIW}^{\mathrm{Val,2}}_{1-\alpha}$ & $1.54$  & $1.40$ \\
			$\text{SdPIW}^{\mathrm{Val,2}}_{1-\alpha}$  & $1.92 \ 10^{-1}$ &  $1.42 \ 10^{-2}$  \\
			\addlinespace 
			Ct  &  17min 47s   & 53min 21s \\
			\bottomrule
			\addlinespace 
		\end{tabular}
	\end{center}
	\begin{tablenotes}[flushleft]
		\scriptsize
		\item $\text{CP}^{\mathrm{Val,1}}_{1-\alpha}$ (resp. $\text{CP}^{\mathrm{Val,2}}_{1-\alpha}$) : The CP in \% on Validation set I (resp. Validation set II); $\text{MPIW}^{\mathrm{Val,1}}_{1-\alpha}$ (resp. $\text{MPIW}^{\mathrm{Val,2}}_{1-\alpha}$): Mean of Prediction Interval widths on Validation set I (resp. Validation set II); $\text{SdPIW}^{\mathrm{Val,1}}_{1-\alpha}$ (resp. $\text{SdPIW}^{\mathrm{Val,2}}_{1-\alpha}$): standard deviation of Prediction Interval widths on Validation set I (resp. Validation set II) and Ct: computational time.
	\end{tablenotes}
\end{table*}

\section{Conclusion}
\label{Conclusion}

In this paper, we have introduced a new approach for Prediction Intervals estimation based on the Cross-Validation method. We use the Gaussian Processes model because the predictive distribution at a new point is completely characterized by Gaussian {distribution}. We address an optimization problem for model’s hyperparameters estimation by considering the notion of Coverage Probability. The optimal hyperparameters are identified by minimizing the Wasserstein distance between the Gaussian distribution with the hyperparameters determined by Cross-Validation, and the Gaussian distribution with hyperparameters achieving the desired Coverage Probability. This method is relevant when the model is misspecified. It insures an optimal Leave-One-Out Coverage Probability for the training set. It also achieves a reasonable Coverage Probability for the validation set when it is available. It can be also extended to other statistical models with a predictive distribution but more detailed work is needed to consider the influence of hyperparameters on Prediction Interval's coverage and solve the optimization problem more efficiently in these cases. {Finally, it should be possible to include categorical inputs in the covariance function by using group kernels \citep{Roustant2020}, which would extend the application range of the RPIE method.}

\section*{Acknowledgements}

The author$^{(1)}$ would like to thank Achraf Ourir, Zinyat Agharzayeva (TotalEnergies SE - La D\'efense, France), Daniel Busby (TotalEnergies SE, CSTJF - Pau, France) and the research laboratory SINCLAIR (IA Commun Lab - Saclay, France) for useful discussions and helpful suggestions. This work was supported by TotalEnergies and the French National Agency for Research and Technology (ANRT).

\bibliography{mybibfile}

\clearpage

\appendix

\section{Proofs of Propositions \ref{prop:prop1} - \ref{prop:prop3}.}
\label{appendix:A}

\subsection{Preliminary lemmas}

\begin{lemma}
\label{lemma1}
    Let $\mathbf{F}$ be a full rank matrix (hypothesis $\mathcal{H}_1$), let $\mathbf{K}$ be a positive definite matrix and let $\overline{\mathbf{K}}$ defined by $ \overline{\mathbf{K}} = \mathbf{K}^{-1} \left( \mathbf{I}_n - \mathbf{F}\left(\mathbf{F}^{\top} \mathbf{K}^{-1} \mathbf{F}\right)^{-1} \mathbf{F}^{\top} \mathbf{K}^{-1} \right)$ then {\rm Ker }$\overline{\mathbf{K}}$ = {\rm Im }$\mathbf{F}$ and $\overline{\mathbf{K}}$ is singular.
\end{lemma}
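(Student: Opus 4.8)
The plan is to recognize $\overline{\mathbf{K}}$ as $\mathbf{K}^{-1}$ composed with an oblique projector, and then to exploit the elementary fact that the fixed-point subspace of any idempotent matrix coincides with its range. First I would set $\mathbf{P} := \mathbf{F}\left(\mathbf{F}^{\top}\mathbf{K}^{-1}\mathbf{F}\right)^{-1}\mathbf{F}^{\top}\mathbf{K}^{-1}$, noting that the inner inverse is well-defined: since $\mathbf{K}^{-1}$ is positive definite and $\mathbf{F}$ has full column rank by $\mathcal{H}_1$, the $p\times p$ matrix $\mathbf{F}^{\top}\mathbf{K}^{-1}\mathbf{F}$ is positive definite, hence invertible. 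With this notation the given expression reads $\overline{\mathbf{K}} = \mathbf{K}^{-1}\left(\mathbf{I}_n - \mathbf{P}\right)$.

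Next I would verify that $\mathbf{P}$ is a projector with range $\mathrm{Im}\,\mathbf{F}$. A direct computation gives $\mathbf{P}^2 = \mathbf{P}$, because the middle factor $\mathbf{F}^{\top}\mathbf{K}^{-1}\mathbf{F}$ cancels against its inverse; moreover $\mathbf{P}\mathbf{F} = \mathbf{F}$, so $\mathbf{P}$ restricts to the identity on $\mathrm{Im}\,\mathbf{F}$. Consequently $\mathrm{Im}\,\mathbf{P} = \mathrm{Im}\,\mathbf{F}$: the inclusion $\mathrm{Im}\,\mathbf{P} \subseteq \mathrm{Im}\,\mathbf{F}$ is immediate from the leading factor $\mathbf{F}$ in $\mathbf{P}$, and the reverse inclusion follows from $\mathbf{P}\mathbf{F} = \mathbf{F}$.

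Then, because $\mathbf{K}^{-1}$ is invertible, I would use $\mathrm{Ker}\,\overline{\mathbf{K}} = \mathrm{Ker}\,\left(\mathbf{I}_n - \mathbf{P}\right)$. For an idempotent $\mathbf{P}$, the kernel of $\mathbf{I}_n - \mathbf{P}$ is exactly the set of vectors fixed by $\mathbf{P}$, which equals $\mathrm{Im}\,\mathbf{P}$; combining this with the previous step yields $\mathrm{Ker}\,\overline{\mathbf{K}} = \mathrm{Im}\,\mathbf{F}$. Finally, since $\mathbf{F}$ has $p \geq 1$ linearly independent columns, $\mathrm{Im}\,\mathbf{F}$ has dimension $p \geq 1$, so $\mathrm{Ker}\,\overline{\mathbf{K}} \neq \{\mathbf{0}\}$ and $\overline{\mathbf{K}}$ is singular.

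There is no serious analytic obstacle here; the only thing to be careful about is the bookkeeping of the two identities $\mathbf{P}^2 = \mathbf{P}$ and $\mathbf{P}\mathbf{F} = \mathbf{F}$, together with the invocation of $\mathcal{H}_1$ to guarantee the invertibility of $\mathbf{F}^{\top}\mathbf{K}^{-1}\mathbf{F}$. The conceptual key step, which makes the equality $\mathrm{Ker}\,\overline{\mathbf{K}} = \mathrm{Im}\,\mathbf{F}$ fall out cleanly, is the identification of $\mathrm{Ker}\left(\mathbf{I}_n - \mathbf{P}\right)$ with $\mathrm{Im}\,\mathbf{P}$ for an idempotent $\mathbf{P}$.
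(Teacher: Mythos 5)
Your proof is correct and follows essentially the same route as the paper's: both arguments factor out the invertible $\mathbf{K}^{-1}$, verify the inclusion $\mathrm{Im}\,\mathbf{F} \subseteq \mathrm{Ker}\,\overline{\mathbf{K}}$ by the cancellation $\mathbf{P}\mathbf{F} = \mathbf{F}$, and obtain the reverse inclusion by noting that $(\mathbf{I}_n - \mathbf{P})\vx = \mathbf{0}$ forces $\vx = \mathbf{P}\vx \in \mathrm{Im}\,\mathbf{F}$, before concluding singularity from $\dim \mathrm{Im}\,\mathbf{F} = p \geq 1$. Your explicit projector packaging (idempotency and the identification of $\mathrm{Ker}(\mathbf{I}_n - \mathbf{P})$ with $\mathrm{Im}\,\mathbf{P}$) is a cosmetic reorganization of the same computation, not a different method.
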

\begin{proof}
    Let $\overline{\mathbf{K}}$ be the matrix defined above. Suppose that $\vx \in$ {\rm Im }$\mathbf{F}$, then there exists $\vy$ such that $\vx = \mathbf{F} \vy$, and $\overline{\mathbf{K}} \vx = \mathbf{K}^{-1} \left( \mathbf{F} \vy - \mathbf{F}\left(\mathbf{F}^{\top} \mathbf{K}^{-1} \mathbf{F}\right)^{-1} \mathbf{F}^{\top} \mathbf{K}^{-1} \mathbf{F} \vy \right) = \mathbf{K}^{-1} \left( \mathbf{F} \vy  - \mathbf{F} \vy  \right) = \mathbf{0}$. Thus $\vx \in$ {\rm Ker }$\overline{\mathbf{K}}$.

    If $\vx \in$ {\rm Ker }$\overline{\mathbf{K}}$, then $\mathbf{K} \ \overline{\mathbf{K}}\vx = \mathbf{0}$, and $\vx = \mathbf{F}(\mathbf{F}^{\top} \mathbf{K}^{-1} \mathbf{F})^{-1}\mathbf{F}^{\top} \mathbf{K}^{-1} \vx = \mathbf{F} \vx' \in$ {\rm Im} $\mathbf{F}$.
        
    In case of Ordinary or Universal kriging, $p=\operatorname{rank}(\mathbf{F}) = \dim (${\rm Ker} $\overline{\mathbf{K}} ) \geq 1 $ which means that $\overline{\mathbf{K}}$ is not invertible.
\end{proof}

\begin{lemma}[ \cite{DeOliveira2007}]
\label{lemma2}
    Under the hypotheses of Lemma \ref{lemma1} and given the full rank regression matrix $\mathbf{F}$, there exists a matrix $\mathbf{W} \in \sR^{n \times (n-p)}$ satisfying :
    \begin{align}
        \mathbf{W}^{\top}\mathbf{W}&=\mathbf{I}_{n-p}, \\ \mathbf{F}^{\top}\mathbf{W}&=\mathbf{O}_{p \times (n-p)},
    \end{align}
    and
    \begin{equation}
        \overline{\mathbf{K}} = \mathbf{W} \left( \mathbf{W}^{\top} \mathbf{K} \mathbf{W} \right)^{-1} \mathbf{W}^{\top}.
    \end{equation}
\end{lemma}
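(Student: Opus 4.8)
The plan is to exhibit $\mathbf{W}$ explicitly and then identify $\overline{\mathbf{K}}$ with $\mathbf{M} := \mathbf{W}(\mathbf{W}^{\top}\mathbf{K}\mathbf{W})^{-1}\mathbf{W}^{\top}$ through a uniqueness argument rather than a direct block computation. First I would construct $\mathbf{W}$: since $\mathbf{F}$ has full rank $p$ (Hypothesis $\mathcal{H}_1$), the subspace $(\mathrm{Im}\,\mathbf{F})^{\perp}$ has dimension $n-p$; choosing any orthonormal basis of it and stacking these vectors as columns gives $\mathbf{W}\in\sR^{n\times(n-p)}$. Orthonormality of the columns is exactly $\mathbf{W}^{\top}\mathbf{W}=\mathbf{I}_{n-p}$, and the fact that each column is orthogonal to $\mathrm{Im}\,\mathbf{F}$ gives $\mathbf{W}^{\top}\mathbf{F}=\mathbf{O}$, i.e. $\mathbf{F}^{\top}\mathbf{W}=\mathbf{O}_{p\times(n-p)}$. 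I would then note that $\mathbf{W}^{\top}\mathbf{K}\mathbf{W}$ is invertible: $\mathbf{K}$ is positive definite and $\mathbf{W}$ has full column rank, so for $\mathbf{z}\neq\mathbf{0}$ we have $\mathbf{W}\mathbf{z}\neq\mathbf{0}$ and $\mathbf{z}^{\top}(\mathbf{W}^{\top}\mathbf{K}\mathbf{W})\mathbf{z}=(\mathbf{W}\mathbf{z})^{\top}\mathbf{K}(\mathbf{W}\mathbf{z})>0$. Hence $\mathbf{M}$ is well defined.

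Next I would record the structural properties shared by $\mathbf{M}$ and $\overline{\mathbf{K}}$. Both are symmetric. Both have kernel exactly $\mathrm{Im}\,\mathbf{F}$: for $\overline{\mathbf{K}}$ this is Lemma \ref{lemma1}, and for $\mathbf{M}$ it follows because $\mathbf{M}\mathbf{x}=\mathbf{0}$ if and only if $\mathbf{W}^{\top}\mathbf{x}=\mathbf{0}$ (as $\mathbf{W}$ is injective and $\mathbf{W}^{\top}\mathbf{K}\mathbf{W}$ is invertible), i.e. if and only if $\mathbf{x}\perp\mathrm{Im}\,\mathbf{W}=(\mathrm{Im}\,\mathbf{F})^{\perp}$, i.e. $\mathbf{x}\in\mathrm{Im}\,\mathbf{F}$. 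By symmetry their common image is $(\mathrm{Im}\,\mathbf{F})^{\perp}$. Finally, both satisfy the generalized idempotence $\mathbf{M}\mathbf{K}\mathbf{M}=\mathbf{M}$ and $\overline{\mathbf{K}}\,\mathbf{K}\,\overline{\mathbf{K}}=\overline{\mathbf{K}}$: the first is immediate from $\mathbf{W}^{\top}\mathbf{W}=\mathbf{I}_{n-p}$, and for the second one writes $\mathbf{K}\overline{\mathbf{K}}=\mathbf{I}_n-\mathbf{P}$ with $\mathbf{P}=\mathbf{F}(\mathbf{F}^{\top}\mathbf{K}^{-1}\mathbf{F})^{-1}\mathbf{F}^{\top}\mathbf{K}^{-1}$, checks $\mathbf{P}^2=\mathbf{P}$ and $\overline{\mathbf{K}}\mathbf{P}=\mathbf{O}$ (since $\mathrm{Im}\,\mathbf{P}\subseteq\mathrm{Im}\,\mathbf{F}=\mathrm{Ker}\,\overline{\mathbf{K}}$), whence $\overline{\mathbf{K}}\,\mathbf{K}\,\overline{\mathbf{K}}=\overline{\mathbf{K}}(\mathbf{I}_n-\mathbf{P})=\overline{\mathbf{K}}$.

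The concluding step, which I expect to be the crux, is a uniqueness argument showing that these properties pin down the matrix. Consider $\mathbf{K}\mathbf{M}$ and $\mathbf{K}\overline{\mathbf{K}}$; each is idempotent by the generalized idempotence just established, hence a genuine projector. Because $\mathbf{K}$ is invertible, $\mathbf{K}\mathbf{M}$ has kernel $\mathrm{Ker}\,\mathbf{M}=\mathrm{Im}\,\mathbf{F}$ and image $\mathbf{K}\,\mathrm{Im}\,\mathbf{M}=\mathbf{K}(\mathrm{Im}\,\mathbf{F})^{\perp}$, and identically $\mathbf{K}\overline{\mathbf{K}}$ has kernel $\mathrm{Im}\,\mathbf{F}$ and image $\mathbf{K}(\mathrm{Im}\,\mathbf{F})^{\perp}$. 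A projector is uniquely determined by its image and kernel, so $\mathbf{K}\mathbf{M}=\mathbf{K}\overline{\mathbf{K}}$, and left-multiplying by $\mathbf{K}^{-1}$ yields $\mathbf{M}=\overline{\mathbf{K}}$, which is the claimed identity.

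The only subtlety to argue carefully is that matching image and kernel really forces two idempotents to coincide; this holds because a projector acts as the identity on its image and as zero on its kernel, and those two subspaces are complementary, so its action is fully prescribed. As a fallback should the projector argument be judged too terse, I would use the more computational route: diagonalize in an orthonormal basis $[\mathbf{Q},\mathbf{W}]$ adapted to the splitting $\mathrm{Im}\,\mathbf{F}\oplus(\mathrm{Im}\,\mathbf{F})^{\perp}$ (replacing $\mathbf{F}$ by an orthonormal $\mathbf{Q}$ with $\mathrm{Im}\,\mathbf{Q}=\mathrm{Im}\,\mathbf{F}$, which leaves $\mathbf{P}$ unchanged), set $\mathbf{A}=[\mathbf{Q},\mathbf{W}]^{\top}\mathbf{K}[\mathbf{Q},\mathbf{W}]$, and invoke the Schur-complement identity $(\mathbf{A}^{-1})_{22}-(\mathbf{A}^{-1})_{21}(\mathbf{A}^{-1})_{11}^{-1}(\mathbf{A}^{-1})_{12}=(\mathbf{A}_{22})^{-1}$ to show that both sides, expressed in this basis, equal the block matrix with the single nonzero block $(\mathbf{W}^{\top}\mathbf{K}\mathbf{W})^{-1}$ in the lower-right corner.
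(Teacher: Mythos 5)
Your proof is correct, but note that the paper itself offers no proof of this lemma: it is imported verbatim from \cite{DeOliveira2007}, so there is no internal argument to compare against. Your construction of $\mathbf{W}$ as an orthonormal basis of $(\mathrm{Im}\,\mathbf{F})^{\perp}$ is the standard one, and your identification of $\overline{\mathbf{K}}$ with $\mathbf{M}=\mathbf{W}(\mathbf{W}^{\top}\mathbf{K}\mathbf{W})^{-1}\mathbf{W}^{\top}$ via a projector-uniqueness argument is a clean, more conceptual alternative to the block/Schur-complement computation that is usually given for this identity (and which you correctly sketch as a fallback). All the key steps check out: $\mathbf{W}^{\top}\mathbf{K}\mathbf{W}$ is invertible because $\mathbf{K}\succ 0$ and $\mathbf{W}$ has full column rank; both matrices are symmetric with kernel $\mathrm{Im}\,\mathbf{F}$ (Lemma \ref{lemma1} for $\overline{\mathbf{K}}$, injectivity of $\mathbf{W}$ for $\mathbf{M}$), hence common image $(\mathrm{Im}\,\mathbf{F})^{\perp}$; the generalized idempotence $\overline{\mathbf{K}}\,\mathbf{K}\,\overline{\mathbf{K}}=\overline{\mathbf{K}}$ follows from $\mathbf{K}\overline{\mathbf{K}}=\mathbf{I}_n-\mathbf{P}$ with $\mathbf{P}$ idempotent and $\overline{\mathbf{K}}\mathbf{P}=\mathbf{O}$; and two idempotents with the same (complementary) image and kernel coincide, so $\mathbf{K}\mathbf{M}=\mathbf{K}\overline{\mathbf{K}}$ and invertibility of $\mathbf{K}$ finishes it. Two cosmetic remarks: the identity $\mathbf{M}\mathbf{K}\mathbf{M}=\mathbf{M}$ does not actually use $\mathbf{W}^{\top}\mathbf{W}=\mathbf{I}_{n-p}$ (it follows from the definition of $\mathbf{M}$ alone, since $(\mathbf{W}^{\top}\mathbf{K}\mathbf{W})^{-1}(\mathbf{W}^{\top}\mathbf{K}\mathbf{W})=\mathbf{I}_{n-p}$); and it is worth stating explicitly that for any idempotent $\mathbf{E}$ one has $\mathbb{R}^n=\mathrm{Im}\,\mathbf{E}\oplus\mathrm{Ker}\,\mathbf{E}$ via $\mathbf{x}=\mathbf{E}\mathbf{x}+(\mathbf{I}_n-\mathbf{E})\mathbf{x}$, which is the fact your uniqueness step rests on. What your route buys is independence from any particular basis adapted to $\mathrm{Im}\,\mathbf{F}$; what the computational route buys is that it simultaneously re-derives $\mathrm{Ker}\,\overline{\mathbf{K}}=\mathrm{Im}\,\mathbf{F}$ rather than taking Lemma \ref{lemma1} as input.
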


\begin{lemma}
\label{lemma3}
    Under the hypotheses of Lemma \ref{lemma1}, if additionally hypothesis $\mathcal{H}_2$ holds true, then $\overline{\mathbf{K}}_{ii} > 0$ for all $i \in \{1,\ldots,n\}$.
\end{lemma}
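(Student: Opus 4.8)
The plan is to exploit the factorization of $\overline{\mathbf{K}}$ supplied by Lemma \ref{lemma2}, which collapses the positivity of a diagonal entry into a quadratic form against a positive definite matrix. Writing $\mathbf{v}_i = \mathbf{W}^\top \mathbf{e}_i \in \sR^{n-p}$ for the $i$-th canonical vector $\mathbf{e}_i$, I would first record the identity
\begin{equation}
\overline{\mathbf{K}}_{ii} = \mathbf{e}_i^\top \overline{\mathbf{K}} \mathbf{e}_i = \mathbf{e}_i^\top \mathbf{W} \left( \mathbf{W}^\top \mathbf{K} \mathbf{W} \right)^{-1} \mathbf{W}^\top \mathbf{e}_i = \mathbf{v}_i^\top \left( \mathbf{W}^\top \mathbf{K} \mathbf{W} \right)^{-1} \mathbf{v}_i .
\end{equation}

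Second, I would check that $\mathbf{W}^\top \mathbf{K} \mathbf{W}$ is positive definite. Since $\mathbf{W}^\top \mathbf{W} = \mathbf{I}_{n-p}$, the matrix $\mathbf{W}$ has full column rank, so $\mathbf{W}\mathbf{u} \neq \mathbf{0}$ for every nonzero $\mathbf{u}$; then $\mathbf{u}^\top \left( \mathbf{W}^\top \mathbf{K} \mathbf{W} \right) \mathbf{u} = (\mathbf{W}\mathbf{u})^\top \mathbf{K} (\mathbf{W}\mathbf{u}) > 0$ because $\mathbf{K}$ is positive definite. Hence $\left( \mathbf{W}^\top \mathbf{K} \mathbf{W} \right)^{-1}$ is positive definite as well, and the quadratic form above is nonnegative, vanishing if and only if $\mathbf{v}_i = \mathbf{W}^\top \mathbf{e}_i = \mathbf{0}$. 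So the whole statement reduces to showing $\mathbf{W}^\top \mathbf{e}_i \neq \mathbf{0}$.

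The crux — and the only place where hypothesis $\mathcal{H}_2$ enters — is to identify $\operatorname{Ker} \mathbf{W}^\top$. The relation $\mathbf{F}^\top \mathbf{W} = \mathbf{O}$ gives $\operatorname{Im} \mathbf{W} \subseteq (\operatorname{Im} \mathbf{F})^\perp$, and since $\mathbf{F}$ has full rank $p$ while $\operatorname{rank}(\mathbf{W}) = n-p$, both subspaces have dimension $n-p$, forcing the equality $\operatorname{Im} \mathbf{W} = (\operatorname{Im} \mathbf{F})^\perp$. Taking orthogonal complements yields $\operatorname{Ker} \mathbf{W}^\top = (\operatorname{Im} \mathbf{W})^\perp = \operatorname{Im} \mathbf{F}$. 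Therefore $\mathbf{W}^\top \mathbf{e}_i = \mathbf{0}$ is equivalent to $\mathbf{e}_i \in \operatorname{Im} \mathbf{F}$, which is ruled out for every $i$ by $\mathcal{H}_2$. Consequently $\mathbf{v}_i \neq \mathbf{0}$ and $\overline{\mathbf{K}}_{ii} > 0$.

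The hard part will be nothing computational but rather keeping the two orthogonality facts precise: that the columns of $\mathbf{W}$ span \emph{exactly} $(\operatorname{Im} \mathbf{F})^\perp$ (an equality obtained by a dimension count, not merely the inclusion one reads off from $\mathbf{F}^\top \mathbf{W} = \mathbf{O}$) and that this passes to $\operatorname{Ker} \mathbf{W}^\top = \operatorname{Im} \mathbf{F}$. Once Lemma \ref{lemma2} is invoked, everything else is a routine positive-definiteness argument.
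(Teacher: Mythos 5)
Your proof is correct. It reaches the same pivot as the paper --- a vanishing diagonal entry of $\overline{\mathbf{K}}$ forces $\mathbf{e}_i \in \operatorname{Im}\mathbf{F}$, contradicting $\mathcal{H}_2$ --- but by a noticeably different mechanism. The paper only uses Lemma \ref{lemma2} to certify that $\overline{\mathbf{K}}$ is positive semi-definite, then writes the spectral decomposition $\overline{\mathbf{K}} = \sum_j \lambda_j \mathbf{u}_j\mathbf{u}_j^{\top}$, observes that $\overline{\mathbf{K}}_{ii}=0$ kills every coefficient $\mathbf{u}_j^{\top}\mathbf{e}_i$ with $\lambda_j>0$, hence $\overline{\mathbf{K}}\mathbf{e}_i=\mathbf{0}$, and finishes by invoking Lemma \ref{lemma1}'s identification $\operatorname{Ker}\overline{\mathbf{K}}=\operatorname{Im}\mathbf{F}$. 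You instead exploit the factorization $\overline{\mathbf{K}}=\mathbf{W}\left(\mathbf{W}^{\top}\mathbf{K}\mathbf{W}\right)^{-1}\mathbf{W}^{\top}$ head-on, reduce the diagonal entry to a quadratic form against the positive definite matrix $\left(\mathbf{W}^{\top}\mathbf{K}\mathbf{W}\right)^{-1}$, and re-derive the needed kernel identification in the form $\operatorname{Ker}\mathbf{W}^{\top}=\operatorname{Im}\mathbf{F}$ via a dimension count rather than quoting Lemma \ref{lemma1}. Your route is self-contained on that point and makes explicit exactly where positive definiteness (as opposed to semi-definiteness) enters, at the cost of the extra bookkeeping about $\operatorname{Im}\mathbf{W}=(\operatorname{Im}\mathbf{F})^{\perp}$; the paper's route is shorter because it delegates that bookkeeping to Lemma \ref{lemma1}, which it has already proved. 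Both arguments are sound and use $\mathcal{H}_2$ in the same, essential way.
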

    
\begin{proof}
    $\overline{\mathbf{K}}$ is a positive semi-definite matrix by Lemma \ref{lemma2} and we can write
    \begin{equation}
            \overline{\mathbf{K}} = \sum_{j=1}^n \lambda_j {\bf u}_j {\bf u}_j^{\top} ,
    \end{equation}
    with $\lambda_j\geq 0$ the eigenvalues of $\overline{\mathbf{K}}$ and $({\bf u}_j)_{j=1}^n$ the orthonormal basis of the eigenvectors. We have
    
    \begin{equation}
        \overline{\mathbf{K}}_{ii} = {\bf e}_i^{\top} \overline{\mathbf{K}} {\bf e}_i =  \sum_{j=1}^n \lambda_j ({\bf u}_j^{\top} {\bf e}_i)^2 .
    \end{equation}
    
    If $\overline{\mathbf{K}}_{ii}=0$, then ${\bf u}_j^{\top} {\bf e}_i =0$ for all $j$ such that $\lambda_j>0$. Therefore 
    \begin{equation}
        \overline{\mathbf{K}} {\bf e}_i = \sum_{j=1}^n \lambda_j ({\bf u}_j^{\top} {\bf e}_i ) {\bf u}_j = {\bf 0} ,
    \end{equation}

    which shows that ${\bf e}_i \in$ {\rm Ker }$\overline{\mathbf{K}}$, that is, ${\bf e}_i \in$ {\rm Im }$\mathbf{F}$ by Lemma \ref{lemma1}.
\end{proof}

\begin{lemma}
\label{lemma4}
    Let $\boldsymbol{\Pi} = \mathbf{W} \mathbf{W}^{\top} = \mathbf{I}_n - \mathbf{F} \left( \mathbf{F}^{\top} \mathbf{F} \right)^{-1} \mathbf{F}^{\top}$ the orthogonal projection matrix on {\rm (Im }$\mathbf{F})^{\perp}$ then, with the hypothesis $\mathcal{H}_2$,  $\left(\boldsymbol{\Pi}\right)_{i,i}\neq 0$ for all $i \in \{1,\ldots,n\}$.
\end{lemma}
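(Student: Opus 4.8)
The plan is to reuse the structure of the argument for Lemma \ref{lemma3}, but the projection structure of $\boldsymbol{\Pi}$ makes the computation even more direct. First I would record that $\boldsymbol{\Pi}$ is an orthogonal projection, hence symmetric and idempotent, so that $\boldsymbol{\Pi} = \boldsymbol{\Pi}^{\top} \boldsymbol{\Pi}$. This lets me rewrite each diagonal entry as a squared norm,
\begin{equation}
    \left(\boldsymbol{\Pi}\right)_{i,i} = {\bf e}_i^{\top} \boldsymbol{\Pi} \, {\bf e}_i = {\bf e}_i^{\top} \boldsymbol{\Pi}^{\top} \boldsymbol{\Pi} \, {\bf e}_i = \| \boldsymbol{\Pi} \, {\bf e}_i \|^2 \geq 0 .
\end{equation}

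Next I would argue by contradiction. Suppose $\left(\boldsymbol{\Pi}\right)_{i,i} = 0$ for some $i$. Then $\| \boldsymbol{\Pi} \, {\bf e}_i \|^2 = 0$, which forces $\boldsymbol{\Pi} \, {\bf e}_i = {\bf 0}$, i.e. ${\bf e}_i \in$ {\rm Ker }$\boldsymbol{\Pi}$. Since $\boldsymbol{\Pi}$ is the orthogonal projection onto $($ {\rm Im }$\mathbf{F})^{\perp}$, its kernel is exactly {\rm Im }$\mathbf{F}$, so ${\bf e}_i \in$ {\rm Im }$\mathbf{F}$. This contradicts hypothesis $\mathcal{H}_2$, which asserts ${\bf e}_i \not\in$ {\rm Im }$\mathbf{F}$ for every $i$. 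Hence $\left(\boldsymbol{\Pi}\right)_{i,i} \neq 0$, and in fact it is strictly positive, for all $i \in \{1,\ldots,n\}$.

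I do not expect a genuine obstacle here: the only point deserving a line of justification is the identification {\rm Ker }$\boldsymbol{\Pi} =$ {\rm Im }$\mathbf{F}$, which is immediate from the explicit form $\boldsymbol{\Pi} = \mathbf{I}_n - \mathbf{F}(\mathbf{F}^{\top}\mathbf{F})^{-1}\mathbf{F}^{\top}$. Indeed, if ${\bf x} = \mathbf{F} {\bf z}$ then $\boldsymbol{\Pi} {\bf x} = {\bf 0}$, and conversely $\boldsymbol{\Pi} {\bf x} = {\bf 0}$ gives ${\bf x} = \mathbf{F}(\mathbf{F}^{\top}\mathbf{F})^{-1}\mathbf{F}^{\top} {\bf x} \in$ {\rm Im }$\mathbf{F}$; this uses that $\mathbf{F}$ is full rank so that $\mathbf{F}^{\top}\mathbf{F}$ is invertible ($\mathcal{H}_1$). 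Alternatively, one could invoke the factorisation $\boldsymbol{\Pi} = \mathbf{W}\mathbf{W}^{\top}$ of Lemma \ref{lemma2} together with $\mathbf{W}^{\top}\mathbf{W} = \mathbf{I}_{n-p}$ to reach the same expression $\left(\boldsymbol{\Pi}\right)_{i,i} = \|\mathbf{W}^{\top} {\bf e}_i\|^2$, after which $\mathbf{W}^{\top} {\bf e}_i = {\bf 0}$ again places ${\bf e}_i$ in the orthogonal complement of $($ {\rm Im }$\mathbf{F})^{\perp}$, namely {\rm Im }$\mathbf{F}$.
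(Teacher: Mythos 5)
Your proof is correct and rests on the same core fact as the paper's: a vanishing diagonal entry of the positive semi-definite matrix $\boldsymbol{\Pi}$ forces ${\bf e}_i \in$ {\rm Ker }$\boldsymbol{\Pi} =$ {\rm Im }$\mathbf{F}$, contradicting $\mathcal{H}_2$. The paper obtains this in one line by applying Lemma \ref{lemma3} with $\mathbf{K}=\mathbf{I}_n$ (whose proof uses the spectral decomposition where you use idempotency), so your argument is just a self-contained rendering of the same route.
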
    
\begin{proof}
    This lemma is a direct application of Lemma \ref{lemma3} by choosing $\mathbf{K}=\mathbf{I}_n$.
\end{proof}

\subsection{Proof of Proposition \ref{prop:prop1}}

From preliminary lemmas, we show now the stronger result (stronger than Proposition \ref{prop:prop1}):    
\begin{lemma}
\label{lemma5}
   Under the hypotheses $\mathcal{H}_1-\mathcal{H}_3$, for any $\vtheta \in (0,+\infty)^d$, there exists $\sigma^2 \in (0,+\infty)$ such that $(\sigma^2, \vtheta) \in \mathcal{A}_{a, \delta}$.
\end{lemma}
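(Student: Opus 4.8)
The plan is to fix $\vtheta$ and study the single-variable function $\sigma^2 \mapsto \psi^{(\delta)}_a(\sigma^2, \vtheta)$ on $[0,+\infty)$, then invoke the intermediate value theorem. Writing $\mathbf{K} = \sigma^2 \mathbf{R}_{\vtheta} + \sigma^2_{\epsilon}\mathbf{I}_n$ with $\sigma^2_{\epsilon} > 0$, the matrix $\mathbf{K}$ is positive definite for every $\sigma^2 \geq 0$, so $\overline{\mathbf{K}}$ is well defined and depends continuously on $\sigma^2$; by Lemma \ref{lemma3} each diagonal entry $(\overline{\mathbf{K}})_{ii}$ is strictly positive, so each ratio $(\overline{\mathbf{K}}\vy)_i/\sqrt{(\overline{\mathbf{K}})_{ii}}$ and, since $h^+_{\delta}$ and $h^-_{\delta}$ are continuous, the whole map $\sigma^2 \mapsto \psi^{(\delta)}_a(\sigma^2,\vtheta)$ is continuous. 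It then remains to evaluate this function at the two ends of the interval and show the values straddle $a$.

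At the end $\sigma^2 \to +\infty$ I would factor $\mathbf{K} = \sigma^2(\mathbf{R}_{\vtheta} + \sigma^2_{\epsilon}\sigma^{-2}\mathbf{I}_n)$, so that $\sigma^2 \overline{\mathbf{K}} \to \overline{\mathbf{R}}_{\vtheta}$ and hence each ratio $(\overline{\mathbf{K}}\vy)_i/\sqrt{(\overline{\mathbf{K}})_{ii}} = \sigma^{-1}(\sigma^2\overline{\mathbf{K}}\vy)_i/\sqrt{(\sigma^2\overline{\mathbf{K}})_{ii}} \to 0$. Every argument of $h^+_{\delta}$ (resp. $h^-_{\delta}$) then tends to $q_a$, and using that $\delta$ was chosen so that $h^+_{\delta}(q_a) = 1$ when $a > 1/2$ (resp. $h^-_{\delta}(q_a) = 0$ when $a < 1/2$), I obtain $\psi^{(\delta)}_a \to 1$ when $a > 1/2$ and $\psi^{(\delta)}_a \to 0$ when $a < 1/2$; in both cases the limit lies strictly beyond $a$ since $a \in (0,1)$.

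At the end $\sigma^2 \to 0$ I would use $\mathbf{K} \to \sigma^2_{\epsilon}\mathbf{I}_n$, which gives $\overline{\mathbf{K}} \to \sigma^{-2}_{\epsilon}\boldsymbol{\Pi}$ with $\boldsymbol{\Pi}$ the projector of Lemma \ref{lemma4}, so each ratio tends to $\sigma^{-1}_{\epsilon}(\boldsymbol{\Pi}\vy)_i/\sqrt{(\boldsymbol{\Pi})_{ii}}$, the denominators being nonzero by Lemma \ref{lemma4}. The integer $k_{\epsilon}$ of $\mathcal{H}_3$ is exactly the number of indices $i$ for which the limiting argument $q_a - \sigma^{-1}_{\epsilon}(\boldsymbol{\Pi}\vy)_i/\sqrt{(\boldsymbol{\Pi})_{ii}}$ is $\geq 0$, since $(\boldsymbol{\Pi}\vy)_i/\sqrt{(\boldsymbol{\Pi})_{ii}} \leq \sigma_{\epsilon} q_a$ is equivalent to that argument being nonnegative. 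For $a > 1/2$ I would use $h^+_{\delta} \leq \mathbf{1}\{\cdot > 0\}$ to get the boundary value $\psi^{(\delta)}_a(0,\vtheta) \leq k_{\epsilon}/n < a$ by $\mathcal{H}_3$; for $a < 1/2$ I would use $h^-_{\delta} \geq \mathbf{1}\{\cdot \geq 0\}$ to get $\psi^{(\delta)}_a(0,\vtheta) \geq k_{\epsilon}/n > a$. Either way the value at $\sigma^2 = 0$ sits strictly on the side of $a$ opposite to the $+\infty$ end.

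Continuity together with the two strict inequalities then forces a crossing $\psi^{(\delta)}_a(\sigma^2,\vtheta) = a$ at some $\sigma^2$, and strictness at $\sigma^2=0$ guarantees the root is interior, i.e. $\sigma^2 \in (0,+\infty)$, which is precisely $(\sigma^2,\vtheta) \in \mathcal{A}_{a,\delta}$. The main obstacle I anticipate is the bookkeeping at $\sigma^2 \to 0$: one must match the ramp-function bounds to the indicator counts with the correct strict versus non-strict inequalities so that $\mathcal{H}_3$, stated with $\leq \sigma_{\epsilon} q_a$, yields a strict separation from $a$, and one must keep the cases $a>1/2$ and $a<1/2$ paired with the appropriate one of $h^+_{\delta}$, $h^-_{\delta}$. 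The two limit computations themselves are routine once $\sigma^2_{\epsilon} > 0$ is used to keep $\mathbf{K}$ invertible throughout.
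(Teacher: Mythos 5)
Your proposal is correct and follows essentially the same route as the paper's proof: continuity of $\sigma^2 \mapsto \psi^{(\delta)}_a(\sigma^2,\vtheta)$, the limit $\sigma^{-2}_{\epsilon}\boldsymbol{\Pi}$ at $\sigma^2 \to 0$ giving a value bounded by $k_{\epsilon}/n$ on the correct side of $a$ via $\mathcal{H}_3$, the limit $1$ (resp.\ $0$) at $\sigma^2 \to +\infty$ from the choice $\delta < q_a$ (resp.\ $\delta < q_{1-a}$), and the intermediate value theorem. The only cosmetic differences are that you evaluate at $\sigma^2=0$ rather than taking a limit and you bound $h^{+}_{\delta}$ by $\mathbf{1}\{\cdot>0\}$ where the paper uses $h^{+}_{\delta}\leq h$; both yield the same strict inequality $k_{\epsilon}/n < a$.
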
   

\begin{proof}
    Here $\sigma^2_{\epsilon} > 0$. Let us assume that $a > {1}/{2}$ (i.e. $q_a > 0$), then for $\vtheta$ fixed in $(0,+\infty)^d$, the limit of $\overline{\mathbf{K}}$ when $\sigma^2 \rightarrow 0$ is well defined and is equal to
    \begin{equation}
        \lim\limits_{\sigma^2 \rightarrow 0} \overline{\mathbf{K}} = \sigma^{-2}_{\epsilon} \ \mathbf{W} \mathbf{W}^{\top} = \sigma^{-2}_{\epsilon} \ \boldsymbol{\Pi}.
    \end{equation}

    By the hypothesis $\mathcal{H}_2$ and from Lemma \ref{lemma4} we can write for all $i \in \{1,\ldots,n\}$
    \begin{equation}
        \frac{ \left( \overline{\mathbf{K}} \vy \right)_i }{ \sqrt{ \left( \overline{\mathbf{K}} \right)_{i,i}}} \stackrel{\sigma^2 \rightarrow 0}{\longrightarrow} \frac{1}{\sigma_{\epsilon}} \frac{\left(\boldsymbol{\Pi} \vy \right)_i}{\sqrt{\left(\boldsymbol{\Pi} \right)_{i,i}}} .
    \end{equation}

    Since $h^+_{\delta} \leq h$ for all $\delta > 0$, then 
    \begin{equation}
        \lim\limits_{\sigma^2 \rightarrow 0} \psi^{\left(\delta \right)}_a(\sigma^2, \vtheta) \leq \lim\limits_{\sigma^2 \rightarrow 0} \psi_a(\sigma^2, \vtheta) = \frac{1}{n} \sum_{i=1}^{n} h \left( q_a - \frac{1}{\sigma_{\epsilon}} \frac{\left(\boldsymbol{\Pi} \vy \right)_i}{\sqrt{\left(\boldsymbol{\Pi} \right)_{i,i}}} \right) = \frac{k_{\epsilon}}{n} 
    \end{equation}

    When $\sigma^2 \rightarrow +\infty$, we have
    \begin{equation}
        \overline{\mathbf{K}} \stackrel{\sigma^2 \rightarrow +\infty}{\sim} \sigma^{-2} \ \overline{\mathbf{R}}_{\vtheta} ,
    \end{equation}
    where 
    \begin{equation}
        \overline{\mathbf{R}}_{\vtheta} = \mathbf{W} \left( \mathbf{W}^{\top} \mathbf{R}_{\vtheta} \mathbf{W} \right)^{-1} \mathbf{W}^{\top} .
    \end{equation}

    By lemma \ref{lemma3}, we have  ${\left(\overline{\mathbf{R}}_{\vtheta} \right)_{i,i}} > 0$ for all $i \in \{1,\ldots,n\}$ and we obtain that
    \begin{equation}
        \frac{1}{\sigma} \frac{\left(\overline{\mathbf{R}}_{\vtheta} \vy \right)_i}{\sqrt{\left(\overline{\mathbf{R}}_{\vtheta} \right)_{i,i}}} \stackrel{\sigma^2 \rightarrow +\infty}{\longrightarrow} 0.
    \end{equation}

    With $\delta$ small enough satisfying $\delta < q_a$, we obtain
    \begin{equation}
        \psi^{(\delta)}_a(\sigma^2, \vtheta) \stackrel{\sigma^2 \rightarrow +\infty}{\longrightarrow} \frac{1}{n} \sum_{i=1}^{n} h^+_{\delta} \left( q_a \right) = 1.
    \end{equation}

    Since $k_{\epsilon} < a n < n$ by hypothesis $\mathcal{H}_3$ and since $\psi^{(\delta)}_a$ is continuous, the Intermediate Value Theorem gives the existence of $\sigma_{\delta}^2 \in (0,+\infty)$ such that 
    \begin{equation}
        \psi^{(\delta)}_a(\sigma_{\delta}^2, \vtheta) = a , 
    \end{equation}
    which gives the desired result.
    
    Similarly, if $a < {a}/{2}$ then $q_a < 0 $ and
    \begin{equation}
        \lim\limits_{\sigma^2 \rightarrow 0} \psi^{\left(\delta\right)}_a(\sigma^2, \vtheta) \geq \lim\limits_{\sigma^2 \rightarrow 0} \psi_a(\sigma^2, \vtheta) = \frac{1}{n} \sum_{i=1}^{n} h \left( q_a - \frac{1}{\sigma_{\epsilon}} \frac{\left(\boldsymbol{\Pi} \vy \right)_i}{\sqrt{\left(\boldsymbol{\Pi} \right)_{i,i}}} \right) = \frac{k_{\epsilon}}{n} > a .
    \end{equation}

    When $\delta < q_{1-a}$, one obtains 
    \begin{equation}
        \psi^{(\delta)}_a(\sigma^2, \vtheta) \stackrel{\sigma^2 \rightarrow +\infty}{\longrightarrow} \frac{1}{n} \sum_{i=1}^{n} h^-_{\delta} \left( q_a \right) = 0 .
    \end{equation}
    
    By the hypothesis $\mathcal{H}_3$, one has the existence of $\sigma_{\delta}^2 \in (0,+\infty)$ such that 
    \begin{equation}
        \psi^{(\delta)}_a(\sigma_{\delta}^2, \vtheta) = a , 
    \end{equation}
    which completes the proof of the lemma.
\end{proof}

\subsection{Proof of Proposition \ref{prop:prop2}}

The existence of $\sigma^2_{\rm{opt}}(\lambda)$ for all $\lambda \in (0,+\infty)$ results directly from the following lemma \ref{lemma6} :

\begin{lemma}
\label{lemma6}
    For all $\lambda \in (0,+\infty)$, $H_{\delta}(\lambda)$ is a non-empty and compact subset of $\sR^+$ i.e. $H_{\delta}$ is compact-valued.
\end{lemma}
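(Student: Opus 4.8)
The plan is to verify the two asserted properties — non-emptiness and compactness — independently, relying on the continuity of the map $\sigma^2 \mapsto \psi^{(\delta)}_a(\sigma^2, \lambda\vtheta_0)$ together with the large-$\sigma^2$ asymptotics already obtained in the proof of Lemma \ref{lemma5}.

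First I would dispatch non-emptiness. Since $\vtheta_0 \in (0,+\infty)^d$ and $\lambda > 0$, the shifted length-scale $\lambda\vtheta_0$ again lies in $(0,+\infty)^d$, so Lemma \ref{lemma5} applies verbatim with $\vtheta = \lambda\vtheta_0$ and furnishes some $\sigma^2 \in (0,+\infty)$ with $\psi^{(\delta)}_a(\sigma^2, \lambda\vtheta_0) = a$. By the definition in (\ref{eq:Hdelta}) this $\sigma^2$ belongs to $H_\delta(\lambda)$, so $H_\delta(\lambda) \neq \emptyset$.

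For compactness I would appeal to the Heine--Borel theorem and show that $H_\delta(\lambda)$ is closed and bounded in $\sR^+$. Closedness follows from continuity: because $\sigma^2_{\epsilon} > 0$, the matrix $\mathbf{K}$ remains positive definite for every $\sigma^2 \in [0,+\infty)$, so $\overline{\mathbf{K}}$ depends continuously on $\sigma^2$; by Lemma \ref{lemma3} each diagonal entry $(\overline{\mathbf{K}})_{i,i}$ is strictly positive, hence each ratio $(\overline{\mathbf{K}}\vy)_i / \sqrt{(\overline{\mathbf{K}})_{i,i}}$ is continuous, and $h^+_{\delta}, h^-_{\delta}$ are continuous by construction. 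Consequently $\sigma^2 \mapsto \psi^{(\delta)}_a(\sigma^2, \lambda\vtheta_0)$ is continuous on $[0,+\infty)$, and $H_\delta(\lambda)$, being the preimage of the singleton $\{a\}$ under this map, is closed.

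The main obstacle is boundedness, which I expect to extract from the behavior of $\psi^{(\delta)}_a$ as $\sigma^2 \to +\infty$. For $a > 1/2$ the proof of Lemma \ref{lemma5} shows $\psi^{(\delta)}_a(\sigma^2, \lambda\vtheta_0) \to 1$; since $a < 1$, continuity yields an $M > 0$ with $\psi^{(\delta)}_a(\sigma^2, \lambda\vtheta_0) > a$ for all $\sigma^2 > M$, forcing $H_\delta(\lambda) \subseteq [0, M]$. For $a < 1/2$ the same computation gives the limit $0 < a$, hence an analogous threshold $M$ beyond which $\psi^{(\delta)}_a < a$ and again $H_\delta(\lambda) \subseteq [0,M]$. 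In either case $H_\delta(\lambda)$ is bounded; combined with closedness this yields compactness and completes the argument.
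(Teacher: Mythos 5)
Your proof is correct and follows essentially the same route as the paper's: non-emptiness via Lemma \ref{lemma5} applied to $\lambda\vtheta_0$, closedness from continuity of $\psi^{(\delta)}_a$ in $\sigma^2$, and boundedness from the limit of $\psi^{(\delta)}_a$ as $\sigma^2\to+\infty$ being $1$ (resp.\ $0$), which differs from $a$. The only cosmetic difference is that you argue boundedness directly with a threshold $M$ while the paper argues by contradiction along an unbounded sequence; the two are equivalent.
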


\begin{proof}
    By Lemma \ref{lemma5}, $H_{\delta}(\lambda)$ is non-empty for all $\lambda \in (0,+\infty)$.
    
    $H_{\delta}(\lambda)$ is closed since the functions $h^+_{\delta}, h^-_{\delta}$ are continuous and the map $(\sigma^2, \vtheta) \mapsto \overline{\mathbf{K}}$ is also continuous for all $ (\sigma^2, \vtheta)$ by the continuity of the kernel function $\vk^{\nu}_{., .}(\vx,\vx')$ for any $\nu > 0$ and $\vx,\vx' \in \mathcal{D}$.
    
    We now prove that $H_{\delta}(\lambda)$ is bounded. Let us assume that $a \in (1/2,1)$. If $H_{\delta}(\lambda)$ is not bounded then there exists a sequence $\left(\sigma^2_m\right)_{m \in \sN}$ of $H_{\delta}(\lambda)$ such that $\lim\limits_{m \rightarrow +\infty} \sigma_m^2=+\infty$ and, by continuity of $\psi^{(\delta)}_{a}$
    \begin{equation}
        a = \lim\limits_{m \rightarrow +\infty} \psi^{(\delta)}_{a}(\sigma_m^2, \lambda \vtheta_0) = \frac{1}{n} \sum_{i=1}^{n} h^+_{\delta} \left( q_a \right) = 1 ,
    \end{equation}
    which is a contradiction. Therefore, $H_{\delta}(\lambda)$ is closed and bounded, $H_{\delta}(\lambda)$ is compact.
\end{proof}

$\sigma^2_{\rm{opt}}(\lambda)$ can be seen the solution of a constrained maximization problem

\begin{equation}
    \sigma^2_{\rm{opt}}(\lambda) = - \max_{\sigma^2 \in H_{\delta}(\lambda)} u(\sigma^2, \lambda), \quad \lambda \in (0,+\infty) ,
\end{equation}

where $u(\sigma^2, \lambda) = -\sigma^2$ is a continuous function. $H_{\delta}$ is non-empty-valued and compact-valued by Lemma \ref{lemma6}, upper semi-continuous since $\psi^{(\delta)}_{a}$ is continuous on $[0,+\infty) \times (0,+\infty)^d$, and continuous if the hypothesis $\mathcal{H}_4$ is satisfied, the Maximum theorem (\citet{Berge1963}, p. 116) provides the continuity of $\sigma^2_{\rm{opt}}$ on $(0,+\infty)$.

\subsection{Proof of Proposition \ref{prop:prop3}}

Let $\vtheta_0$ be a solution of one of the problems described in (\ref{eq:OptimML}) or (\ref{eq:MSEsolution}). The continuity of $\mathcal{L}$ on $(0, +\infty)$ follows from the continuity of the trace function $\Tr(.)$, the continuity of the map $ (\sigma^2, \vtheta) \mapsto \overline{\mathbf{K}}$ and the continuity of $\sigma^2_{\rm{opt}}$ by proposition \ref{prop:prop2}.
    
Assume that $\lim\limits_{\lambda \rightarrow +\infty} \sigma^2_{\rm{opt}}(\lambda) \neq +\infty$, then there exists $M > 0$ such that for all $\lambda>0$ there exists $\lambda' \geq \lambda$ and $\sigma^2_{\rm{opt}}(\lambda') \leq M$. Hence, we can recursively build a sequence $\left( \lambda_m \right)_{m \in \sN}$ of integers such that $\lambda_{m+1} \geq \lambda_{m}+1$ and $\sigma^2_{\rm{opt}}(\lambda_m) \leq M$ for all $m \in \sN$.

By the Bolzano-Weierstrass theorem, we extract a convergent sub-sequence $\left( \lambda_{\phi(m)} \right)_{m \in \sN}$ where $\phi:\sN \to \sN$ such that $\sigma^2_{\rm{opt}}(\lambda_{\phi(m)}) \stackrel{m \rightarrow +\infty}{\longrightarrow} \sigma^2_{\infty} < +\infty$ and

\begin{equation}
    \mathbf{K}_{\sigma^2_{\rm{opt}}(\lambda_{\phi(m)}), \lambda_{\phi(m)} \vtheta_0} \stackrel{m \rightarrow +\infty}{\longrightarrow} \sigma^2_{\infty} \mathbf{J} + \sigma^2_{\epsilon} \mathbf{I}_n = \mathbf{K}_{\infty} .
\end{equation}

When there is a nugget effect $\sigma^2_{\epsilon} > 0$, the limit of $\overline{\mathbf{K}}_m := \overline{\mathbf{K}}_{\sigma^2_{\rm{opt}}(\lambda_{\phi(m)}), \lambda_{\phi(m)} \vtheta_0}$ when $m\rightarrow +\infty$ exists  because the matrix $\mathbf{K}_{\infty}$ is nonsingular by the auxiliary fact 1 of \cite{Berger2001}
\begin{equation}
    \operatorname{det}\mathbf{K}_{\infty} = \left( \frac{\sigma^2_{\epsilon}}{\sigma^2_{\infty}} \right)^{n} \left( 1 +\frac{\sigma^2_{\epsilon}}{\sigma^2_{\infty}} \ {\bf e} ^{\top}  \mathbf{I}_n  {\bf e} \right) =  \left( \frac{\sigma^2_{\epsilon}}{\sigma^2_{\infty}} \right)^{n} \left( 1 + n \frac{\sigma^2_{\epsilon}}{\sigma^2_{\infty}} \right) > 0.
\end{equation}

From hypothesis $\mathcal{H}_1$, ${\bf e}$ is a column of ${\bf F}$ and we can prove that 
\begin{equation}
    \begin{aligned}\overline{\mathbf{K}}_m \stackrel{m \rightarrow +\infty}{\longrightarrow} \overline{\mathbf{K}}_{\infty} :&= \mathbf{W} \left( \mathbf{W}^{\top}  \left( \sigma^2_{\infty} \mathbf{J} + \sigma^2_{\epsilon} \mathbf{I}_n \right)\mathbf{W} \right)^{-1} \mathbf{W}^{\top} \\ 
    &= \sigma^{-2}_{\epsilon} \ \mathbf{W} \left( \mathbf{W}^{\top} \mathbf{W} \right)^{-1} \mathbf{W}^{\top} = \sigma^{-2}_{\epsilon} \ \boldsymbol{\Pi}. \end{aligned} 
\end{equation}

By hypothesis $\mathcal{H}_2$, the Leave-One-Out formulas (\ref{eq:LOO_mean}-\ref{eq:LOO_var}) give for all $i \in \{1,\ldots, n\}$ 
\begin{equation}
    \frac{ \left( \overline{\mathbf{K}}_m \vy \right)_i }{ \sqrt{ \left( \overline{\mathbf{K}}_m \right)_{i,i}}} \stackrel{m \rightarrow +\infty}{\longrightarrow} \frac{1}{\sigma_{\epsilon}} \frac{ \left( \boldsymbol{\Pi} \vy \right)_i }{\sqrt{ \left( \boldsymbol{\Pi} \right)_{i,i}}} .
\end{equation}

If $a>1/2$ for example and by definition of $\sigma^2_{\rm{opt}}(\lambda_{\phi(m)})$, one obtains
\begin{equation}
    \begin{aligned} a &= \frac{1}{n} \sum_{i=1}^{n} h^+_{\delta} \left( q_a - \frac{\left(\overline{\mathbf{K}}_m \vy \right)_i}{\sqrt{\left(\overline{\mathbf{K}}_m \right)_{i,i}}} \right) \\
    & \stackrel{m\rightarrow +\infty}{\longrightarrow} \frac{1}{n} \sum_{i=1}^{n} h^+_{\delta} \left( q_a -  \frac{\left(\overline{\mathbf{K}}_{\infty} \vy \right)_i}{\sqrt{\left(\overline{\mathbf{K}}_{\infty} \right)_{i,i}}} \right) \\
    &= \frac{1}{n} \sum_{i=1}^{n} h^+_{\delta} \left( q_a - \frac{1}{\sigma_{\epsilon}} \frac{\left(\boldsymbol{\Pi} \vy \right)_i}{\sqrt{\left(\boldsymbol{\Pi} \right)_{i,i}}} \right) = \frac{ k_{\epsilon}}{n} < a, \end{aligned}
\end{equation}

which is contradictory. Therefore, $\lim\limits_{\lambda \rightarrow +\infty} \sigma^2_{\rm{opt}}(\lambda) = +\infty$ and $\mathcal{L}$ is coercive. The case $a<1/2$ can be addressed in the same way.

\section{The no-nugget case.}
\label{appendix:B}

\subsection{Proof of the existence of a solution to Problem (\ref{Eq:OptimLOO})}

In the absence of $\sigma^2_{\epsilon} = 0$, it follows from the Leave-One-Out formulas that,  for all $i \in \{1,\ldots,n\}$ 
\begin{equation}
    \frac{ \left( \overline{\mathbf{K}} \vy \right)_i }{ \sqrt{ \left( \overline{\mathbf{K}} \right)_{i,i}}} = \frac{1}{\sigma} \frac{\left(\overline{\mathbf{R}}_{\vtheta} \vy \right)_i}{\sqrt{\left(\overline{\mathbf{R}}_{\vtheta} \right)_{i,i}}},
\end{equation}
which is a monotonic function in $\sigma^2$ when $\vtheta$ is fixed in $(0,\infty)^d$.

Let $\vtheta$ be fixed in $(0,+\infty)^d$ and let $a>1/2$. The proportion $\psi^{(\delta)}_a(\sigma^2,\vtheta)$ has the limit 
\begin{equation}
    \lim\limits_{\sigma^2 \rightarrow +\infty}  \psi^{(\delta)}_a(\sigma^2, \vtheta) = \frac{1}{n} \sum_{i=1}^n h^+_{\delta}(q_a) = 1,
\end{equation}

and, if $\sigma^2 \rightarrow 0$, it has the limit
\begin{equation}
    \begin{aligned}\lim\limits_{\sigma^2 \rightarrow 0}  \psi^{(\delta)}_a(\sigma^2,\vtheta) &= \frac{1}{n} \operatorname{Card} \left\{i \in \{1,\ldots,n\} ,  \left( \overline{\mathbf{R}}_{\vtheta} \vy \right)_i \leq 0 \right\} = \frac{k_{\vtheta}}{n} .
    \end{aligned}
\end{equation}

Let $\theta $ denote the norm of $\vtheta$ (i.e. $\theta= \| \vtheta \|$) and consider the set $\mathcal{J}=\left\{i \in \{1,\ldots,n\} ,  \left( \boldsymbol{\Pi} \vy \right)_i \leq 0 \right\}$. For $i \in \mathcal{J}^c$, one has $\left( \boldsymbol{\Pi} \vy \right)_i > 0 $, and, since $\overline{\mathbf{R}}_{\vtheta}$ converges to  $\boldsymbol{\Pi}$ when $\theta \rightarrow 0$
\begin{equation}
    \forall i \in \mathcal{J}^c \ : \left( \overline{\mathbf{R}}_{\vtheta} \vy \right)_i \stackrel{\theta \rightarrow 0}{\longrightarrow} \left( \boldsymbol{\Pi} \vy \right)_i > 0 
\end{equation}

It results that, there exists $\theta_c>0$ such that if $\vtheta \in \mathcal{B}_r(\boldsymbol{0}, \theta_c)$ (the open ball of radius $\theta_c$ centered at $\boldsymbol{0}$) then $ \left( \overline{\mathbf{R}}_{\vtheta} \vy \right)_i>0$ for any $i \in \mathcal{J}^c$. Consequently, one gets for any $\vtheta \in \mathcal{B}_r(\boldsymbol{0}, \theta_c)$
\begin{equation}
    \operatorname{Card} \left\{i \in \{1,\ldots,n\} ,  \left( \overline{\mathbf{R}}_{\vtheta} \vy \right)_i > 0 \right\} \geq \operatorname{Card} (\mathcal{J}^c) = n-k_{\epsilon}.
\end{equation}

Hence
\begin{equation}
    k_{\vtheta} = \operatorname{Card} \left\{i \in \{1,\ldots,n\} ,  \left( \overline{\mathbf{R}}_{\vtheta} \vy \right)_i \leq 0 \right\} \leq k_{\epsilon}.
\end{equation}

Therefore, if $\vtheta$ belongs to a neighborhood of $\boldsymbol{0}$, the condition $k_{\vtheta} \leq k_{\epsilon}$ is satisfied and, under the hypothesis $\mathcal{H}_3$, the set of solutions $\mathcal{A}_{a,\delta}$ is also non-empty.

\subsection{Proof of the Coercivity}

Let assume that, under some conditions on $\vy$, $\lambda \mapsto \sigma^2_{\rm{opt}}(\lambda)$ is well-defined for all $\lambda \in (0,+\infty)$. In the absence of nugget effect $\sigma^2_{\epsilon}=0$, the limit of $ \overline{\mathbf{R}}_{\lambda \vtheta_0}$ does not exist when $\lambda \rightarrow +\infty$. Still, we can assume that the correlation matrix $\mathbf{R}_{\lambda \vtheta_0}$ satisfies \citep{Berger2001}
\begin{equation}
    \label{eq:assumpt_Rtheta}
    \mathbf{R}_{\lambda \vtheta_0}= {\bf J} + g_{\lambda} \left( {\bf D}_0 + o(1) \right) ,
\end{equation} 
where
\begin{itemize}
    \item[--] $\lambda \mapsto g_{\lambda}$ is a continuous function such that $\lim\limits_{\lambda \rightarrow +\infty} g_{\lambda}=0$.
    \item[--] ${\bf D}_0$ and ${\bf J}={\bf e} {\bf e}^{\top}$ are fixed symmetric matrices.
\end{itemize}
    
${\bf D}_0$ can be singular or nonsingular depending on the chosen kernel $\vk$. A review of Yagloom's book \citep{Yaglom1989} shows that ${\bf D}_0$ is nonsingular only for Power-Exponential ($q<2$) and Mat\'ern kernels with smoothness parameter $\nu < 1$ like the Exponential kernel ($\nu = 1/2$ in (\ref{eq:maternclass})). For the rest of Mat\'ern kernels with smoothness parameter $\nu \geq 1$ ${\bf D}_0$ becomes singular. 

\paragraph{Case 1 : ${\bf D}_0$ is nonsingular } \hfill
\label{par:case_1}
    
In this case, let $\mathbf{D}_{\lambda}=g_{\lambda} \ \mathbf{D}_0 \left(1+o(1)\right)$ such that
\begin{equation}
    \mathbf{R}_{\lambda \vtheta_0}= {\bf J} + \mathbf{D}_{\lambda} .
\end{equation}
    
We consider the matrix $ \overline{\bf R}_{\lambda \vtheta_0}$ in $\overline{\mathbf{K}}=\sigma^{-2} \ \overline{\mathbf{R}}_{\lambda \vtheta_0}$, we have
\begin{equation}
    \overline{\bf R}_{\lambda \vtheta_0}  = \mathbf{R}_{\lambda \vtheta_0}^{-1}\left[ \mathbf{I}_n- \mathbf{F}\left(\mathbf{F}^{\top} \mathbf{R}_{\lambda \vtheta_0}^{-1} \mathbf{F}\right)^{-1} \mathbf{F}^{\top} \mathbf{R}_{\lambda \vtheta_0}^{-1} \right] .
\end{equation}
    
By using Lemma 4, Appendix B3 in \cite{Berger2001} and under assumption that ${\bf e \in}$ {\rm Im }$\mathbf{F}$ (hypothesis $\mathcal{H}_1$), we have
\begin{equation}
    \overline{\bf R}_{\lambda \vtheta_0} = {\bf D}_\lambda^{-1} \left[ \mathbf{I}_n - \mathbf{F} \left(\mathbf{F}^{\top} {\bf D}_\lambda^{-1} \mathbf{F}\right)^{-1}\mathbf{F}^{\top} {\bf D}_\lambda^{-1} \right] .
\end{equation}
Then we get
\begin{equation}
    \overline{\bf R}_{\lambda \vtheta_0}  = g_\lambda^{-1} \left[ {\bf D}_0^{-1} \left( \mathbf{I}_n - \mathbf{F} \left(\mathbf{F}^{\top} {\bf D}_0^{-1} \mathbf{F}\right)^{-1}\mathbf{F}^{\top} {\bf D}_0^{-1} \right) + o(1) \right] .
\end{equation}
    
Finally
\begin{equation}
\label{1stA}
    \overline{\mathbf{R}}_{\lambda \vtheta_0} \stackrel{\lambda \rightarrow +\infty}{\sim} g_\lambda^{-1} \mathbf{A} ,
\end{equation}
where 
\begin{equation}
\label{defA1} 
    \mathbf{A} = \mathbf{D}_0^{-1} \left( \mathbf{I}_n - \mathbf{F}\left(\mathbf{F}^{\top} \mathbf{D}_0^{-1} \mathbf{F}\right)^{-1} \mathbf{F}^{\top} \mathbf{D}_0^{-1} \right).
\end{equation}

{\it Hypothesis $\mathcal{H}_5$~: Let $\mathbf{A}$ be the matrix defined in (\ref{defA1}). We assume that $\vy$ does not belong to a family of vectors such that $(\mathbf{A} \vy)_i = 0$ for all $i \in \{1,\ldots,n\}$ and that $\operatorname{Card} \left\{i \in \{1,\ldots,n\} ,  \left( \mathbf{A} \vy \right)_i \leq 0 \right\} \neq na$.} 

By applying Lemmas \ref{lemma1} and \ref{lemma2} on $\mathbf{D}_0$, we show that $\left( \mathbf{A} \right)_{i i}\neq0$ and we can write for all $i$ in $\{1,...,n\}$
\begin{equation}
    \frac{ \left( \overline{\mathbf{R}}_{\lambda \vtheta_0} \vy \right)_i }{ \sqrt{ \left( \overline{\mathbf{R}}_{\lambda \vtheta_0} \right)_{i i}}} \stackrel{\lambda \rightarrow +\infty}{\sim} g_\lambda^{-1/2} \frac{ \left( \mathbf{A} \vy \right)_i }{ \sqrt{ \left( \mathbf{A} \right)_{i i}}} .
\end{equation}
    
Analogously to the proof of Proposition \ref{prop:prop3}, if we assume that $\lim\limits_{\lambda \rightarrow +\infty} \sigma^2_{\rm{opt}}(\lambda) \neq +\infty$ and by taking a sub-sequence $\left( \sigma^2_{\rm{opt}}(\lambda_{\psi(m)}) \right)_{m \in \sN}$ converging to ${\sigma^2_\infty}$
\begin{equation}
    \frac{1}{\sigma_\infty} \ g_{\lambda_{\psi(m)}}^{-1/2} \frac{ \left( \mathbf{A} \vy \right)_i }{ \sqrt{ \left( \mathbf{A} \right)_{i i}}} \stackrel{m \rightarrow +\infty}{\longrightarrow} \left\{ \begin{array}{l l}
    +\infty & \quad \text{if $ \left( \mathbf{A} \vy \right)_i > 0$}\\
    -\infty & \quad \text{otherwise}\\ \end{array} \right.
\end{equation}
    
The limit $\psi^{(\delta)}_a(\sigma_{\rm{opt}}^2(\lambda_{\psi(m)}), \lambda_{\psi(m)} \vtheta_0)$  when $m \rightarrow +\infty$ exists and is equal to
\begin{equation}
\label{eq:quasiGaussLim}
    a = \lim\limits_{m \rightarrow +\infty} \psi^{(\delta)}_a(\sigma_{\rm{opt}}^2(\lambda_{\psi(m)}), \lambda_{\psi(m)} \vtheta_0) = \frac{1}{n} \operatorname{Card} \left\{i \in \{1,\ldots,n\} ,  \left( \mathbf{A} \vy \right)_i \leq 0 \right\} ,
\end{equation}
which is contradictory and completes the proof.

\paragraph{Case 2 : $\mathbf{D}_0$ is singular } \hfill
    
In this case, one needs to go further in the Taylor expansion of $\overline{\mathbf{R}}_{\lambda \vtheta_0}$. We consider the matrix $\mathbf{W}$ in Lemma \ref{lemma3}, by Lemma 6 of \cite{Ren2012ObjectiveBA} 
\begin{equation}
    \overline{\mathbf{R}}_{\lambda \vtheta_0} = \mathbf{W} \left( \mathbf{W}^{\top} \mathbf{R}_{\lambda \vtheta_0} \mathbf{W} \right)^{-1} \mathbf{W}^{\top} .
\end{equation}

By setting $\mathbf{\Sigma}_{\lambda} = \mathbf{W}^{\top} \mathbf{R}_{\lambda \vtheta_0} \mathbf{W}$, the asymptotic study of $\overline{\mathbf{R}}_{\lambda \vtheta_0}$ is equivalent to the asymptotic study of $\mathbf{\Sigma}_{\lambda}$. In case of Mat\'ern kernel with noninteger smoothness $\nu \geq 1$, the matrix $\mathbf{\Sigma}_{\lambda}$ can be written as \citep{mure2020propriety}
    
\begin{equation}
\label{eq:Sigma_theta}
    \mathbf{\Sigma}_{\lambda} = g_\lambda \left(\mathbf{W}^{\top} \mathbf{D}_1 \mathbf{W}+g^{*}_\lambda \mathbf{W}^{\top} \mathbf{D}_1^{*} \mathbf{W}+\mathbf{R}_{g}(\lambda)\right) ,
\end{equation}
where 
\begin{itemize}
    \item[--] Either $g_\lambda = c \lambda^{−2k_1}$ with $k_1$ a nonnegative integer, or $g_\lambda = c \lambda^{−2\nu}$.
    \item[--] $g^*_\lambda = c^* \lambda^{−2l}$ with $l \in (0, +\infty)$ .
    \item[--] $\mathbf{R}_g$ is a differentiable mapping from $[0, +\infty)$ to $\mathcal{M}_n$ such that $\left\Vert \mathbf{R}_g \left(\lambda\right) \right\Vert= o(\lambda^{−2l})$.
    \item[--] $\mathbf{D}_1$ and $\mathbf{D}_1^*$ are both fixed symmetric matrices with elements $\|x_i - x_j \|^{2k}$ where $k \in {k_1} \cup {\nu}$ for $\mathbf{D}_1$ and $k = l$ for $\mathbf{D}^*_1$.
\end{itemize}
    
The matrix $\mathbf{W}^{\top} \mathbf{D}_1 \mathbf{W}+g^{*}_\lambda \mathbf{W}^{\top} \mathbf{D}_1^{*} \mathbf{W}$ is nonsingular when $\lambda \rightarrow +\infty$, whether if $\mathbf{W}^{\top} \mathbf{D}_1 \mathbf{W}$ is nonsingular or if it is singular. 

The case where $\mathbf{W}^{\top} \mathbf{D}_1 \mathbf{W}$ is nonsingular happens for Mat\'ern kernels with smoothness $1 \leq \nu < 2$ \citep{mure2020propriety}, whereas the other case occurs for regular and smooth Mat\'ern kernels with $ \nu \geq 2$. These kernels are however less robust in uncertainty quantification so we will give only the proof for less smooth kernels with $1 \leq \nu < 2$ in particular the  Mat\'ern 3/2 kernel.

In this case, we write $\mathbf{\Sigma}_{\lambda}$ in (\ref{eq:Sigma_theta}) as
\begin{equation}
    \mathbf{\Sigma}_{\lambda} = g_\lambda \mathbf{W}^{\top} \mathbf{D}_1 \mathbf{W} \left(\mathbf{I}_n+g^*_\lambda \left(\mathbf{W}^{\top} \mathbf{D}_1 \mathbf{W} \right)^{-1}\left( \mathbf{W}^{\top} \mathbf{D}_1^{*} \mathbf{W}+\mathbf{R}_{g}(\lambda) \right) \right) .
\end{equation}
    
As $\mathbf{W}$ is full rank matrix, $\mathbf{\Sigma}_{\lambda}$ is non-singular and
\begin{equation}
\label{eq:inverseSigma}
    \mathbf{\Sigma}_{\lambda}^{-1} = g^{-1}_\lambda \left(\mathbf{I}_n+g^*_\lambda \left(\mathbf{W}^{\top} \mathbf{D}_1 \mathbf{W} \right)^{-1}\left( \mathbf{W}^{\top} \mathbf{D}_1^{*} \mathbf{W}+\mathbf{R}_{g}(\lambda) \right) \right)^{-1} \left(\mathbf{W}^{\top} \mathbf{D}_1 \mathbf{W} \right)^{-1} .
\end{equation}
         
Let $\mathbf{M}_{\lambda} = g^*_\lambda \left(\mathbf{W}^{\top} \mathbf{D}_1 \mathbf{W} \right)^{-1}\left( \mathbf{W}^{\top} \mathbf{D}_1^{*} \mathbf{W}+\mathbf{R}_{g}(\lambda) \right) $, since $ \| \mathbf{M}_{\lambda} \| \stackrel{\lambda \rightarrow +\infty}{\longrightarrow} 0 $, we can assume that $\| \mathbf{M}_{\lambda} \| <1 $ when $\lambda$ is large enough and apply the Taylor series expansion at order 1
\begin{equation}
    \begin{aligned} \left[\mathbf{I}_n+g^*_\lambda \left(\mathbf{W}^{\top} \mathbf{D}_1 \mathbf{W} \right)^{-1} \right. &  \left. \left( \mathbf{W}^{\top} \mathbf{D}_1^{*} \mathbf{W}+\mathbf{R}_{g}(\lambda) \right) \right]^{-1} = \mathbf{I}_n-g^*_\lambda \left(\mathbf{W}^{\top} \mathbf{D}_1 \mathbf{W} \right)^{-1} \\ & \times \left( \mathbf{W}^{\top} \mathbf{D}_1^{*} \mathbf{W}+\mathbf{R}_{g}(\lambda) + o(g^*_\lambda) \right) . \end{aligned}
\end{equation}

Then, we plug this quantity into the equation  (\ref{eq:inverseSigma})
\begin{equation}
    \begin{aligned} \mathbf{\Sigma}_{\lambda}^{-1} &= g^{-1}_\lambda \left( \mathbf{I}_n-g^*_\lambda \left(\mathbf{W}^{\top} \mathbf{D}_1 \mathbf{W} \right)^{-1}\left( \mathbf{W}^{\top} \mathbf{D}_1^{*} \mathbf{W}+\mathbf{R}_{g}(\lambda) \right) \right) \left(\mathbf{W}^{\top} \mathbf{D}_1 \mathbf{W} \right)^{-1} \\
    &= g^{-1}_\lambda \left[ \left(\mathbf{W}^{\top} \mathbf{D}_1 \mathbf{W} \right)^{-1}-g^*_\lambda \left(\mathbf{W}^{\top} \mathbf{D}_1 \mathbf{W} \right)^{-1}\left( \mathbf{W}^{\top} \mathbf{D}_1^{*} \mathbf{W}+\mathbf{R}_{g}(\lambda) \right) \left(\mathbf{W}^{\top} \mathbf{D}_1 \mathbf{W} \right)^{-1} \right] . \end{aligned}
\end{equation}
         
Finally, we can write the matrix $\overline{\mathbf{R}}_{\lambda \vtheta_0}$ as
\begin{equation}
    \overline{\mathbf{R}}_{\lambda \vtheta_0} = g^{-1}_\lambda \mathbf{W} \left[ \left(\mathbf{W}^{\top} \mathbf{D}_1 \mathbf{W} \right)^{-1}-g^*_\lambda \left(\mathbf{W}^{\top} \mathbf{D}_1 \mathbf{W} \right)^{-1}\left( \mathbf{W}^{\top} \mathbf{D}_1^{*} \mathbf{W}+\mathbf{R}_{g}(\lambda) \right) \left(\mathbf{W}^{\top} \mathbf{D}_1 \mathbf{W} \right)^{-1} \right] \mathbf{W}^{\top} .
\end{equation}
        
We can also simply the previous expression into
\begin{equation}
\label{2ndA}
    \overline{\mathbf{R}}_{\lambda \vtheta_0} = g^{-1}_\lambda \left( \mathbf{A} - \mathbf{B}_{\lambda} \right) ,
\end{equation}

where {$\mathbf{A}$ is a fixed matrix and $\mathbf{B}_{\lambda} \stackrel{\lambda \rightarrow +\infty}{=} o(1)$ such that}
\begin{flalign}
\label{defA2}
    &\mathbf{A} = \mathbf{W} \left(\mathbf{W}^{\top} \mathbf{D}_1 \mathbf{W} \right)^{-1}  \mathbf{W}^{\top} \\
    &\mathbf{B}_{\lambda} = g^*_\lambda \ \mathbf{W} \left(\mathbf{W}^{\top} \mathbf{D}_1 \mathbf{W} \right)^{-1}\left( \mathbf{W}^{\top} \mathbf{D}_1^{*} \mathbf{W}+\mathbf{R}_{g}(\lambda) \right) \left(\mathbf{W}^{\top} \mathbf{D}_1 \mathbf{W} \right)^{-1} \mathbf{W}^{\top}.
\end{flalign}       

Or, equivalently, 
\begin{equation}
    \overline{\mathbf{R}}_{\lambda \vtheta_0} \stackrel{\lambda \rightarrow +\infty}{\sim} g_\lambda^{-1} \mathbf{A}.
\end{equation} 

\begin{lemma}
\label{lemma7}
    Let $\mathbf{A}$ be the matrix defined in (\ref{defA2}) , then $\mathbf{A}_{i i} \neq 0$ for all $i \in \{1,\ldots,n\}$.
\end{lemma}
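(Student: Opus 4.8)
The plan is to reduce the claim to the non-degeneracy of a \emph{definite} quadratic form, in the same spirit as the proof of Lemma \ref{lemma3}. Writing $\mathbf{M} = \mathbf{W}^{\top}\mathbf{D}_1\mathbf{W}$, the matrix of interest is $\mathbf{A} = \mathbf{W}\mathbf{M}^{-1}\mathbf{W}^{\top}$, so that for each $i \in \{1,\ldots,n\}$
\[
    \mathbf{A}_{ii} = \mathbf{e}_i^{\top}\mathbf{A}\mathbf{e}_i = \left(\mathbf{W}^{\top}\mathbf{e}_i\right)^{\top}\mathbf{M}^{-1}\left(\mathbf{W}^{\top}\mathbf{e}_i\right).
\]
The whole argument hinges on two facts: that $\mathbf{v}_i := \mathbf{W}^{\top}\mathbf{e}_i$ is nonzero, and that $\mathbf{M}^{-1}$ is definite, so that a nonzero vector cannot be annihilated by the associated quadratic form.

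First I would dispose of the fact $\mathbf{v}_i \neq \mathbf{0}$. Since $\boldsymbol{\Pi} = \mathbf{W}\mathbf{W}^{\top}$ (Lemma \ref{lemma4}), one has $\left(\boldsymbol{\Pi}\right)_{ii} = \mathbf{e}_i^{\top}\mathbf{W}\mathbf{W}^{\top}\mathbf{e}_i = \|\mathbf{W}^{\top}\mathbf{e}_i\|^2 = \|\mathbf{v}_i\|^2$, and Lemma \ref{lemma4} already gives $\left(\boldsymbol{\Pi}\right)_{ii} \neq 0$ under hypothesis $\mathcal{H}_2$; hence $\mathbf{v}_i \neq \mathbf{0}$ for every $i$.

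The main step, and the one I expect to be the real obstacle, is to show that $\mathbf{M} = \mathbf{W}^{\top}\mathbf{D}_1\mathbf{W}$ is positive or negative definite, and not merely invertible (which is the standing assumption of this subcase). For this I would use the asymptotics (\ref{eq:Sigma_theta}): the matrix $\mathbf{\Sigma}_{\lambda} = \mathbf{W}^{\top}\mathbf{R}_{\lambda \vtheta_0}\mathbf{W}$ is positive definite for every $\lambda$, because $\mathbf{R}_{\lambda \vtheta_0}$ is a correlation matrix and $\mathbf{W}$ has full column rank, while (\ref{eq:Sigma_theta}) yields $g_\lambda^{-1}\mathbf{\Sigma}_{\lambda} \to \mathbf{M}$ as $\lambda \to +\infty$ since $g^{*}_\lambda \to 0$ and $\|\mathbf{R}_{g}(\lambda)\| \to 0$. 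Because $g_\lambda$ is a fixed power of $\lambda$, it keeps a constant sign for $\lambda > 0$, so $\operatorname{sign}(g_\lambda)\,\mathbf{M} = \lim_{\lambda \to +\infty} |g_\lambda|^{-1}\mathbf{\Sigma}_{\lambda}$ is a limit of positive definite matrices, hence positive semi-definite; being invertible, it is in fact positive definite, and thus $\mathbf{M}$ (and $\mathbf{M}^{-1}$) is definite.

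Finally I would combine the two facts: since $\mathbf{M}^{-1}$ is definite, the quadratic form $\mathbf{v}\mapsto \mathbf{v}^{\top}\mathbf{M}^{-1}\mathbf{v}$ vanishes only at $\mathbf{v} = \mathbf{0}$; as $\mathbf{v}_i \neq \mathbf{0}$, we obtain $\mathbf{A}_{ii} = \mathbf{v}_i^{\top}\mathbf{M}^{-1}\mathbf{v}_i \neq 0$, which is the claim. Equivalently, one may phrase the conclusion exactly as in Lemma \ref{lemma3}: $\pm\mathbf{A}$ is positive semi-definite and $\mathbf{A}_{ii}=0$ would force $\mathbf{e}_i \in \operatorname{Ker}\mathbf{A} = \operatorname{Im}\mathbf{F}$ (using that $\mathbf{M}^{-1}$ is invertible and $\operatorname{Im}\mathbf{W} = (\operatorname{Im}\mathbf{F})^{\perp}$), contradicting $\mathcal{H}_2$.
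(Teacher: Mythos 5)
Your proof is correct, but it is worth comparing it to the paper's own argument, which is considerably weaker. The paper asserts that $\mathbf{A}=\mathbf{W}\left(\mathbf{W}^{\top}\mathbf{D}_1\mathbf{W}\right)^{-1}\mathbf{W}^{\top}$ is \emph{nonsingular} via the identity $\operatorname{det}\mathbf{A}=\operatorname{det}\left(\mathbf{W}^{\top}\mathbf{D}_1\mathbf{W}\right)^{-1}\neq 0$, and then declares $\mathbf{A}$ positive definite so that $\mathbf{A}_{ii}>0$. That determinant identity only holds for square $\mathbf{W}$; here $\mathbf{W}\in\mathbb{R}^{n\times(n-p)}$ with $p\geq 1$, so $\operatorname{rank}\mathbf{A}\leq n-p<n$ and $\mathbf{A}$ is in fact singular (consistently with $\operatorname{Ker}\mathbf{A}=\operatorname{Im}\mathbf{F}$, exactly as in Lemma~\ref{lemma1}). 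Moreover the paper never justifies why $\mathbf{W}^{\top}\mathbf{D}_1\mathbf{W}$ should be \emph{definite} rather than merely invertible, which is what the standing assumption of this subcase provides. Your argument supplies precisely the two missing ingredients: (i) the reduction $\mathbf{A}_{ii}=\left(\mathbf{W}^{\top}\mathbf{e}_i\right)^{\top}\mathbf{M}^{-1}\left(\mathbf{W}^{\top}\mathbf{e}_i\right)$ together with $\mathbf{W}^{\top}\mathbf{e}_i\neq\mathbf{0}$, obtained from $\left(\boldsymbol{\Pi}\right)_{ii}=\|\mathbf{W}^{\top}\mathbf{e}_i\|^2\neq 0$ under $\mathcal{H}_2$ (Lemma~\ref{lemma4}) --- this is the same mechanism that makes Lemma~\ref{lemma3} work and is genuinely needed, since positive semi-definiteness alone would not exclude $\mathbf{A}_{ii}=0$; and (ii) the definiteness of $\mathbf{M}=\mathbf{W}^{\top}\mathbf{D}_1\mathbf{W}$, deduced by passing to the limit in $g_\lambda^{-1}\mathbf{\Sigma}_{\lambda}\to\mathbf{M}$ with $\mathbf{\Sigma}_{\lambda}$ positive definite and $g_\lambda$ of constant sign, then upgrading semi-definiteness to definiteness via invertibility. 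In short, your route is not just different: it repairs the paper's proof, and the statement should really be read as $\mathbf{A}_{ii}\neq 0$ under $\mathcal{H}_2$ (with $\pm\mathbf{A}$ positive semi-definite and $\operatorname{Ker}\mathbf{A}=\operatorname{Im}\mathbf{F}$), not as a consequence of $\mathbf{A}$ being positive definite.
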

\begin{proof}
    $\mathbf{A}$ is non-singular because
    \begin{equation}
        \operatorname{det} \mathbf{A} = \operatorname{det} \mathbf{W} \left(\mathbf{W}^{\top} \mathbf{D}_1 \mathbf{W} \right)^{-1} \mathbf{W}^{\top} = \operatorname{det} \left(\mathbf{W}^{\top} \mathbf{D}_1 \mathbf{W} \right)^{-1} \neq 0.
    \end{equation}
    $\mathbf{A}$ is then a positive definite matrix
    \begin{equation}
            \mathbf{A}_{ii} = {\bf e}_i^{\top} \mathbf{A} {\bf e}_i > 0 .
    \end{equation}
    \end{proof}
{\it Hypothesis $\mathcal{H}_6$~: Let $\mathbf{A}$ be the matrix defined in (\ref{defA2}). We assume that $\vy$ does not belong to a family of vectors such that $(\mathbf{A} \vy)_i = 0$ for all $i \in \{1,\ldots,n\}$ and that $\operatorname{Card} \left\{i \in \{1,\ldots,n\} ,  \left( \mathbf{A} \vy \right)_i \leq 0 \right\} \neq na$.} 

With Lemma \ref{lemma6} and Hypothesis $\mathcal{H}_6$, the proof of the divergence of $\sigma_{\rm{opt}}^2(\lambda)$ when $\lambda \rightarrow +\infty$ is similar to the previous case when $\mathbf{D}_0$ is nonsingular.

\begin{remark}
    The hypotheses $\mathcal{H}_5$ and $\mathcal{H}_6$ are not restrictive, one can verify numerically, that each component of $\mathbf{A} \vy$ is not null where  $\mathbf{A}$ is one of the matrices defined in (\ref{defA1}) or (\ref{defA2}).

\end{remark}

\end{document}